\newtheorem{theorem}{Theorem}[section]
\newtheorem{proposition}[theorem]{Proposition}
\newtheorem{lemma}[theorem]{Lemma}
\theoremstyle{definition}
\renewcommand{\theequation}{\arabic{section}.\arabic{equation}}
\theoremstyle{definition}
\theoremstyle{definition}
\newtheorem{remark}{Remark}
\theoremstyle{definition}
\newcommand{\rd}{\mathrm{d}}
\renewcommand{\cite}{\citet}
\begin{document}
	\makeatletter
	\def\@setauthors{%
		\begingroup
		\def\thanks{\protect\thanks@warning}%
		\trivlist \centering\footnotesize \@topsep30\p@\relax
		\advance\@topsep by -\baselineskip
		\item\relax
		\author@andify\authors
		\def\\{\protect\linebreak}%
		{\authors}%
		\ifx\@empty\contribs \else ,\penalty-3 \space \@setcontribs
		\@closetoccontribs \fi
		\endtrivlist
		\endgroup } \makeatother
	\baselineskip 18pt
	\title[{{\tiny Optimal management of DB pension fund}}]
	{{\tiny
			Optimal management of  DB pension fund under both underfunded and overfunded cases}} \vskip 10pt\noindent
	\author[{\tiny  Guohui Guan, Zongxia Liang, Yi xia}]
	{\tiny {\tiny  Guohui Guan$^{a,b,\dag}$, Zongxia Liang$^{c,\ddag}$, Yi Xia$^{c,*}$}
		\vskip 10pt\noindent
		{\tiny ${}^a$Center for Applied Statistics, Renmin University of China, Beijing, 100872, China
			\vskip 10pt\noindent\tiny ${}^b$School of Statistics, Renmin University of China, Beijing 100872, China
			\vskip 10pt\noindent\tiny ${}^c$Department of Mathematical Sciences, Tsinghua
			University, Beijing 100084, China}
\footnote{
$^{\dag}$  e-mail: guangh@ruc.edu.cn\\
 $^{\ddag}$  e-mail:  liangzongxia@mail.tsinghua.edu.cn\\
 $^*$  Corresponding author, e-mail:  xia-y20@mails.tsinghua.edu.cn}}
\date{}
\maketitle
\begin{abstract}
%
This paper investigates the optimal management of an aggregated defined benefit pension plan in a stochastic environment. The interest rate follows the Ornstein-Uhlenbeck model, the benefits follow the geometric Brownian motion while the contribution rate is determined by the spread method of fund amortization. The pension manager invests in the financial market with three assets: cash, bond and stock. Regardless of the initial status of the plan, we suppose that the pension fund may become underfunded or overfunded in the planning horizon. The optimization goal of the manager is to maximize the expected utility in the overfunded region minus the weighted solvency risk in the underfunded region. By introducing an auxiliary process and related equivalent optimization problems and using the martingale method, the optimal wealth process, optimal portfolio and efficient frontier are obtained under four cases (high tolerance towards solvency risk, low tolerance towards solvency risk,  a specific lower bound, and high lower bound). Moreover, we also obtain the probabilities that the optimal terminal wealth falls in the overfunded and underfunded regions. At last, we present numerical analyses to illustrate the manager's economic behaviors.
\vskip 10 pt \noindent
	Submission Classification: IB13, IE13, IE43.
\vskip 10pt  \noindent
2020 Mathematics Subject Classification: 91G05, 91G10,91B05.
	\vskip 10pt  \noindent
JEL Classifications: G22, G11, C61.	
\vskip 10 pt  \noindent
Keywords: Risk management; DB pension plan; Solvency risk; Expected utility; Efficient frontier.
\end{abstract}
	\vskip15pt
	\setcounter{equation}{0}
	\section{\bf Introduction}
	 Optimal management of pension funds has become more and more important.  With the growth in both the size and the proportion of older persons in the population, the management of pension funds plays a prominent role in the social security system. Pension fund ensures a sustainable and adequate retirement payment for pension participants and is responsible for the welfare of beneficiaries. The main features in a pension fund are contributions from the working sponsors, benefits provided to the retirees. Classified by these two features, there are generally two kinds of pension funds worldwide: defined contribution (DC) pension fund and defined benefit (DB) pension fund. In a DC pension fund, the contribution is fixed in advance and the benefit after retirement depends on the fund wealth at retirement time, which may be paid by a lump sum or a life annuity. On the contrary, the benefit in a DB pension fund is fixed in advance and the contribution is determined to keep an actuarial balance.
	
	The benefits are determined by the fund surplus in a DC plan. As such, the risk is undertaken by the participants and the manager almost has no solvency risk. There are numerous researches concerning the risk management of DC pension fund, see \cite{boulier2001optimal},  \cite{emms2012lifetime},  \cite{temocin2018constant},  \cite{zeng2018ambiguity}, etc. However, in a DB pension plan, the manager is faced with the liability (promised benefits) paid to the retirees. Generally, the DB pension plan is classified into two {states}. As in \cite{carroll1998pension}, when the value of the fund assets is higher (smaller) than the actuarial liability, the pension fund is overfunded (underfunded). 	In reality, overfunded and underfunded may both happen in the management of a DB pension plan, which can be shown by the databases in \cite{franzoni2006pension} and \cite{beaudoin2010potential}. Besides, the state of a pension fund may also change.  Market changes can cause a fund to be either underfunded or overfunded, and it is fairly common for defined benefit plans to become significantly {underfunded or} overfunded.  \cite{asthana1999determinants} {{shows}} that the influence of actuarial choices on funding allows firms to become overfunded or underfunded over a period of time.  Wilshire Associates reported that 51\% of all state retirement
pension funds were found to be underfunded in 2001 up dramatically from the
previous year when only 31\% were underfunded (see \cite{eaton2004effect}).  However, most  researches suppose that the DB pension fund starts at underfunded (overfunded) case will always be underfunded (overfunded), such as  \cite{josa2012stochastic}, \cite{JJ18}, etc. The changes of the (underfunded, overfunded) states of the DB pension fund are prohibited {in  previous articles}. In this paper, we suppose that regardless the initial state of the fund, it can become both underfunded and overfunded in the future. This paper attempts to unify the overfunded and underfunded cases by a non-concave piece-wise function.

Researches have paid great attention to the management of DB pension fund under underfunded and overfunded cases separately. In \cite{sundaresan1997valuation}, a continuous-time framework is established to maximize the expected growth rate of surplus. Later, in \cite{siegmann2007optimal}, a discrete-time framework is considered to maximize the mean minus the downside risk.   In the underfunded case of a DB pension plan,  solvency risk is minimized to search the optimal strategies, see \cite{JR10}, i.e.,
\begin{equation}\label{iop1}
	\min \mathbb{E}[X^2(T)], X(T)<0,
\end{equation}
where $X(T)$ represents the fund surplus at time $T$. In the overfunded case of a DB pension plan, plan termination choice is focused, see \cite{stone1987financing}, \cite{thomas1989firms}, \cite{kapinos2009determinants}, etc.  For the optimal management in the overfunded case, the pension manager is concerned with the utility derived from the fund surplus, such as \cite{JJ18}, i.e.,
\begin{equation}\label{iop2}
	\max \mathbb{E}[U(X(T))], X(T)>0,
\end{equation}
where
\[U(x)=\frac{x^{1-\gamma}}{1-\gamma}\]
with $\gamma\in(0,1)$. The stochastic dynamic programming method is applied to obtain the optimal portfolio in their papers. 

In \cite{JJ18}, the authors consider the DB pension fund in underfunded and overfunded cases separately. They suppose that a fund starting in the underfunded (overfunded) region never becomes overfunded (underfunded) under some condition. This phenomenon mainly arises from the power utility function adopted in \cite{JJ18}, which leads to an optimal wealth process evolving as the geometric Brownian motion. As such, the sign of the fund surplus never changes. However, in reality, either beginning in the underfunded region or overfunded region, the DB pension fund in later period has both the possibilities in these two regions. Besides, once the returns from the financial market are high, the manager may accept larger solvency risk to expect a larger utility of the positive fund surplus. In this paper, we suppose that the aggregated DB pension fund has both possibilities in the overfunded and underfunded regions. We combine the solvency risk in the underfunded region and the expected utility in the overfunded region in a unified framework. The solvency risk is modeled by the expected squared terminal surplus as in  \cite{JJ18}. We adopt the constant relative risk-averse (CRRA) utility in the overfunded region. This paper combines (\ref{iop1}) and (\ref{iop2}) in a unified framework and studies the following criterion,
\begin{equation}\label{iop3}\max \mathbb{E}\left\{-\alpha X^2(T)1_{\{X(T)<0\}} + \frac{X(T)^{1-\gamma}}{1-\gamma}1_{\{X(T)>0\}} \right\},
\end{equation}
where $\alpha>0$ measures the attitude towards solvency risk. In (\ref{iop3}), the manager is risk aversion towards gains and losses and different risk aversion coefficients are characterized.

(\ref{iop3}) characterizes different preferences over positive wealth and negative wealth. Originally from \cite{kahneman1979prospect}, a variety of studies concentrate on different attitudes towards gains and losses. In \cite{kahneman1979prospect}, the function is concave over gains while convex over losses, which shows the loss aversion behavior of an individual. Different preferences of DC pension plan above and below a threshold have also been modeled in DC pension plan under loss aversion in \cite{blake2013target}, \cite{dong2020optimal}.  Different from { that in} \cite{kahneman1979prospect}, (\ref{iop3}) shows that the manager is less risk aversion towards losses, which is consistent with the findings in \cite{march1996learning} that the human beings exhibit greater risk aversion for gains than for losses in a wide variety of situations. Besides, \cite{cardenas2014my} also reveal that  individuals in general are more risk tolerant towards losses than towards gains.

As the management horizon for a pension plan is usually long, the manager is faced with various risks, such as interest risk (\cite{JR10}, \cite{cairns2006stochastic}), volatility risk (\cite{JJ18}), jump risk (\cite{josa2012stochastic}, \cite{li2021alpha}), longevity risk (\cite{cox2013managing}), etc. The financial market in our paper is similar with \cite{JR10}, \cite{guan2016optimal}, which incorporate interest risk and consider three assets: cash, bond, and a stock. Besides, the benefits and contributions are important in the DB pension plan. To obtain explicit solutions, we need to assume the complete dependence of the benefits with the financial risks. The benefits are supposed to be stochastic and follow a geometric Brownian motion. For characterization of the contributions,   \cite{hainaut2011optimal} {minimize} the deviation of the terminal target asset from the mathematical reserve to find the optimal contribution rate in a DB pension fund. In this paper, we simply assume that the contribution rate is calculated by the spread method of fund amortization as in \cite{JR10}, \cite{JJ18}.  In this paper, a more flexible technical rate of actualization (the rate of valuation of the liabilities) can be studied, which is however determined by risk-neutral valuation in \cite{JR10}.  We see that as benefits and rate of actualization are affected by interest risk, compared with \cite{JR10}, the fund surplus is not self-financing and has a  complicated form. Besides, the optimization rule of the pension fund is a non-concave piece-wise function, which limits the application of the stochastic programming method in \cite{JJ18}.  

We have the following contributions in this paper. First, we consider a stochastic interest rate, a more flexible technical rate of actualization, and stochastic benefits for the aggregated DB pension plan. Second,  the dependence of the interest rate, technical rate of actualization, and benefits leads to a very complicated form of the non-self-financing fund surplus. Using derivative pricing theory, the additional forms in the drift terms can be replicated by complicated processes. The optimization problem is transformed to an equivalent one with respect to a { self-financing} process. Third, this paper considers the underfunded case and overfunded cases together and proposes a unified optimization rule for the pension fund. A weight is introduced to characterize the manager's tolerance towards solvency risk. The optimization rule is a non-concave piece-wise  function. Martingale method and Lagrange dual method are employed to solve the optimization problem. Fourth, we obtain four cases for the optimal results corresponding to high tolerance towards solvency risk, low tolerance towards solvency risk, a specific lower bound, and high lower bound, respectively. Meanwhile, similar to the mean-variance analysis, the points of the solvency risk and expected utility form an efficient frontier. The efficient frontier and the possibilities in the underfunded and overfunded regions are also obtained. Last, we show detailed numerical examples of optimal wealth process, optimal investment strategies, efficient frontier, and possibilities in two regions. As we expect, the efficient frontier is similar to the parabolic form, which indicates that the manager can  {take} some solvency risk to expect a larger expected utility. The results can provide comprehensive guidance for the pension manager with different economic parameters, risk aversions, and tolerance levels over solvency risk.

The remainder of this paper is organized as follows. Section 2 presents the model of the financial market and aggregated DB pension fund. Section 3 formulates the optimization problem with solvency risk and expected utility. In Section 4, the optimal solutions are obtained. Section 5 illustrates the economic behaviors of the manager, and Section 6 is a conclusion. Most of the proofs are relegated in Appendices.
\vskip 15pt
\setcounter{equation}{0}
	
	\section{\bf Financial Model}
	Let $ (\Omega,\mathcal{F},\mathbb{F},\mathbb{P}) $ be a complete filtered  probability space where $ \mathbb{F}\!= \!\{\mathcal{F}_t|0\!\leqslant\! t\!\leqslant\! T\} $ and $ \mathcal{F}_t $ stands for the information available before time $ t $ in the market. The DB pension fund  starts at initial time 0 and the planning horizon for the pension fund manager is $[0, T]$. The pension fund manager adjusts the  portfolio  within time horizon $[0, T]$. All the processes introduced below are assumed to be well-defined and progressively measurable w.r.t.(with respect to) $\mathbb{F}$. We suppose that there are no transaction costs and short selling is also allowed in the financial market.

In this paper, we consider an aggregate defined benefit pension fund. At every time, the sponsors coexist with retired participants in the system. The benefits received by the retired participants are fixed in advance, which evolve with the financial market. Meanwhile, the pension fund receives contributions from the working sponsors. The contributions are calculated by the spread method of fund amortization as in \cite{josa2006optimal}, \cite{JJ18}. The main notations of the elements of the pension plan are as follows.

\begin{tabularx}{0.9\textwidth}{rX}
			$T$: & date of the end of the pension plan, with $0 < T $\\
			$F(t)$: & value of fund assets at time $t$\\
			$P(t)$: & benefits promised to the participants at time $t$\\
			$C(t)$: & contributions made by the sponsors to the funding process at time $t$\\
			$AL(t)$: & actuarial liability at time $t$, or total liabilities of the sponsors \\
			$NC(t)$: &normal cost at time $t$, which is the value of the contributions allowing equality between asset funds and obligation when the fund assets match the actuarial liability and the market is risk-free\\
			$X(t)$: & fund surplus at time $t$, equal to $F (t)-AL(t)$ ($-X(t)$ represents the unfunded actuarial liability)\\
			$SC(t)$: & supplementary cost at time $t$, equal to $C(t)-NC(t)$\\
			$M(u)$: & distribution {functions} of workers aged  $u\in[m, d]$, $ M(m)=0 $, $ M(d)=1 $\\
			$\hat{\delta}(t)$: & technical rate of actualization {at time $t$}, which is the rate of valuation of the liabilities and is specified by the regulatory authorities\\
			$r(t)$: & risk-free interest rate\\
		\end{tabularx}
	
	\subsection{\bf Financial market}
	
	First, we present the financial market for the pension fund, which consists of three assets: cash, zero-coupon bond and stock. Because the planning horizon of a pension fund may be long, the interest rate in the market fluctuates greatly. As such, it is natural to consider a stochastic interest rate. In this paper, the interest rate $ r=\left\{r(t)|0\leqslant t\leqslant T\right\} $  follows the Ornstein-Uhlenbeck (OU) model:
	\begin{equation}\label{equ-r}
	\begin{split}
\mathrm{d}r(t)=a(b-r(t))\mathrm{d}t-\sigma_r\mathrm{d}W_r(t), \ \ \ r(0)=r_0,
\end{split}
\end{equation}
where $a$, $b$ and $\sigma_r$ are positive constants, $W_r=\left\{W_r(t)|0\leqslant t\leqslant T\right\}$ is a standard Brownian motion on $ (\Omega,\mathcal{F},\mathbb{F},\mathbb{P}) $. The OU model possesses the mean-reverting property and $b$ represents the mean level of $r(t)$. $a$ is the mean reversion rate and {$\sigma_r$} characterizes the volatility. 	
The evolution of the cash $ S_0=\left\{S_0(t)|0\leqslant t\leqslant  T\right\} $ is
	\begin{equation}\label{equ-S0}
		\begin{split}
			\frac{\mathrm{d}S_0(t)}{S_0(t)}=r(t)\mathrm{d}t,\ \ \  S_0(0)=s_0.
		\end{split}
	\end{equation}
Because interest risk exists in the financial market, to hedge interest risk, we introduce here a zero-coupon bond with fixed payment \$1 at maturity time $\tilde{T}$. The zero-coupon bond can be viewed as the interest rate derivative. Based on \cite{hull1990pricing}, the price of the zero-coupon bond at time $t$ with maturity time $\tilde{T}$ is
	\begin{equation*}
		\begin{split}
			B(t,\tilde{T})=e^{C(t,\tilde{T})-A(t,\tilde{T})r(t)},
		\end{split}
	\end{equation*}
	where \ \ $A(t,\tilde{T})=\frac{1-e^{-a(\tilde{T}-t)}}{a}$,\ \ $C(t,\tilde{T})=-R(\tilde{T}-t)+A(t,\tilde{T})
[R-\frac{\sigma_r^2}{2a^2}]+\frac{\sigma_r^2}{4a^3}(1-e^{-2a(\tilde{T}-t)})$, \\ $R=b+\frac{\sigma_r\lambda_r}{a}-\frac{\sigma_r^2}{2a^2}$. Here, $\lambda_r>0$ represents the market price of risk of $W_r$. 
	
	Moreover, $B(t,\tilde{T})$ satisfies the following backward stochastic differential equation (BSDE)
	\begin{equation*}
	\begin{split}		\frac{\mathrm{d}B(t,\tilde{T})}{B(t,\tilde{T})}&=r(t)\mathrm{d}t+h(\tilde{T}-t)
(\lambda_r\mathrm{d}t+\mathrm{d}W_r(t)), \ \ \
B(\tilde{T},\tilde{T})=1,
		\end{split}
	\end{equation*}
	where $h(t)=\frac{1-e^{-at}}{a}\sigma_r$.
	
   Similar with \cite{boulier2001optimal}, we consider a rolling bond $B_K=\left\{B_K(t)|0\leqslant t\leqslant T \right\}$ with {constant time $K$ to maturity date} to hedge interest risk. $B_K$ satisfies the following stochastic differential equation (SDE)
	\begin{equation}\label{equ-B}
		\begin{split}
			\frac{\mathrm{d}B_K(t)}{B_K(t)}=r(t)\mathrm{d}t+h(K)(\lambda_r\mathrm{d}t+\mathrm{d}W_r(t)),
		\end{split}
	\end{equation}
	where $h(K)$ is a constant increasing with $K$ and represents the volatility of the rolling bond.
	
	The third asset in the financial market we consider is the stock with dynamics $ S=\left\{S(t)|0\leqslant t\leqslant T\right\} $ satisfying
	\begin{equation}\label{equ-S}
			\frac{\mathrm{d}S(t)}{S(t)}=r(t)\mathrm{d}t+\sigma_1(\lambda_r\mathrm{d}t+\mathrm{d}W_r(t))
			+\sigma_2(\lambda_S\mathrm{d}t+\mathrm{d}W_S(t)), \ \ \
			S(0)=S_0,
	\end{equation}
	where $W_S=\left\{W_S(t)|0\leqslant t\leqslant T\right\}$ is a standard Brownian motion on  $(\Omega,\mathcal{F},\mathbb{F},\mathbb{P}) $ and is independent of $W_r$. $\lambda_S$ characterizes the market price of risk of $W_S$. $\sigma_1$ and $\sigma_2$ {drive the volatility of the stock}.

	\subsection{\bf DB pension fund}
	We now  present the model for the DB pension fund. In reality, the benefits of the participants are often related to the average salary of the society, which evolves with the financial market. As such, it is reasonable to consider stochastic benefits. We suppose that the benefits process $P=\left\{P(t)|0\leqslant t\leqslant T\right\}$ satisfies the geometric Brownian motion
	\begin{equation}\label{equ-P}
	\begin{split}
		\frac{\mathrm{d}P(t)}{P(t)}=
		\mu\mathrm{d}t+\sigma_{P_1}\mathrm{d}W_r(t)
		+\sigma_{P_2}\mathrm{d}W_S(t), \ \ \
 P(0)=P_0,
\end{split}
	\end{equation}
	where $\mu$, $\sigma_{P_1}$ and $\sigma_{P_2}$ are non-negative constants.

   The technical rate of actualization plays a prominent role in calculating the accurate valuation of liability.  Generally, the technical rate of actualization is determined to be specified by the regulatory authorities.	
   In \cite{JR10}, $\hat{\delta}(t)$ is $r(t)$ plus a constant to attain a risk-neural valuation. In our work, the technical rate of actualization $\hat{\delta}(t)$ is supposed to rely on $r(t)$ and follows
	\begin{equation}\label{equ:delta}
		\hat{\delta}(t)=r(t)+\delta, \quad0\leqslant t\leqslant T,
	\end{equation}
where $\delta\geqslant0$ is a constant. The technical rates of actualization in \cite{JR10} also belong to the form given by Eq.~(\ref{equ:delta}). However, the constant $\delta$ needs to be specified in the literature to solve the optimization problem explicitly. In our model, we do not need to specify the value $\delta$.

In the pension fund, the stochastic actuarial liability $AL=\left\{AL(t)|0\leqslant t\leqslant T\right\}$ and the stochastic normal cost $NC=\left\{NC(t)|0\leqslant t\leqslant T\right\}$ are given by
	\begin{equation*}
		\begin{split}
			A L(t)=\mathbb{E}\left[\int_{m}^{d} e^{-\int_{t}^{t+d-x} \hat{\delta}(s) d s} M(x) P(t+d-x) \rd x \Big| \mathcal{F}_{t}\right],\quad 0\leqslant t\leqslant T,\\
			N C(t)=\mathbb{E}\left[\int_{m}^{d} e^{-\int_{t}^{t+d-x} \hat{\delta}(s) d s} M^{\prime}(x) P(t+d-x) \rd x \Big| \mathcal{F}_{t}\right],\quad 0\leqslant t\leqslant T,
		\end{split}
	\end{equation*}
where $\mathbb{E}\left[\cdot| \mathcal{F}_{t}\right]$ represents the conditional expectation given information before time $t$. $M(x)$ is the distribution function of workers aged  $x\in[m, d]$ with $ M(m)=0 $ and $ M(d)=1 $. Besides, $M'$ means the derivative of $M$. $AL$ is the present value of the benefits promised to workers aged between $[m,d]$. All participants in the plan start at age $m$ and retire at time $d$. $NC$ represents the present value of the normal costs of the pension plan. At  time $t$, the manager takes conditional expectation to calculate the future liabilities and normal costs. Based on the properties of the conditional expectations, $AL(t)$ and $NC(t)$ can be rewritten as
	\begin{equation*}
	\begin{split}
		A L(t)=\int_{m}^{d}M(x) \mathbb{E}\left[e^{-\int_{t}^{t+d-x} \hat{\delta}(s) d s}  P(t+d-x)\Big| \mathcal{F}_{t}\right] \rd x ,\quad 0\leqslant t\leqslant T,\\
		N C(t)=\int_{m}^{d}M^\prime(x)\mathbb{E}\left[e^{-\int_{t}^{t+d-x} \hat{\delta}(s) d s}  P(t+d-x)\Big| \mathcal{F}_{t}\right] \rd x,\quad 0\leqslant t\leqslant T.
	\end{split}
\end{equation*}
As {both $\hat{\delta}(\cdot)$ and $P(\cdot)$} depend on  the interest risk $W_r$, {the processes $e^{-\int_{t}^{t+d-x} \hat{\delta}(s) d s}$  and  $P(t+d-x)$ in $AL$ and $NC$ are not independent}. The accurate values $AL$ and $NC$ are complicated, which are presented in the following proposition.
\begin{proposition}\label{ALNC}
The explicit forms of $AL(t)$ and $NC(t)$ are given by
\begin{equation*}
	\begin{split}
		AL(t)=\int_{m}^{d} e^{-r(t)A(x,d)+D(x,d)}M(x)\rd x P(t),\\
		NC(t)=\int_{m}^{d} e^{-r(t)A(x,d)+D(x,d)}M'(x)\rd x P(t),
	\end{split}
\end{equation*}
where $A(x,d)=\frac{1-e^{-a(d-x)}}{a}$ and
\begin{equation*}\begin{aligned}
		D(x,d)= -\frac{\sigma_r^2}{4a}A(x,d)^2+\left(b-\frac{\sigma_r \sigma_{P_1}}{a}-\frac{\sigma_r^2}{2a^2}\right)A(x,d) +\left(\frac{\sigma_r \sigma_{P_1}}{a}+\frac{\sigma_r^2}{2a^2}- b-\delta+\mu\right)(d-x).
	\end{aligned}
\end{equation*}
Besides, $AL$ satisfies the following SDE:
 \begin{equation*}\begin{aligned}
		\rd AL(t)=&[P(t)(f_1(r(t))+\sigma_{P_1}f_2(r(t)))+\mu AL(t)]\rd t\\&+[P(t)f_2(r(t))+\sigma_{P_1}AL(t)]\rd W_r(t)+\sigma_{P_2}AL(t)\rd W_S(t),
	\end{aligned}
\end{equation*}
where \begin{equation*}
	\begin{aligned}
		f_1(r(t))=&\int_{m}^{d}  e^{-r(t)A(x,d)+D(x,d)}M(x)A(x,d)\left(\frac{\sigma_r^2}{2}A(x,d)-a(b-r(t))\right)\rd x,\\
		f_2(r(t))=&\sigma_r\int_{m}^{d}  e^{-r(t)A(x,d)+D(x,d)}M(x)A(x,d)\rd x.
	\end{aligned}
\end{equation*}
\end{proposition}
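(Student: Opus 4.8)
The plan is to first evaluate the common building block
\[
\Psi(t,x):=\mathbb{E}\Bigl[e^{-\int_{t}^{t+d-x}\hat\delta(s)\,\rd s}\,P(t+d-x)\,\Big|\,\mathcal{F}_t\Bigr],\qquad x\in[m,d],
\]
show it equals $e^{-r(t)A(x,d)+D(x,d)}P(t)$, and then obtain the stated formulas for $AL(t)$ and $NC(t)$ by integrating $\Psi(t,x)$ against $M(x)\,\rd x$ and $M'(x)\,\rd x$ respectively (Fubini is licit since the integrand is continuous in $x$ on the compact interval $[m,d]$). The SDE for $AL$ will then follow from It\^o's formula applied to the explicit product form $AL(t)=\Phi(r(t))P(t)$, where $\Phi(r):=\int_m^d e^{-rA(x,d)+D(x,d)}M(x)\,\rd x$.

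For the building block, set $\tau:=t+d-x$ and use $\hat\delta=r+\delta$ together with the explicit geometric Brownian motion solution
\[
P(\tau)=P(t)\exp\!\Bigl\{\bigl(\mu-\tfrac12\sigma_{P_1}^2-\tfrac12\sigma_{P_2}^2\bigr)(\tau-t)+\sigma_{P_1}\bigl(W_r(\tau)-W_r(t)\bigr)+\sigma_{P_2}\bigl(W_S(\tau)-W_S(t)\bigr)\Bigr\}.
\]
Since $W_S\perp W_r$ (hence $W_S\perp r$), taking conditional expectation over the $W_S$-increment produces the factor $e^{\frac12\sigma_{P_2}^2(\tau-t)}$, which cancels the $-\tfrac12\sigma_{P_2}^2(\tau-t)$ term; it remains to compute $\mathbb{E}\bigl[e^{-\int_t^\tau r(s)\,\rd s+\sigma_{P_1}(W_r(\tau)-W_r(t))}\mid\mathcal{F}_t\bigr]$. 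I would absorb the exponential martingale $\exp\{\sigma_{P_1}(W_r(\tau)-W_r(t))-\tfrac12\sigma_{P_1}^2(\tau-t)\}$ into an equivalent measure $\tilde{\mathbb P}$ on $\mathcal{F}_\tau$; by Girsanov, $\tilde W_r(s):=W_r(s)-W_r(t)-\sigma_{P_1}(s-t)$ is a $\tilde{\mathbb P}$-Brownian motion, so $\rd r=a(b-r)\,\rd t-\sigma_r\,\rd W_r$ becomes $\rd r=a(\tilde b-r)\,\rd t-\sigma_r\,\rd\tilde W_r$ with $\tilde b=b-\sigma_r\sigma_{P_1}/a$. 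The conditional expectation is then a zero-coupon bond price in an Ornstein--Uhlenbeck short-rate model, which has the affine form $e^{-A(x,d)r(t)+\tilde C(d-x)}$ with $A(x,d)=(1-e^{-a(d-x)})/a$ and $\tilde C(d-x)=-\tfrac{\sigma_r^2}{4a}A(x,d)^2+(\tilde b-\tfrac{\sigma_r^2}{2a^2})\bigl(A(x,d)-(d-x)\bigr)$. Multiplying back by the compensator $e^{\frac12\sigma_{P_1}^2(\tau-t)}$ and the leftover deterministic factor $e^{(\mu-\delta)(\tau-t)}$ and simplifying, the exponent collapses to exactly $-r(t)A(x,d)+D(x,d)$, with $D(x,d)$ as stated. (Equivalently, one can solve the OU SDE in closed form, use Fubini to write $\int_t^\tau r(s)\,\rd s$ as a Gaussian random variable, and evaluate the joint exponential moment of $\int_t^\tau r(s)\,\rd s$ and $W_r(\tau)-W_r(t)$ from their mean, variances and covariance; this produces the same $D(x,d)$.)

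For the SDE, differentiate under the integral sign (justified by dominated convergence over $x\in[m,d]$) to get $\Phi'(r)=-\int_m^d A(x,d)e^{-rA(x,d)+D(x,d)}M(x)\,\rd x$ and $\Phi''(r)=\int_m^d A(x,d)^2 e^{-rA(x,d)+D(x,d)}M(x)\,\rd x$; note in particular $-\sigma_r\Phi'(r)=f_2(r)$. Apply It\^o's product rule to $AL(t)=\Phi(r(t))P(t)$ using $\rd P=P(\mu\,\rd t+\sigma_{P_1}\,\rd W_r+\sigma_{P_2}\,\rd W_S)$, $\rd r=a(b-r)\,\rd t-\sigma_r\,\rd W_r$, $\rd\langle r\rangle_t=\sigma_r^2\,\rd t$ and $\rd\langle \Phi(r(\cdot)),P(\cdot)\rangle_t=-\sigma_r\sigma_{P_1}P\,\Phi'(r)\,\rd t$. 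Collecting terms, the $\rd W_S$ coefficient is $\sigma_{P_2}\Phi(r)P=\sigma_{P_2}AL$, the $\rd W_r$ coefficient is $\sigma_{P_1}\Phi(r)P-\sigma_r P\Phi'(r)=\sigma_{P_1}AL+Pf_2(r)$, and the drift is $\mu AL+P\bigl[a(b-r)\Phi'(r)+\tfrac12\sigma_r^2\Phi''(r)-\sigma_r\sigma_{P_1}\Phi'(r)\bigr]$; inserting the integral expressions for $\Phi',\Phi''$, the bracket is recognized as $f_1(r)+\sigma_{P_1}f_2(r)$, which gives exactly the claimed dynamics.

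The main obstacle is the Gaussian/change-of-measure computation in the middle step: tracking the covariance between $\int_t^\tau r(s)\,\rd s$ and the $W_r$-increment together with the several deterministic drift and compensator contributions, and verifying that they reassemble into precisely the quadratic-in-$A(x,d)$ polynomial $D(x,d)$. The Fubini and differentiation-under-the-integral justifications and the It\^o bookkeeping for the SDE are routine by comparison.
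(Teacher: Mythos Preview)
Your proposal is correct and follows essentially the same architecture as the paper: evaluate the conditional expectation $\mathbb{E}[e^{-\int_t^{t+d-x}\hat\delta(s)\,\rd s}P(t+d-x)\mid\mathcal{F}_t]$, integrate against $M(x)$ and $M'(x)$, then apply It\^o's product rule to $AL(t)=f_0(r(t))P(t)$. The only difference is in the evaluation of the building block: the paper solves the OU SDE explicitly, writes $-\int_t^{t+d-x}r(s)\,\rd s$ as a Wiener integral, and computes the joint Gaussian exponential moment directly, whereas your primary route absorbs $e^{\sigma_{P_1}(W_r(\tau)-W_r(t))}$ via Girsanov and quotes the Vasicek bond price with shifted mean $\tilde b=b-\sigma_r\sigma_{P_1}/a$; you even mention the paper's direct Gaussian computation as an equivalent alternative, so the two proofs coincide up to this interchangeable step.
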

\begin{proof}
	See Appendix \ref{aALNC}.
\end{proof}	
We see that  different from $P$, $AL$ is no longer a geometric Brownian motion. Because  $\hat{\delta}(t)$ and $P$ are not independent, the closed form of $AL$ is very complicated. The dependence of $\hat{\delta}(t)$ and $P$ results in the term $P(t)(f_1(r(t))+\sigma_{P_1}f_2(r(t)))$ in the drift term and $P(t)f_2(r(t))$ in the volatility term of $W_r$, which are not linear. We will see that this dependence also brings difficulty in deriving the optimal strategies.
	
In this paper, we assume that the contribution rate is calculated by the spread method of fund amortization, which means that the supplementary contribution rate is proportional to the unfunded actuarial liability, i.e.,
	\[C(t)=NC(t)+SC(t)=NC(t)+k(AL(t)-F(t)),\]
where $k$ is a constant fixed in advance by the manager, representing the rate at which surplus or deficit is amortized. When $k>0$, if the fund is underfunded (overfunded), the contribution  will be larger (smaller) than the normal cost of the fund.

The DB pension fund receives contributions and has an outflow of promised benefits. Meanwhile, the fund manager participates in the financial market to hedge the risks. Denote the money invested in the cash, bond and stock at time $t$ by $u_0(t)$, $u_B(t)$ and $u_S(t)$, respectively ($ 0\leqslant t\leqslant T $).  The fund wealth $F=\left\{F(t)|0\leqslant t\leqslant T\right\}$ satisfies
	\begin{equation*}
		\begin{split}
			\mathrm{d}F(t)=u_0(t)\frac{\mathrm{d}S_0(t)}{S_0(t)}+u_B(t)\frac{\mathrm{d}B_K(t)}{B_K(t)}+u_S(t)\frac{\mathrm{d}S(t)}{S(t)}
			+(C(t)-P(t))\mathrm{d}t.
		\end{split}
	\end{equation*}
	The first three terms on the right side of the last equation represent the changes of fund wealth associated with the investment strategy. The last term on the right side of the last equation reflects the effect of the contributions and benefits on the fund wealth. Substituting Eqs.~(\ref{equ-S0}), (\ref{equ-B}) and (\ref{equ-S}) into the above equation, we derive the explicit form of the wealth based on the relation $F(t)=u_0(t)+u_B(t)+u_S(t)$ as follows:
	\begin{equation}\label{equ-X}
	\left\{	\begin{split}			\mathrm{d}F(t)=&r(t)F(t)\mathrm{d}t+[u_B(t)h(K)+u_S(t)\sigma_1](\lambda_r\mathrm{d}t+\mathrm{d}W_r(t))\\
			&+u_S(t)\sigma_2(\lambda_S\mathrm{d}t+\mathrm{d}W_S(t))+(C(t)-P(t))\mathrm{d}t,\\
 F(0)=&F_0.
		\end{split}\right.
	\end{equation}
	 Denote $u=\left\{u(t)\triangleq(u_B(t),u_S(t))|0\leqslant t\leqslant T\right\}$ as the investment strategy chosen by the fund manager. $u$ is said to be admissible if
     it satisfies the following conditions:\begin{enumerate}
		\item [(i)]$u$ is progressively measurable w.r.t. the filtration $\mathbb{F}$,
		\item [(ii)]$\mathbb{E}\left\{\int_0^T[(u_B(t)h(K)+u_S(t)\sigma_1)^2+u_S(t)^2\sigma_2^2]\mathrm{d}t\right\}<+\infty$,
		\item [(iii)]Eq.~(\ref{equ-X}) admits a unique strong solution starting with $F(t)=f$ for any  $(t,f)\in [0,T]\times \mathbb{R}$.
	\end{enumerate}
	\vskip 5pt
	Denote the set of all admissible investment strategies of $u$ by  $\Pi$. We are only concerned with the admissible strategies. The pension fund manager searches the optimal strategy within the admissible set under some optimization criterion.
	
   At each time $ t $, the pension fund is faced with a liability of $AL(t)$. The fund surplus at time $ t $ is given by $X(t)=F(t)-AL(t)$. The fund manager is concerned with the fund surplus when making a decision. By It\^{o}'s formula and based on the relationships among $AL(t)$, $C(t)$, $NC(t)$ and $F(t)$, we obtain the dynamics of the fund surplus in the following proposition.
\begin{proposition}\label{prop:x}
The evolution of $X=\left\{X(t)|0\leqslant t\leqslant T\right\}$ is as follows:
\begin{equation}\label{SDE-X}
	\left\{\begin{aligned}
		\mathrm{d}X(t)=&(r(t)-k)X(t)\rd t+\lambda_rP(t)f_2(r(t))\rd t+\left(\lambda_r\sigma_{P_1}+\lambda_S\sigma_{P_2}-\delta\right)P(t)f_0(r(t))\rd t\\
		&+[u_B(t)h(K)+u_S(t)\sigma_1-P(t)f_2(r(t))-\sigma_{P_1}P(t)f_0(r(t))](\lambda_r\mathrm{d}t+\mathrm{d}W_r(t))\\
		&+[u_S(t)\sigma_2-\sigma_{P_2}P(t)f_0(r(t))](\lambda_S\mathrm{d}t+\mathrm{d}W_S(t)),\\
		 X(0)=&X_0:=F_0-AL(0)
	\end{aligned}\right.
\end{equation}
with $f_0(r(t))=\int_{m}^{d}  e^{-r(t)A(x,d)+D(x,d)}M(x)\rd x $.
	\end{proposition}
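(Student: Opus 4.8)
The plan is to derive \eqref{SDE-X} by simply differentiating the definition $X(t)=F(t)-AL(t)$ and substituting the dynamics established earlier. First I would write $\mathrm{d}X(t)=\mathrm{d}F(t)-\mathrm{d}AL(t)$, taking $\mathrm{d}F(t)$ from \eqref{equ-X} and $\mathrm{d}AL(t)$ from Proposition~\ref{ALNC}. In $\mathrm{d}F(t)$ I would substitute the contribution rate $C(t)=NC(t)+k(AL(t)-F(t))$, so that the term $(C(t)-P(t))\mathrm{d}t$ becomes $(NC(t)-kX(t)-P(t))\mathrm{d}t$ after using $AL(t)-F(t)=-X(t)$; the drift $r(t)F(t)\mathrm{d}t$ is rewritten as $r(t)X(t)\mathrm{d}t+r(t)AL(t)\mathrm{d}t$ to prepare for cancellation against terms coming from $\mathrm{d}AL(t)$.

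The key cancellation is in the $\mathrm{d}t$-terms not multiplied by $\lambda_r$ or $\lambda_S$: from $\mathrm{d}F$ we get $r(t)AL(t)+NC(t)-P(t)$, and from $-\mathrm{d}AL$ we get $-[P(t)(f_1(r(t))+\sigma_{P_1}f_2(r(t)))+\mu AL(t)]$. Here I would invoke the closed forms in Proposition~\ref{ALNC}, namely $AL(t)=f_0(r(t))P(t)$ with $f_0(r(t))=\int_m^d e^{-r(t)A(x,d)+D(x,d)}M(x)\,\mathrm{d}x$ (note the statement of Proposition~\ref{prop:x} implicitly defines $f_0$ this way), and the analogous expression for $NC(t)$. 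The crucial identity is a ``fundamental'' relation tying $r(t)AL(t)+NC(t)-P(t)$ to the drift and $f_1,f_2$ terms: differentiating the defining integral for $AL(t)$ in $t$ (or equivalently using the Feynman--Kac/ODE characterization behind $D(x,d)$) should show that $r(t)AL(t)-\mu AL(t)+NC(t)-P(t)-P(t)f_1(r(t)) = \delta\,AL(t) - \text{(the }\lambda\text{-free part already accounted for)}$; after rearranging, the non-$\lambda$ drift collapses to $-kX(t)\mathrm{d}t + \lambda_r P(t)f_2(r(t))\mathrm{d}t + (\lambda_r\sigma_{P_1}+\lambda_S\sigma_{P_2}-\delta)P(t)f_0(r(t))\mathrm{d}t$ once the market-price-of-risk terms are folded back in. The diffusion terms are straightforward: the $\mathrm{d}W_r$ coefficient is $[u_B(t)h(K)+u_S(t)\sigma_1]$ from $F$ minus $[P(t)f_2(r(t))+\sigma_{P_1}AL(t)]$ from $AL$, and since $AL(t)=P(t)f_0(r(t))$ this is exactly $u_B(t)h(K)+u_S(t)\sigma_1-P(t)f_2(r(t))-\sigma_{P_1}P(t)f_0(r(t))$; likewise the $\mathrm{d}W_S$ coefficient is $u_S(t)\sigma_2-\sigma_{P_2}AL(t)=u_S(t)\sigma_2-\sigma_{P_2}P(t)f_0(r(t))$. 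Finally I would group each Brownian increment with its market price of risk to match the $(\lambda_r\mathrm{d}t+\mathrm{d}W_r(t))$ and $(\lambda_S\mathrm{d}t+\mathrm{d}W_S(t))$ format in \eqref{SDE-X}, which forces the residual pure-$\mathrm{d}t$ terms to be precisely $\lambda_r P(t)f_2(r(t))+(\lambda_r\sigma_{P_1}+\lambda_S\sigma_{P_2})P(t)f_0(r(t))$, minus the $\delta P(t)f_0(r(t))$ coming from the liability discounting.

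The main obstacle is the bookkeeping around the ``$\lambda$-free'' drift: showing that $r(t)AL(t)+NC(t)-P(t)$ minus the drift of $AL$ in Proposition~\ref{ALNC} reduces to $-kX(t)+(\text{terms that combine with the diffusion into the }\lambda\text{-bundled form})$. This requires the identity $r(t)f_0(r(t))P(t) - \mu f_0(r(t))P(t) + NC(t) - P(t) - P(t)f_1(r(t)) = -\delta f_0(r(t))P(t) + \sigma_{P_1}P(t)f_2(r(t))$ (up to the precise arrangement), which is exactly what the ODE satisfied by $e^{-r A(x,d)+D(x,d)}$ encodes — one can verify it by plugging $AL(t)=\int_m^d e^{-r(t)A(x,d)+D(x,d)}M(x)\,\mathrm{d}x\,P(t)$ into the definition and using integration by parts in the age variable together with the boundary values $M(m)=0$, $M(d)=1$ (which is how $NC(t)=\int_m^d M'(x)(\cdots)\,\mathrm{d}x\,P(t)$ and the ``$-P(t)$'' term enter). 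Everything else is a routine application of It\^o's product rule and collecting coefficients; I would relegate the detailed computation to the appendix, as the paper does for Proposition~\ref{ALNC}.
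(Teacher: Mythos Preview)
Your proposal is correct and follows essentially the same route as the paper's proof in Appendix~\ref{aprop:x}: write $\mathrm{d}X=\mathrm{d}F-\mathrm{d}AL$, substitute $C(t)=NC(t)-kX(t)$, and then use integration by parts in the age variable (exploiting $M(m)=0$, $M(d)=1$) to turn $NC(t)$ into $P(t)$ minus an integral against $M(x)$, which is exactly what makes the drift collapse to the form in~\eqref{SDE-X}. The only cosmetic difference is ordering: the paper performs the integration by parts on $\psi_{NC}$ upfront and then substitutes, whereas you plan to apply It\^o first and invoke the identity while reconciling the drift; the computation is the same either way.
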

	\begin{proof}
    See Appendix \ref{aprop:x}.
	\end{proof}
We see that the SDE (\ref{SDE-X}) of $X$ is very complicated. Different from { that in} \cite{JR10}, $P(t)$, $f_0(r(t))$ and $f_2(r(t))$ appear in the SDE of $X$. Note here that when $X_0>0$ ($X_0<0$), the pension fund is overfunded (underfunded) at initial time. The rate of interest for $X(t)$ becomes $r(t)-k$, which relies on the constant $k$. A positive $k$ results in a smaller interest rate charged on the unfunded liability.
\vskip 15pt
\setcounter{equation}{0}
\section{\bf Problem Formulation}
In this section, we formulate the problem of DB pension plan under  both underfunded and overfunded cases. Previous researches studied the underfunded and overfunded cases for the DB pension plan separately. In \cite{JJ18}, the authors suppose that the fund surplus starts in the underfunded (overfunded) region will never becomes overfunded (underfunded). However, in reality, either starting in the underfunded or overfunded case, the fund may become underfunded or overfunded because of the investment performance in the financial market. Meanwhile, the fund manager may also {sacrifice solvency} in the underfunded case to expect a larger return in the overfunded case. As such, it is more reasonable to consider these two cases together. First, we present the solvency risk (expected utility) considered in the underfunded (overfunded) case in previous researches.
\begin{itemize}
	\item[{\bf i }]{\bf Underfunded case}\vskip1pt \noindent
	In the underfunded case with $X_0<0$, \cite{JR10} and \cite{JJ18} consider minimizing the terminal solvency risk, i.e.,
	\begin{equation}\label{op1}
		\min \mathbb{E}[X^2(T)].
	\end{equation}
	\item[{\bf ii }]{\bf Overfunded case}\vskip1pt \noindent
In the overfunded case with $X_0>0$, \cite{JJ18} consider maximizing the expected CRRA utility of terminal wealth, i.e.,
	\begin{equation}\label{op2}
		\max \mathbb{E}[U(X(T))],
	\end{equation}
	where
	\[U(x)=\frac{x^{1-\gamma}}{1-\gamma}\]
	with $\gamma\in(0,1)$.
\end{itemize}

In \cite{JR10}, \cite{JJ18}, the optimal fund surplus evolves as the geometric Brownian motion. As such, the sign of $X(T)$ is the same as the sign of initial value $X_0$, and the authors consider these two cases separately.

In this paper, we suppose that either beginning with $X_0>0$ or $X_0<0$, the fund wealth at time $T$ may become positive or negative. The manager is concerned with the weighted sum of the solvency risk and the expected CRRA utility
	\begin{equation}\label{op3}\max \mathbb{E}\left\{-\alpha X^2(T)1_{\{X(T)<0\}} + \frac{X(T)^{1-\gamma}}{1-\gamma}1_{\{X(T)>0\}} \right\},
	\end{equation}
where $\alpha>0$ measures the attitude towards solvency risk. $\alpha$ can be viewed as the coefficient of tolerance level towards solvency risk. When $\alpha$ is large (small), the manager is more concerned with solvency risk (expected utility) in the underfunded (overfunded) region. Compared with the goal (\ref{op2}), starting from the overfunded region, the fund manager with the goal (\ref{op3}) can {sacrifice some solvency} to achieve a larger expected utility. Meanwhile, starting from the underfunded region, the fund manager in our paper may also enjoy some expected utility in the overfunded region. Besides, the fund manager can not {sacrifice too much solvency} and we impose a negative bond on the fund surplus at time $T$, i.e.,
\[X(T)\geqslant-B,\]
where $B>0$ is a constant.

Then the optimization problem of the pension manager under  both underfunded and overfunded cases is as follows
\begin{equation}\label{PX}
	\left\{\begin{aligned}
		\max_{u\in\Pi}~~&{\mathbb{E}\left[-\alpha X^2(T)1_{\{X(T)<0\}} + \frac{X(T)^{1-\gamma}}{1-\gamma}1_{\{X(T)>0\}}\right]},\\
		\text{s.t.}\quad&{X(t)}\text{ satisfies Eq.~(\ref{SDE-X}),}\\&X(T)\geqslant-B.
	\end{aligned}\right.
\end{equation}
We do not require $X_0>0$ or $X_0<0$ in this paper. Regardless of the initial case of the fund, the fund at time $T$ may become underfunded or overfunded. Problem (\ref{PX}) formulates the optimization problem for different coefficient $\alpha>0$. For each $\alpha>0$, we can {solve the optimization problem and} calculate the related solvency risk $\mathbb{E}[{X_\alpha^*}^2(T)1_{\{X^*(T)<0\}}]$ and expected utility   $ \mathbb{E}[U(X_\alpha^*(T))1_{\{X^*(T)>0\}}]$. As such, for each $\alpha$, we have a point $(\mathbb{E}[{X_\alpha^*}^2(T)1_{\{X^*(T)<0\}}], \mathbb{E}[U(X_\alpha^*(T))1_{\{X^*(T)>0\}}])$ on the plane. All points form a frontier which we also call it efficient frontier as in the mean-variance analysis.

\begin{figure}[tbph]
	\centering
	\includegraphics[width=0.5\linewidth]{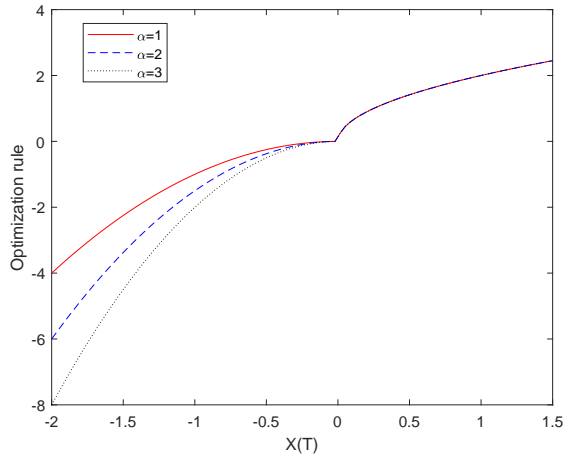}
	\caption{The optimization rule with $B=2$ and $\gamma=0.5$.}
	\label{fig1}
\end{figure}

\begin{remark}
	Problem (\ref{op3}) is concerned with a non-concave piece-wise function, which belongs to the form in \cite{carassus2009portfolio}. The manager is risk-averse towards overfunded wealth and underfunded wealth while the risk aversion of the manager is modified at the level 0. Fig.~\ref{fig1} shows the optimization rule  for different $\alpha$. In \cite{berkelaar2004optimal}, the authors consider a concave power utility function with a kink at the reference point, which supposes that the  manager is risk-averse towards losses and gains while has different attitudes. Besides, Problem (\ref{op3}) shows that the manager is more risk averse towards overfunded wealth, which can also be observed from Fig.~\ref{fig1}. Although the criterion in Problem (\ref{op3}) is different from the loss aversion towards losses in \cite{tversky1991loss} and \cite{tversky1992advances}, the criterion is consistent with the empirical results in \cite{march1996learning}, which reveals that that human beings exhibit greater risk aversion for gains than for losses in a wide variety of situations. Besides, we see from Fig.~\ref{fig1} that for a larger $\alpha$, the risk aversion attitude over losses increases.
\end{remark}

\vskip 15pt
\setcounter{equation}{0}
\section{\bf Optimal Portfolio and the Efficient Frontier}
We see that the optimization rule in Problem ($\ref{PX}$) is a non-concave piece-wise function of $X(T)$. The stochastic dynamic method can not be efficiently applied {to} this kind of problem. In this paper, we employ the martingale method to derive the optimal portfolio and efficient frontier. First, we see that Eq.~(\ref{SDE-X}) of $X$ is very complicated.  Eq.(\ref{SDE-X}) has additional terms both in the drift and volatility terms. To explicitly solve the optimization problem, we need to eliminate these additional terms first. Notice that the interest rate in Eq.~(\ref{SDE-X}) becomes $r(t)-k$. Observing the structure of Eq.~(\ref{SDE-X}), the pricing kernel $\rho=\!\{\rho(t)|0\!\leqslant\! t\!\leqslant\! T\}$  of the system is  given by
\begin{eqnarray}\label{equ:rho}
\left\{	\begin{split}
	\frac{\rd \rho(t)}{\rho(t)}&=-(r(t)-k)\rd t-\lambda_r \rd W_r(t)-\lambda_S\rd W_S(t),\\
 \rho(0)&= 1,
 \end{split}\right.
\end{eqnarray}
which is different from the real pricing kernel in the financial market with interest rate $r(t)$. When $k=0$, $\rho$ represents the real pricing kernel in the financial market. In the following, when we calculate the price of some derivative, the pricing kernel is $\rho(t)$, i.e., the price of payment $Z$ at maturity $s$ is $\mathbb{E}[Z\rho(s)/\rho(t)]$ at time $t\leqslant s$.


\subsection{\bf Equivalent problem}
The additional drift terms in Eq.~(\ref{SDE-X}) are $\lambda_rP(t)f_2(r(t))$ and $(\lambda_r\sigma_{P_1}+\lambda_S\sigma_{P_2}-\delta)P(t)f_0(r(t))$. The additional diffusion terms in Eq.~(\ref{SDE-X}) can be absorbed in the investment strategy and do not affect the structure of the fund surplus. We apply the derivative pricing theory to replicate the additional drift terms in  Appendix \ref{replication} and transform the original problem into an equivalent one.  We introduce two auxiliary processes $H$ and $\tilde{H}$ in Appendix \ref{replication} to replicate these two terms.

Next, we construct an auxiliary process $Y=\left\{Y(t)|0\leqslant t\leqslant T\right\} $ to transform the original problem (\ref{PX}) into an equivalent one with a self-financing wealth process. $Y$ is defined as
\begin{equation*}	Y(t)=X(t)+\lambda_rH(t,r(t),P(t),T)+\left(\lambda_r\sigma_{P_1}+\lambda_S\sigma_{P_2}-\delta\right)\tilde{H}(t,r(t),P(t),T),\quad 0\leqslant t\leqslant T\\
\end{equation*}
with $Y(0)=Y_0\!=\!X_0+\lambda_rH(0,r_0,P_0,T)+\left(\lambda_r\sigma_{P_1}\!+\!\lambda_S\sigma_{P_2}\!
-\!\delta\right)\tilde{H}(0,r_0,P_0,T)$. As $H(T,r(T),P(T),T)\!=\!0$ and $\tilde{H}(T,r(T),P(T),T)=0$, we have $Y(T)=X(T)$. Combining Eqs.~(\ref{SDE-X}), (\ref{sde:H}) and (\ref{equ:h2}), the differential form of $Y$ is obtained as follows:
\begin{equation}\label{equ:y1}
		\begin{aligned}
			\rd Y(t)=&(r(t)-k)Y(t)\rd t\\
			&+[u_B(t)h(K)+u_S(t)\sigma_1-P(t)f_2(r(t))-\sigma_{P_1}P(t)f_0(r(t))\\			&+\int_t^T\left(\lambda_r\sigma_{P_1}+\lambda_S\sigma_{P_2}-\delta\right)(\sigma_{P_1}
\tilde{g}(t,r(t),s)-\sigma_r\tilde{g}_r(t,r(t),s))\\&+\lambda_r(\sigma_{P_1}g(t,r(t),s)-\sigma_rg_r(t,r(t),s))\rd s](\lambda_r\mathrm{d}t+\mathrm{d}W_r(t))\\			&+[u_S(t)\sigma_2-\sigma_{P_2}P(t)f_0(r(t))\\&+\sigma_{P_2}\int_t^T\lambda_rg(t,r(t),s)+\left(\lambda_r\sigma_{P_1}+\lambda_S\sigma_{P_2}-\delta\right)\tilde{g}(t,r(t),s)\rd s](\lambda_S\mathrm{d}t+\mathrm{d}W_S(t)).
		\end{aligned}
	\end{equation}
We see that in Eq.~(\ref{equ:y1}), the additional drift terms in Eq.~(\ref{SDE-X}) are eliminated. The interest rate for process $Y$ is $r(t)-k$. The strategies $u_B$ and $u_S$ affect the diffusion terms in Eq.~(\ref{equ:y1}). In order to simplify the diffusion terms in Eq.~(\ref{equ:y1}), we introduce two strategies $\pi=\{\pi(t)=(\pi_1(t),\pi_2(t))|0\leqslant t\leqslant T\}$ for $Y$ as follows:
 \begin{equation}\label{piu}
\left\{ \begin{aligned}
			\pi_1(t)=&u_B(t)h(K)+u_S(t)\sigma_1-P(t)f_2(r(t))-\sigma_{P_1}P(t)f_0(r(t))\\
&+\int_t^T\lambda_r(\sigma_{P_1}g(t,r(t),s)-\sigma_rg_r(t,r(t),s))\\
&+\left(\lambda_r\sigma_{P_1}+\lambda_S\sigma_{P_2}-\delta\right)(\sigma_{P_1}\tilde{g}(t,r(t),s)
-\sigma_r\tilde{g}_r(t,r(t),s))\rd s,\\			\pi_2(t)=&u_S(t)\sigma_2-\sigma_{P_2}P(t)f_0(r(t))+\sigma_{P_2}\int_t^T\lambda_rg(t,r(t),s)\\
&+\left(\lambda_r\sigma_{P_1}+\lambda_S\sigma_{P_2}-\delta\right)\tilde{g}(t,r(t),s)\rd s.
\end{aligned}	\right.
	\end{equation}
We call $\pi$ admissible if the related strategy $u$ derived from Eq.~(\ref{piu}) belongs to $\Pi$. We also use the notation $\pi\in\Pi$ if the strategy $\pi$ is admissible. Combining Eqs.~(\ref{equ:y1}) and (\ref{piu}), $Y$ satisfies the following SDE
 \begin{equation}\label{SDE-Y}
		\rd Y(t)=(r(t)-k)Y(t)\rd t+\pi_1(t)(\lambda_r\mathrm{d}t+\mathrm{d}W_r(t))+\pi_2(t)(\lambda_S\mathrm{d}t+\mathrm{d}W_S(t)), Y(0)=Y_0.
	\end{equation}
As $Y(T)=X(T)$, Problem (\ref{PX}) is equivalent to the following problem w.r.t. $Y$.
\begin{equation}\label{PY}
	\left\{\begin{aligned}
		\max_{\pi\in\Pi}~~&{\mathbb{E}\left[-\alpha Y^2(T)1_{\{Y(T)<0\}} + \frac{Y(T)^{1-\gamma}}{1-\gamma}1_{\{Y(T)>0\}}\right]},\\
		\text{s.t.}\quad&{Y(t)}\text{ satisfies Eq.~(\ref{SDE-Y}),}\\&Y(T)\geqslant-B.
	\end{aligned}\right.
\end{equation}	

\subsection{\bf Optimal solution}
Problem (\ref{PY}) is not the traditional expected utility maximization problem. Problem (\ref{PY}) is concerned with a non-concave piece-wise optimization rule and also has a negative bound on the terminal wealth. As such, the stochastic dynamic programming method can not be applied. Based on \cite{cox1989optimal}, the martingale problem is
\begin{equation}\label{PY2}
	\left\{\begin{aligned}
		\max_{Z\geqslant-B,Z\in\mathcal{M}}\quad&{\mathbb{E}\left[-\alpha Z^21_{\{Z<0\}} + \frac{Z^{1-\gamma}}{1-\gamma}1_{\{Z>0\}}\right]},\\
		\text{s.t.}\quad& \mathbb{E}[\rho(T)Z]\leqslant Y_0,
	\end{aligned}\right.
\end{equation}	
where $\mathcal{M}$ represents the set of $\mathcal{F}_{T}- $measurable random variables.  The following lemma ensures the equivalence between Problem (\ref{PY}) and Problem (\ref{PY2}).
\begin{lemma}\label{TR1}
	Problem (\ref{PY}) is equivalent to  Problem (\ref{PY2}).
\end{lemma}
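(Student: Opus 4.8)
The plan is to prove that the feasible sets of the two problems generate the same collection of achievable objective values, via the standard budget‑constraint / martingale‑representation dichotomy, adapted to the fact that the criterion here is neither concave nor everywhere differentiable. Write $v(\ref{PY})$ and $v(\ref{PY2})$ for the two optimal values.

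\emph{From (\ref{PY}) to (\ref{PY2}).} Fix an admissible $\pi\in\Pi$ and let $Y$ solve (\ref{SDE-Y}). Applying It\^o's formula to $\rho(t)Y(t)$ with $\rho$ from (\ref{equ:rho}), the $(r(t)-k)Y(t)$ term, the two market‑price‑of‑risk terms carried by $\pi_1,\pi_2$, and the cross‑variation term all cancel, leaving
\[\rd(\rho(t)Y(t))=\rho(t)\big(\pi_1(t)-\lambda_rY(t)\big)\rd W_r(t)+\rho(t)\big(\pi_2(t)-\lambda_SY(t)\big)\rd W_S(t),\]
a local martingale. Condition (ii) in the definition of $\Pi$, transported through the linear relations (\ref{piu}) together with the finite moments of $\rho$, $r$ and $P$, upgrades this to a (super)martingale, so $\mathbb{E}[\rho(T)Y(T)]\leqslant\rho(0)Y(0)=Y_0$. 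Since moreover $Y(T)\geqslant-B$, the variable $Z:=Y(T)\in\mathcal{M}$ is feasible for (\ref{PY2}) and realizes the same objective value; hence $v(\ref{PY})\leqslant v(\ref{PY2})$.

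\emph{From (\ref{PY2}) to (\ref{PY}).} Let $Z$ be feasible for (\ref{PY2}). Because $0\leqslant Z^-\leqslant B$, the budget constraint gives $\mathbb{E}[\rho(T)Z^+]\leqslant Y_0+B\,\mathbb{E}[\rho(T)]<\infty$, so $\rho(T)Z\in L^1$; and since the reward density $z\mapsto-\alpha z^21_{\{z<0\}}+\frac{z^{1-\gamma}}{1-\gamma}1_{\{z>0\}}$ is strictly increasing, we may assume the budget constraint binds (otherwise replace $Z$ by $Z+c$ for a suitable constant $c>0$, which preserves $Z\geqslant-B$ and does not decrease the objective). Define the martingale $M(t)=\mathbb{E}[\rho(T)Z\mid\mathcal{F}_t]$; by martingale representation on the Brownian filtration, $\rd M=\phi_r\rd W_r+\phi_S\rd W_S$ for progressively measurable $\phi_r,\phi_S$. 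Put $Y(t)=M(t)/\rho(t)$ and
\[\pi_1(t)=\lambda_rY(t)+\phi_r(t)/\rho(t),\qquad \pi_2(t)=\lambda_SY(t)+\phi_S(t)/\rho(t).\]
Reversing the It\^o computation above shows $Y$ solves (\ref{SDE-Y}) with this $\pi$ and $Y(T)=M(T)/\rho(T)=Z\geqslant-B$. Inverting the linear system (\ref{piu}) for $(u_B,u_S)$ (possible since $h(K)\sigma_2\neq0$) yields a candidate $u$; granting its admissibility, $X:=Y-\lambda_rH-(\lambda_r\sigma_{P_1}+\lambda_S\sigma_{P_2}-\delta)\tilde H$ together with $F=X+AL$ solves (\ref{equ-X}), so $Z=X(T)$ is attained in (\ref{PY}) with the same objective value, giving $v(\ref{PY})\geqslant v(\ref{PY2})$; optimizers correspond likewise.

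\emph{Main obstacle.} The delicate point is the admissibility check in the second direction: martingale representation only delivers $\phi_r,\phi_S\in L^2_{\mathrm{loc}}$, whereas condition (ii) of $\Pi$ requires genuine square‑integrability of the diffusion coefficients of $F$, and the budget constraint by itself only gives $\rho(T)Z\in L^1$. I would handle this by using the explicit structure available later in the paper: it suffices to establish the equivalence for (near‑)optimal $Z$, and the Lagrange‑dual pointwise optimization produces a closed form for $\rho(T)Z^\ast$ (bounded below by $\rho(T)(-B)$ and with controlled upper tails) for which $M$, $\phi_r$, $\phi_S$, $Y$ and $\pi$ lie in the required $L^p$ spaces thanks to the Gaussian moment bounds on $r$, $W_r$, $W_S$; a general feasible $Z$ is then recovered by truncation and passage to the limit. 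Everything else — the two It\^o identities and the inversion of (\ref{piu}) — is routine.
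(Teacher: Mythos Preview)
Your argument is essentially the paper's own: the It\^o computation on $\rho Y$ to obtain the budget inequality in one direction, and martingale representation to manufacture $\pi$ from a feasible terminal $Z$ in the other. The paper is terser---it argues only at the level of optimizers and invokes ``local martingale with lower bound $\Rightarrow$ supermartingale'' without justifying the pathwise lower bound, and it simply asserts the representation integrands are ``integrable''---so your explicit flagging of the admissibility gap and your plan to close it via the explicit dual optimizer is, if anything, more careful than what the paper records.
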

\begin{proof}
	Denote $ \{u_S^*, u_B^*\} $ as the solution to Problem (\ref{PY}). Let $ Y^*$ be the auxiliary process corresponding to $ \{u_S^*, u_B^*\} $. It is obvious that $ \{\rho(t)Y^*(t)|0\leqslant t\leqslant T\} $ is a local martingale with lower bound and thus is a super-martingale with respect to $ \mathbb{F} $, thus we have $ \mathbb{E}[\rho(T)Y^*(T)]\leqslant Y_0 $. Besides, $ Y^*(T) $ attains the maximal of the optimization goal, i.e., $ Y^*(T) $ is the solution to Problem (\ref{PY2}).
	
	On the other hand, for the solution $ Z^* $ of Problem (\ref{PY2}), let $ Y^*(t)=\rho(t)^{-1} \mathbb{E}[\rho(T)Z^*|\mathcal{F}_t]$. Then $ \rho(t)Y^*(t) $ is a martingale with respect to $ \mathbb{F} $. As such, by the martingale representation theorem, we obtain $$\rd \rho(t)Y^*(t)=A\rd W_r(t) +B\rd W_S(t)$$ for some {integrable} processes $ A $ and $B$. Based on Eqs.~(\ref{equ:rho}) and (\ref{SDE-Y}), we have
	\begin{equation}\label{rhoY}
		\rd \rho(t)Y^*(t)=\rho(t)[(\pi_1(t)-\lambda_rY^*(t))\rd W_r(t)+(\pi_2(t)-\lambda_SY^*(t))\rd W_S(t)].
	\end{equation}
	Comparing  the last two equations, we have the following results for Problem (\ref{PY})
	\begin{equation*}
		\left\{\begin{aligned}
		\pi_1^*(t)=&\rho(t)^{-1}A+\lambda_rY^*(t),\\
		\pi_2^*(t)=&\rho(t)^{-1}B+\lambda_SY^*(t).
		\end{aligned}\right.
	\end{equation*} Based on Eq.~(\ref{piu}), the solutions $ u_B^* $ and $ u_S^* $  to the original problem (\ref{PX}) can be obtained.
\end{proof}
Lemma \ref{TR1} also shows the procedure to derive the optimal strategy of Problem (\ref{PY}) from the solution to Problem (\ref{PY2}). Problem (\ref{PY2}) is a static maximization problem w.r.t. some random variable.  However, Problem (\ref{PY2}) imposes a  budget constraint on the random variable and is a constrained optimization problem. To eliminate the budget constraint, we employ the Lagrange dual method first. For simplicity, denote
\begin{equation*}
	f(x)=-\alpha x^21_{\{x<0\}} + \frac{x^{1-\gamma}}{1-\gamma}1_{\{x>0\}}.
\end{equation*}
By the Lagrange dual method, Problem (\ref{PY2}) can be transformed to the following problem with a Lagrange multiplier $ \beta\geqslant0 $
\begin{equation}\label{PY3}
	\max_{Z\geqslant-B,Z\in\mathcal{M}}\quad{\mathbb{E}\left[f(Z)-\beta\rho(T)Z \right]}.
\end{equation}
In order to solve  Problem (\ref{PY3}), we only need to solve the following equivalent {deterministic} problem
\begin{equation}\label{DPY3}
	\max_{x\geqslant-B}\quad\{f(x)-yx\}.
\end{equation}	
The following lemma shows the results for Problem (\ref{DPY3}).
\begin{lemma}\label{lemma:xy}
Denote $ k_1= (\frac{4\alpha\gamma}{1-\gamma})^{\frac{\gamma}{1+\gamma}} $. There are two cases for the solution to Problem (\ref{DPY3}).
\begin{itemize}
	\item[(i)]If $k_1<2\alpha B $, the solution to Problem (\ref{DPY3}) is  given by
	\begin{equation*}
		x^*(y)=\left\{\begin{aligned}
			&y^{-\frac{1}{\gamma}},&0<y< k_1,\\
			&k_1^{-\frac{1}{\gamma}} \text{ or } -\frac{k_1}{2\alpha},&y=k_1\\
			&-\frac{y}{2\alpha},&k_1<y\leqslant 2\alpha B,\\
			&-B,&y> 2\alpha B.
		\end{aligned}\right.
	\end{equation*}
	\item[(ii)]If $k_1\geqslant2\alpha B $, define $ k_2=z_0^{-\gamma} $,  $ z_0 $ is the solution of the following equation
\begin{equation*}
		\frac{z^{1-\gamma}}{1-\gamma}+\alpha B^2=z^{-\gamma}(z+B),
\end{equation*}
then the solution to Problem (\ref{DPY3}) is
	\begin{equation*}
		x^*(y)=\left\{\begin{aligned}
			&y^{-\frac{1}{\gamma}},&0<y< k_2,\\
			&k_2^{-\frac{1}{\gamma}}\text{ or }-B,&y= k_2\\
			&-B,&y>k_2.
		\end{aligned}\right.
	\end{equation*}
\end{itemize}
\end{lemma}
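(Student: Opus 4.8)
The plan is to solve the one-dimensional problem $\max_{x\geqslant -B}\{f(x)-yx\}$ for a fixed $y>0$ by optimizing $g(x):=f(x)-yx$ separately over the two branches $x\in[-B,0]$ and $x\in[0,\infty)$, on each of which $f$ (hence $g$) is concave, and then comparing the two branch maxima. Since $f$ is continuous on $[-B,\infty)$ (both pieces agree with value $0$ at the origin) and $g(x)\to-\infty$ as $x\to\infty$, a maximizer exists. On $[0,\infty)$ the map $x\mapsto\frac{x^{1-\gamma}}{1-\gamma}-yx$ has derivative $x^{-\gamma}-y$, so it attains its maximum at the interior point $x=y^{-1/\gamma}$ with value $V_+(y):=\frac{\gamma}{1-\gamma}y^{-(1-\gamma)/\gamma}>0$. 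On $[-B,0]$ the map $x\mapsto-\alpha x^2-yx$ has derivative $-2\alpha x-y$ and unconstrained maximizer $-\frac{y}{2\alpha}<0$; this lies in $[-B,0]$ iff $y\leqslant 2\alpha B$, in which case the branch maximum is $V_-(y):=\frac{y^2}{4\alpha}$, attained at $x=-\frac{y}{2\alpha}$, while for $y>2\alpha B$ the branch is decreasing and its maximum is $V_-(y):=yB-\alpha B^2$, attained at $x=-B$. In every case $V_-(y)>0=g(0)$, so the origin is never optimal and the global maximizer is whichever branch point carries the larger value of $V_\pm$.

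It remains to compare $V_+$ and $V_-$. On the range $0<y\leqslant 2\alpha B$ the difference $\phi(y):=V_+(y)-\frac{y^2}{4\alpha}$ is strictly decreasing from $+\infty$, so it vanishes at most once; solving $\frac{\gamma}{1-\gamma}y^{-(1-\gamma)/\gamma}=\frac{y^2}{4\alpha}$, i.e. $y^{(1+\gamma)/\gamma}=\frac{4\alpha\gamma}{1-\gamma}$, identifies the unique root as $k_1=\big(\frac{4\alpha\gamma}{1-\gamma}\big)^{\gamma/(1+\gamma)}$. If $k_1<2\alpha B$, this root lies in the range $y\leqslant2\alpha B$, so $V_+>V_-$ on $(0,k_1)$, $V_+<V_-$ on $(k_1,2\alpha B]$, and since at $y=2\alpha B$ one already has $V_+(2\alpha B)=\phi(2\alpha B)+\alpha B^2<\alpha B^2=V_-(2\alpha B)$ with $V_+$ decreasing and $V_-(y)=yB-\alpha B^2$ increasing thereafter, also $V_+<V_-$ on $(2\alpha B,\infty)$. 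Reading off the branch maximizers range by range, and noting that at $y=k_1$ both $k_1^{-1/\gamma}$ and $-\frac{k_1}{2\alpha}$ attain the maximum, gives case (i).

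In the complementary regime $k_1\geqslant 2\alpha B$ one has $\phi\geqslant0$ on $(0,2\alpha B]$, i.e. $V_+\geqslant V_-$ there, hence $x^*(y)=y^{-1/\gamma}$ on that range. For $y>2\alpha B$ set $\psi(y):=V_+(y)-(yB-\alpha B^2)$; then $\psi'(y)=-y^{-1/\gamma}-B<0$, $\psi(2\alpha B)=\phi(2\alpha B)\geqslant0$, and $\psi(y)\to-\infty$, so $\psi$ has a unique zero $k_2\geqslant 2\alpha B$. Writing the crossover condition $V_+(k_2)=k_2B-\alpha B^2$ in terms of $z_0:=k_2^{-1/\gamma}$ (so $k_2=z_0^{-\gamma}$ and $V_+(k_2)=\frac{\gamma}{1-\gamma}z_0^{1-\gamma}$) reduces, after rearrangement, precisely to $\frac{z_0^{1-\gamma}}{1-\gamma}+\alpha B^2=z_0^{-\gamma}(z_0+B)$, the equation defining $z_0$ in the statement; this has a unique positive solution because its left-minus-right side $h(z)=-\frac{\gamma}{1-\gamma}z^{1-\gamma}+Bz^{-\gamma}-\alpha B^2$ satisfies $h(0^+)=+\infty$, $h(+\infty)=-\infty$, and $h'(z)=-\gamma z^{-\gamma-1}(z+B)<0$ (geometrically, $z_0$ is the tangency point of the line through $(-B,f(-B))=(-B,-\alpha B^2)$ tangent to the curve $x\mapsto\frac{x^{1-\gamma}}{1-\gamma}$). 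Consequently $V_+>V_-$ on $(2\alpha B,k_2)$ and $V_+<V_-$ on $(k_2,\infty)$, yielding $x^*(y)=y^{-1/\gamma}$ for $y<k_2$, $x^*(y)=-B$ for $y>k_2$, and either value at $y=k_2$, which is case (ii).

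I expect the main obstacle to be the bookkeeping that ties the three thresholds together: verifying that the $V_+$--$V_-$ crossover is unique, that it falls in the regime $y\leqslant 2\alpha B$ exactly when $k_1<2\alpha B$ (and in the regime $y>2\alpha B$ otherwise), and that in the latter case it sits precisely at $k_2=z_0^{-\gamma}$ with $z_0$ the stated fixed point; the per-branch optimizations and the monotonicity of $\phi$, $\psi$, $h$ are routine single-variable calculus.
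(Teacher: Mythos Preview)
Your proof is correct. Your route differs from the paper's in presentation rather than in substance: the paper builds the concave envelope $f^c$ of $f$ on $[-B,\infty)$ (citing the machinery of Dong and Zheng for non-concave maximization), identifies $k_1$ and $k_2$ as the slopes of the appropriate common tangent lines, and then reads off the maximizer of $f(x)-yx$ from the subgradient of $f^c$. You instead maximize $f(x)-yx$ on each concave branch directly and compare the two branch values $V_+(y)$ and $V_-(y)$, locating the crossover at $k_1$ (or $k_2$) by elementary monotonicity of the difference. The two arguments are equivalent—the crossover slopes are precisely the slopes of the affine pieces of the envelope—but yours is self-contained (no appeal to an external concave-envelope lemma), while the paper's makes the geometry more explicit. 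Your verification that $k_2=z_0^{-\gamma}$ with $z_0$ solving the stated equation, and that this equation has a unique positive root, is exactly the tangency condition the paper writes as $\frac{z^{1-\gamma}/(1-\gamma)+\alpha B^2}{z+B}=z^{-\gamma}$, so the two characterizations of $k_2$ match.
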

\begin{proof}
See Appendix \ref{alemma:xy}.
\end{proof}
We see from Lemma \ref{lemma:xy} that when $\alpha$ is large, the solution is presented in Case (i).  Then the solution to Problem (\ref{PY3}) can be directly obtained by the above lemma. The following lemma shows the solution to Problem (\ref{PY3}).
\begin{lemma}\label{lemma:py3}
For $ y>0 $, let the Borel measurable function $ x^*(y) $ be the optimal solution to Problem (\ref{DPY3}). Then \begin{equation}\label{Z}
	Z^*(\beta)=x^*(\beta \rho(T))
\end{equation} is the optimal solution to Problem (\ref{PY3}).
\end{lemma}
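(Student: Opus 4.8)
The plan is to solve Problem (\ref{PY3}) by ``optimizing under the integral sign'': its objective $\mathbb{E}[f(Z)-\beta\rho(T)Z]$ is the expectation of a functional that, state by state, coincides with the deterministic objective in Problem (\ref{DPY3}) evaluated at $y=\beta\rho(T)(\omega)$, and the only constraint, $Z\geqslant-B$, is itself pointwise. First I would record the defining property of $x^*$ from Lemma \ref{lemma:xy}: for every $y>0$ and every $x\geqslant-B$,
\[
f(x)-yx\;\leqslant\;f\big(x^*(y)\big)-y\,x^*(y)\;=:\;g(y).
\]
Since the pricing kernel determined by (\ref{equ:rho}) satisfies $\rho(T)>0$ a.s.\ (and $\beta>0$, the relevant case), substituting $y=\beta\rho(T)(\omega)$ and $x=Z(\omega)$ gives, for any $Z\in\mathcal{M}$ with $Z\geqslant-B$,
\[
f(Z)-\beta\rho(T)Z\;\leqslant\;f\big(x^*(\beta\rho(T))\big)-\beta\rho(T)\,x^*(\beta\rho(T))\;=\;f\big(Z^*(\beta)\big)-\beta\rho(T)Z^*(\beta)\qquad\text{a.s.},
\]
with $Z^*(\beta)$ as in (\ref{Z}). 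Admissibility of $Z^*(\beta)$ for Problem (\ref{PY3}) is immediate: $x^*$ is Borel measurable and $\rho(T)$ is $\mathcal{F}_T$-measurable, so $Z^*(\beta)\in\mathcal{M}$; and $x^*$ takes values in $[-B,\infty)$, so $Z^*(\beta)\geqslant-B$.

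It then remains to integrate the pointwise inequality, and this is where the only genuine work sits. Using the explicit form of $x^*$ in Lemma \ref{lemma:xy}, the value function $g$ is nonnegative (since $x=0$ is always feasible and $f(0)=0$) and Borel measurable, with $g(y)=\frac{\gamma}{1-\gamma}\,y^{-(1-\gamma)/\gamma}$ for $y$ near $0$ (where $x^*(y)=y^{-1/\gamma}$) and $g(y)=By-\alpha B^2$ for large $y$ (where $x^*(y)=-B$); hence $g(y)$ is dominated by a constant times $y^{-(1-\gamma)/\gamma}+y+1$. Because (\ref{equ:rho}) makes $\rho(T)$ the exponential of an affine functional of $\big(W_r(T),\int_0^T r(s)\,\rd s\big)$, it is log-normal and therefore has finite moments of every real order, positive and negative, so $\mathbb{E}\big[g(\beta\rho(T))\big]<\infty$. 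Consequently $\big(f(Z)-\beta\rho(T)Z\big)^+\leqslant g(\beta\rho(T))$ is integrable for every admissible $Z$, the expectation $\mathbb{E}[f(Z)-\beta\rho(T)Z]$ is well-defined in $[-\infty,\infty)$, and taking expectations in the displayed pointwise inequality yields
\[
\mathbb{E}\big[f(Z)-\beta\rho(T)Z\big]\;\leqslant\;\mathbb{E}\big[f(Z^*(\beta))-\beta\rho(T)Z^*(\beta)\big]\;=\;\mathbb{E}\big[g(\beta\rho(T))\big]\;<\;\infty
\]
for all admissible $Z$. Thus $Z^*(\beta)$ attains the supremum in Problem (\ref{PY3}), as claimed.

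The main obstacle is not the optimization idea, which is routine, but the integrability bookkeeping: one must rule out that the objective of (\ref{PY3}) is identically $+\infty$ and must justify passing to expectations, both of which are handled by the integrable envelope $g(\beta\rho(T))$, whose integrability rests on the (in particular negative) moment bounds for the log-normal variable $\rho(T)$. A minor auxiliary point is that $f$ is continuous at $0$ (both branches vanish there), so $f$ is Borel and the two-valued specification of $x^*$ at $y\in\{k_1,k_2\}$ is harmless---any Borel selection gives the same $g$ and the same value of the objective.
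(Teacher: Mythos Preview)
Your argument is correct and follows essentially the same route as the paper's own proof: verify that $Z^*(\beta)=x^*(\beta\rho(T))$ is admissible, invoke the pointwise optimality of $x^*(\cdot)$ from Lemma~\ref{lemma:xy} at $y=\beta\rho(T)(\omega)$, and then integrate the resulting a.s.\ inequality. The only substantive difference is that you supply the integrability bookkeeping---bounding the value function $g(y)$ by a multiple of $y^{-(1-\gamma)/\gamma}+y+1$ and using the log-normality of $\rho(T)$ to get $\mathbb{E}[g(\beta\rho(T))]<\infty$---which the paper's proof omits; this extra care is welcome and does not change the underlying approach.
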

\begin{proof}
It is easy to see that $ Z^*(\beta)=x^*(\beta \rho(T)) $ satisfies the constraint $ Z\geqslant-B $ and $ Z\in\mathcal{M} $,  we only need to show the optimality of $ x^*(\beta \rho(T)) $.  Due to the optimality of $ x^*(\cdot) $ to Problem (\ref{DPY3}), for any Z satisfying the constraint $ Z\geqslant-B $ and $ Z\in\mathcal{M} $, we have
\begin{align*}
	\mathbb{E}\left[f(Z)-\beta\rho(T)Z \right]
	=&\int_{\Omega}f(Z(\omega))-\beta\rho(T)(\omega)Z(\omega)\mathbb{P}(\rd\omega)\\
	\leqslant&\int_{\Omega}f(x^*(\beta \rho(T)(\omega)))-\beta\rho(T)(\omega)x^*(\beta \rho(T)(\omega))\mathbb{P}(\rd\omega)\\
	=&\mathbb{E}\left[f(x^*(\beta \rho(T)))-\beta\rho(T)x^*(\beta \rho(T)) \right],
\end{align*}
thus the proof follows.	
\end{proof}
\begin{remark}
Note that the solution to Problem (\ref{DPY3}) is not unique according to Lemma \ref{lemma:xy}. However, since $\mathbb{P}\left(\beta \rho(T)\in\{k_1,k_2\}\right)=0$, the optimal solution to Problem (\ref{PY3}) is  almost surely unique according to Lemma \ref{lemma:py3}.
\end{remark}
Problem (\ref{PY2}) can be solved by its Lagrange dual problem (\ref{PY3}), which has been solved in Lemma \ref{lemma:py3}. The following theorem shows the solution of Problem (\ref{PY2}). Denote
\begin{align}
	&M\triangleq(k-b-\frac12(\lambda_r^2+\lambda_S^2))T+(b-r_0)\frac{1-e^{-aT}}{a},\\
	&V\triangleq\sqrt{\left(\left(\frac{\sigma_r}{a}-\lambda_r\right)^2+
\lambda_S^2\right)T-2\frac{\sigma_r}{a}\left(\frac{\sigma_r}{a}-\lambda_r\right)
\frac{1-e^{-aT}}{a}+\frac{\sigma^2_r}{2a^2}\frac{1-e^{-2aT}}{a}}.
\end{align}
\begin{theorem}\label{TR2}
\begin{itemize}
\item[(i)]
	When $ Y_0>-B\exp\left(M+\frac{1}{2}V^2\right) $, there exists a unique $ \beta^*\in(0,+\infty) $ such that $ Z^*({\beta^*}) $ is the optimal solution of Problem (\ref{PY3}) with
\begin{equation}\label{beta}
			\mathbb{E}[\rho(T)Z^*(\beta^*)]= Y_0,
\end{equation}
and $ Z^*\triangleq Z^*({\beta^*}) $ is the optimal  solution to Problem (\ref{PY2}).
\item[(ii)]When  $ Y_0=-B\exp\left(M+\frac{1}{2}V^2\right) $, we have $ \beta^*=+\infty $  and $  Z^*(\beta^*)=x^*(\beta^* \rho(T))=-B $.
\item[(iii)] When  $ Y_0<-B\exp\left(M+\frac{1}{2}V^2\right) $, there is no solution to Problem (\ref{PY2}).
	\end{itemize}
\end{theorem}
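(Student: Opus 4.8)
The plan is to reduce the whole statement to a monotonicity/continuity analysis of the \emph{budget function}
\[
\varphi(\beta)\triangleq\mathbb{E}\bigl[\rho(T)\,x^*(\beta\rho(T))\bigr],\qquad \beta\in(0,\infty),
\]
where $x^*$ is the optimizer of the deterministic problem (\ref{DPY3}) described in Lemma \ref{lemma:xy}. By Lemma \ref{lemma:py3}, $Z^*(\beta)=x^*(\beta\rho(T))$ is optimal for the Lagrangian problem (\ref{PY3}) and $\mathbb{E}[\rho(T)Z^*(\beta)]=\varphi(\beta)$, so the theorem follows once I show that $\varphi$ is a strictly decreasing continuous bijection of $(0,\infty)$ onto $(-B\exp(M+\tfrac12V^2),+\infty)$ and then combine this with Lagrangian weak duality and the a priori bound $\mathbb{E}[\rho(T)Z]\ge-B\,\mathbb{E}[\rho(T)]$ valid for every $Z\ge-B$.

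First I would identify the law of $\rho(T)$. Integrating (\ref{equ:rho}) gives $\log\rho(T)=kT-\tfrac12(\lambda_r^2+\lambda_S^2)T-\int_0^T r(s)\,\rd s-\lambda_rW_r(T)-\lambda_SW_S(T)$, which is Gaussian because $r$ solving (\ref{equ-r}) is a Gaussian process. Substituting the explicit OU solution and computing the mean together with $\mathrm{Var}\bigl(\int_0^T r(s)\,\rd s\bigr)$ and $\mathrm{Cov}\bigl(\int_0^T r(s)\,\rd s,W_r(T)\bigr)$ identifies $\log\rho(T)\sim\mathcal N(M,V^2)$. Two consequences are essential for the rest: $\rho(T)>0$ a.s.\ with full support on $(0,\infty)$, and $\mathbb{E}[\rho(T)^p]<\infty$ for every $p\in\mathbb{R}$; in particular $\mathbb{E}[\rho(T)]=\exp(M+\tfrac12V^2)$.

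Next I would establish the properties of $\varphi$. By Lemma \ref{lemma:xy}, in both cases $x^*:(0,\infty)\to[-B,\infty)$ is non-increasing, strictly decreasing on an interval $(0,\bar y]$ (with $\bar y=2\alpha B$ in Case (i), $\bar y=k_2$ in Case (ii)) and identically $-B$ on $(\bar y,\infty)$, with $x^*(y)=y^{-1/\gamma}$ near $0$, so $|x^*(y)|\le B+y^{-1/\gamma}$ throughout. For fixed $\omega$, $\beta\mapsto\rho(T)(\omega)x^*(\beta\rho(T)(\omega))$ is non-increasing and strictly so on $\{\beta\rho(T)(\omega)\le\bar y\}$; since $\rho(T)$ has full support, $\mathbb{P}(\beta\rho(T)\le\bar y)>0$ for every $\beta$, which yields strict monotonicity of $\varphi$. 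Continuity on $(0,\infty)$ follows by dominated convergence with the integrable (by the previous step) bound $B\rho(T)+\beta_0^{-1/\gamma}\rho(T)^{1-1/\gamma}$ on compact subintervals $[\beta_0,\beta_1]$; the downward jumps of $x^*$ at $k_1$ (resp.\ $k_2$) are immaterial because $\mathbb{P}(\beta\rho(T)\in\{k_1,k_2\})=0$. Monotone convergence as $\beta\downarrow0$ gives $\varphi(\beta)\to+\infty$ (the integrand $\rho(T)^{1-1/\gamma}\beta^{-1/\gamma}\uparrow+\infty$), while as $\beta\uparrow\infty$ one has $x^*(\beta\rho(T))\to-B$ pointwise and hence $\varphi(\beta)\downarrow-B\,\mathbb{E}[\rho(T)]=-B\exp(M+\tfrac12V^2)$, a value lying strictly below $\varphi(\beta)$ for every finite $\beta$ (since $\rho(T)x^*(\beta\rho(T))>-B\rho(T)$ on the positive-probability event $\{\beta\rho(T)<\bar y\}$). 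Thus $\varphi$ maps $(0,\infty)$ bijectively onto $(-B\exp(M+\tfrac12V^2),+\infty)$.

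Finally I would assemble the three cases. If $Y_0>-B\exp(M+\tfrac12V^2)$, there is a unique $\beta^*\in(0,\infty)$ with $\varphi(\beta^*)=Y_0$, i.e.\ (\ref{beta}) holds, and $Z^*(\beta^*)$ solves (\ref{PY3}) by Lemma \ref{lemma:py3}; for any $Z\in\mathcal M$ with $Z\ge-B$ and $\mathbb{E}[\rho(T)Z]\le Y_0$, weak duality
\[
\mathbb{E}[f(Z)]\le\mathbb{E}[f(Z)-\beta^*\rho(T)Z]+\beta^*Y_0\le\mathbb{E}\bigl[f(Z^*(\beta^*))-\beta^*\rho(T)Z^*(\beta^*)\bigr]+\beta^*Y_0=\mathbb{E}[f(Z^*(\beta^*))]
\]
(using $\beta^*\ge0$, optimality of $Z^*(\beta^*)$ for (\ref{PY3}), and $\mathbb{E}[\rho(T)Z^*(\beta^*)]=Y_0$) shows that $Z^*=Z^*(\beta^*)$ is optimal for (\ref{PY2}), proving (i). If $Y_0=-B\exp(M+\tfrac12V^2)$, every feasible $Z$ satisfies $\mathbb{E}[\rho(T)Z]\ge-B\,\mathbb{E}[\rho(T)]=Y_0\ge\mathbb{E}[\rho(T)Z]$, forcing $\rho(T)(Z+B)=0$ a.s.\ and hence $Z=-B$, so $Z^*=-B$ is the unique feasible (hence optimal) point, which coincides with $\lim_{\beta\to\infty}x^*(\beta\rho(T))$; we record this as $\beta^*=+\infty$, proving (ii). If $Y_0<-B\exp(M+\tfrac12V^2)$, every $Z\ge-B$ has $\mathbb{E}[\rho(T)Z]\ge-B\exp(M+\tfrac12V^2)>Y_0$, so (\ref{PY2}) is infeasible and admits no solution, proving (iii). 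The step I expect to be the main obstacle is pinning down the \emph{exact} range of $\varphi$ in the third paragraph — that the infimum $-B\exp(M+\tfrac12V^2)$ is approached but never attained for finite $\beta$ and that $\varphi\to+\infty$ as $\beta\downarrow0$ — while correctly handling the piecewise definition and the downward jumps of $x^*$; this is precisely where the full-support and all-moments-finite properties of the lognormal $\rho(T)$ are indispensable.
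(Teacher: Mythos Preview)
Your proof is correct and follows essentially the same approach as the paper's: analyze the budget function $\varphi(\beta)=\mathbb{E}[\rho(T)x^*(\beta\rho(T))]$ for continuity, monotonicity and range, then close with the weak-duality chain $\mathbb{E}[f(Z)]\le\mathbb{E}[f(Z)-\beta^*\rho(T)Z]+\beta^*Y_0\le\mathbb{E}[f(Z^*(\beta^*))]$. Your treatment is in fact more complete than the paper's---you explicitly identify the lognormal law of $\rho(T)$, justify continuity via dominated convergence and the null set $\{\beta\rho(T)\in\{k_1,k_2\}\}$, and dispose of case~(ii) through the feasibility squeeze forcing $Z=-B$---whereas the paper simply asserts these properties ``based on the structure of $x^*(\cdot)$''.
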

\begin{proof}
Define a function $ g(\beta)\triangleq\mathbb{E}[\beta\rho(T)x^*(\beta \rho(T))] $. Based on the structure of $ x^*(\cdot) $, we see that $ g(\cdot) $ is continuous and monotonic. Besides,  $\displaystyle \lim_{\beta\to\infty}g(\beta)=-B\exp\left(M+\frac{1}{2}V^2\right) $ and $\displaystyle \lim_{\beta\to0}g(\beta)=+\infty $. As such, if $ Y_0\geqslant-B\exp\left(M+\frac{1}{2}V^2\right) $, there exists $ \beta^*\in(0,+\infty] $ such that $ Z^*({\beta^*}) $ is the optimal solution of Problem (\ref{PY3}). If  $ Y_0<-B\exp\left(M+\frac{1}{2}V^2\right) $, the constraints $ Z\geqslant-B $ and $ \mathbb{E}[\rho(T)Z]\leqslant Y_0 $ in Problem (\ref{PY2}) are incompatible and there will be no solution to Problem (\ref{PY2}).
	
If there exists $ \beta^*\in(0, +\infty] $ such that $ Z^*({\beta^*}) $ is the optimal solution of Problem (\ref{PY3}) satisfying Eq.~(\ref{beta}), then for any $ Z $ satisfying the constraints in Problem (\ref{PY2}), we have
\begin{eqnarray*}
\mathbb{E}[f(Z)]&\leqslant&\mathbb{E}[f(Z)]+\beta^*( Y_0-\mathbb{E}[\rho(T)Z])\\
			&=&\mathbb{E}[f(Z)-\beta^*\rho(T)Z]+\beta^*Y_0\\
			&\leqslant &\mathbb{E}[f(Z^*({\beta^*}))-\beta^*\rho(T)Z^*({\beta^*})]+\beta^*Y_0\\
			&=&\mathbb{E}[f(Z^*({\beta^*}))].
\end{eqnarray*}
		Thus, $ Z^*\triangleq Z^*({\beta^*}) $ is the optimal  solution to Problem (\ref{PY2}).
\end{proof}
 In Case (i) of Theorem \ref{TR2}, the optimal terminal wealth is obtained by $Y^*(T)=x^*(\beta^* \rho(T))$. By Lemmas \ref{lemma:xy} and \ref{lemma:py3}, $Y^*(T)=x^*(\beta^* \rho(T))$ has two forms in Case (i) corresponding to $k_1<2\alpha B$ and  $k_1\geqslant 2\alpha B$. As $ k_1= (\frac{4\alpha\gamma}{1-\gamma})^{\frac{\gamma}{1+\gamma}} $, equation $(\frac{4\alpha\gamma}{1-\gamma})^{\frac{\gamma}{1+\gamma}}=2\alpha B$ admits a solution $\alpha^*$. We can show that when $\alpha>\alpha^*$ ($\alpha<\alpha^*$), $k_1<2\alpha B$ ($k_1>2\alpha B$) {holds}. As such, when $k_1<2\alpha B$ ($k_1\geqslant 2\alpha B$), $\alpha$ is relatively large (small) and the manager has low (high) tolerance towards solvency risk.  In the case of high tolerance level towards solvency risk, the manager expects a constant $-B$  of terminal wealth in the underfunded region to achieve high expected utility in the overfunded region. The optimal terminal wealth in the overfunded region is the same as the results with optimization rule (\ref{op2}) and the optimal terminal wealth has a two-region form. In the case of low tolerance level towards solvency risk, the manager should take the solvency risk into account. The optimal terminal wealth is divided into three situations: result with optimization rule (\ref{op2}), result with optimization rule (\ref{op1}), and the negative bound. As such, when the manager has a low tolerance level towards solvency risk, the manager will not accept too large a solvency risk and then the optimal terminal wealth has a three-region form.

 In Case (ii) of Theorem \ref{TR2},  the negative bound $-B$ is high for the manager. A  simple calculation shows that $\exp\left(M+\frac{1}{2}V^2\right) $ is the price of a zero-coupon bond with maturity $T$ in a financial system with pricing kernel $\rho(t)$. We see that when $Y_0=-B\exp\left(M+\frac{1}{2}V^2\right) $, the lower bound on the terminal fund level is so large that replicating it from the initial fund level is the only feasible solution. In Case (iii) of Theorem \ref{TR2}, the lower bound is {too} large and the terminal wealth can not be maintained above $-B$. As such, the optimization problem is not well-posed and the problem has no solution.

After obtaining the optimal solution to the martingale problem (\ref{PY2}), we need to obtain the solution to the original problem (\ref{PY}). Theorem \ref{TR2} shows the optimal terminal wealth of Problem (\ref{PY}). Based on the replication {procedure in Lemma} \ref{TR1}, we  obtain the optimal wealth process at time $t$ and optimal strategies for Problem (\ref{PY}). We introduce the following notations first.
\begin{eqnarray*}
&&\kappa(t)\triangleq\exp\left(\frac{1-e^{-a(T-t)}}{a}(b-r(t))\right),\ \ \tilde{M}(t)\triangleq(k-b-\frac12(\lambda_r^2+\lambda_S^2))(T-t),\\
&&\Upsilon(x,\beta)\triangleq\frac{\ln(x)-\ln(\beta)-M}{V},\ \ \Gamma(x,\beta,t)\triangleq\frac{\ln(x)-\ln(\beta\rho(t)\kappa(t))-\tilde{M}(t)}{\tilde{V}(t)},\\
&&\tilde{V}(t)\triangleq\sqrt{\left(\left(\frac{\sigma_r}{a}-
\lambda_r\right)^2+\lambda_S^2\right)(T-t)-2\frac{\sigma_r}{a}\left(\frac{\sigma_r}{a}-
\lambda_r\right)\frac{1-e^{-a(T-t)}}{a}+\frac{\sigma^2_r}{2a^2}\frac{1-e^{-2a(T-t)}}{a}}.
\end{eqnarray*}
Meanwhile, denote $\phi(\cdot)$ and $\Phi(\cdot)$ as the probability density function and cumulative distribution function of standard normal distribution, respectively.
\begin{theorem}\label{th:y}
There are four cases for the optimal auxiliary process $Y^*$ and optimal strategies $\pi^*=(\pi_1^*,\pi_2^*)$ as follows:
\begin{itemize}
\item[(i)]{\bf Low tolerance towards solvency risk}
\\If $k_1<2\alpha B$ and $Y_0>-B\exp\left(M+\frac{1}{2}V^2\right)$, the optimal auxiliary process is
\!\!\!\!\begin{equation*}
\begin{aligned}	Y^*(t)&=\exp\left(\frac{\gamma-1}{\gamma}\tilde{M}(t)\!+\!\frac12\left(\frac{\gamma-1}{\gamma}\tilde{V}(t)\right)^2\right){\beta^*}^{-\frac{1}{\gamma}}\rho(t)^{-\frac{1}{\gamma}}\kappa(t)^{1-\frac{1}{\gamma}}\Phi\left(\Gamma(k_1,{\beta^*},t)-\frac{\gamma-1}{\gamma}\tilde{V}(t)\right)\\
	&-\frac{{\beta^*}}{2\alpha}\exp(2(\tilde{M}(t)+\tilde{V}(t)^2))\rho(t)\kappa(t)^{2}(\Phi(\Gamma(2\alpha B,{\beta^*},t)-2\tilde{V}(t))-\Phi(\Gamma(k_1,{\beta^*},t)-2\tilde{V}(t)))\\&-B\kappa(t)\exp(\tilde{M}(t)+\tilde{V}(t)^2)(1-\Phi(\Gamma(2\alpha B,{\beta^*},t)-\tilde{V}(t))),
\end{aligned}
\end{equation*}
where $\beta^*$ satisfies the budget constraint $\mathbb{E}[\rho(T)Y^*(T)]= Y_0$. The optimal strategies are
\!\!\!
\begin{eqnarray*}
\begin{aligned}		
\!\!\pi_1^*(t)&=\!\!\Bigg\{\frac{\gamma\!-\!1}{\gamma}\exp\left(\frac{\gamma\!-\!1}
{\gamma}\tilde{M}(t)\!+\!\frac12\left(\frac{\gamma\!-\!1}{\gamma}\tilde{V}(t)\right)^2\right)
{\beta^*}^{-\frac{1}{\gamma}}\rho(t)^{-\frac{1}{\gamma}}\kappa(t)^{1\!-\!\frac{1}{\gamma}}
\Phi\left(\Gamma(k_1,{\beta^*},t)\!-\!\frac{\gamma\!-\!1}{\gamma}\tilde{V}(t)\right)\\		&\!-\!\frac{1}{\tilde{V}(t)}\exp\left(\frac{\gamma\!-\!1}{\gamma}\tilde{M}(t)\!+\!
\frac12\left(\frac{\gamma\!-\!1}{\gamma}\tilde{V}(t)\right)^2\right){\beta^*}^{-\frac{1}
{\gamma}}\rho(t)^{-\frac{1}{\gamma}}\kappa(t)^{1\!-\!\frac{1}{\gamma}}
\phi\left(\Gamma(k_1,{\beta^*},t)\!-\!\frac{\gamma-1}{\gamma}\tilde{V}(t)\right)\\
&-\frac{{\beta^*}}{\alpha}\exp(2(\tilde{M}(t)+\tilde{V}(t)^2))\rho(t)\kappa(t)^{2}(\Phi(\Gamma(2\alpha B,{\beta^*},t)-2\tilde{V}(t))-\Phi(\Gamma(k_1,{\beta^*},t)-2\tilde{V}(t)))
\end{aligned}
\end{eqnarray*}
\!\!\!
\begin{eqnarray*}
\begin{aligned}
&+\frac{{\beta^*}}{2\alpha\tilde{V}(t)}\exp(2(\tilde{M}(t)+
\tilde{V}(t)^2))\rho(t)\kappa(t)^{2}(\phi(\Gamma(2\alpha B,{\beta^*},t)-2\tilde{V}(t))-\phi(\Gamma(k_1,{\beta^*},t)-2\tilde{V}(t)))\\
&-B\kappa(t)\exp(\tilde{M}(t)+\tilde{V}(t)^2)(1-\Phi(\Gamma(2\alpha B,{\beta^*},t)-\tilde{V}(t)))\\
&-\frac{1}{\tilde{V}(t)}B\kappa(t)\exp(\tilde{M}(t)+\tilde{V}(t)^2)\phi(\Gamma(2\alpha B,{\beta^*},t)-\tilde{V}(t))\Bigg\}(\sigma_rA(t,T)-\lambda_r)+\lambda_rY^*(t)
\end{aligned}
\end{eqnarray*}
and
\begin{equation*}
\begin{aligned}		\!\!\!\!\pi_2^*(t)&=\!\!\Bigg\{\frac{\gamma\!-\!1}{\gamma}\exp\left
(\frac{\gamma\!-\!1}{\gamma}\tilde{M}(t)\!+\!\frac12\left(\frac{\gamma\!-\!1}{\gamma}\tilde{V}(t)\right)^2\right)
{\beta^*}^{-\frac{1}{\gamma}}\rho(t)^{-\frac{1}{\gamma}}\kappa(t)^{1\!-\!\frac{1}{\gamma}}
\Phi\left(\Gamma(k_1,{\beta^*},t)-\frac{\gamma\!-\!1}{\gamma}\tilde{V}(t)\right)\\
&-\frac{1}{\tilde{V}(t)}\exp\left(\frac{\gamma-1}{\gamma}\tilde{M}(t)+\frac12\left(\frac{\gamma-1}{\gamma}\tilde{V}(t)\right)^2\right){\beta^*}^{-\frac{1}{\gamma}}\rho(t)^{-\frac{1}{\gamma}}\kappa(t)^{1-\frac{1}{\gamma}}\phi\left(\Gamma(k_1,{\beta^*},t)-\frac{\gamma-1}{\gamma}\tilde{V}(t)\right)\\
&-\frac{{\beta^*}}{\alpha}\exp(2(\tilde{M}(t)+\tilde{V}(t)^2))\rho(t)\kappa(t)^{2}(\Phi(\Gamma(2\alpha B,{\beta^*},t)-2\tilde{V}(t))-\Phi(\Gamma(k_1,{\beta^*},t)-2\tilde{V}(t)))\\&+\frac{{\beta^*}}{2\alpha\tilde{V}(t)}\exp(2(\tilde{M}(t)+\tilde{V}(t)^2))\rho(t)\kappa(t)^{2}(\phi(\Gamma(2\alpha B,{\beta^*},t)-2\tilde{V}(t))-\phi(\Gamma(k_1,{\beta^*},t)-2\tilde{V}(t)))\\&-B\kappa(t)\exp(\tilde{M}(t)+\tilde{V}(t)^2)(1-\Phi(\Gamma(2\alpha B,{\beta^*},t)-\tilde{V}(t)))\\&-\frac{1}{\tilde{V}(t)}B\kappa(t)\exp(\tilde{M}(t)+\tilde{V}(t)^2)\phi(\Gamma(2\alpha B,{\beta^*},t)-\tilde{V}(t))\Bigg\}(-\lambda_S)+\lambda_SY^*(t).
	\end{aligned}
\end{equation*}
\item[(ii)]{\bf High tolerance towards solvency risk}
\\If $k_1\geqslant2\alpha B $ and $Y_0>-B\exp\left(M+\frac{1}{2}V^2\right)$,  the optimal auxiliary process is
\begin{equation*}
\begin{aligned}		Y^*(t)&=\exp\left(\frac{\gamma-1}{\gamma}\tilde{M}(t)+\frac12\left(\frac{\gamma-1}{\gamma}\tilde{V}(t)\right)^2\right){\beta^*}^{-\frac{1}{\gamma}}\rho(t)^{-\frac{1}{\gamma}}\kappa(t)^{1-\frac{1}{\gamma}}\Phi\left(\Gamma(k_2,{\beta^*},t)-\frac{\gamma-1}{\gamma}\tilde{V}(t)\right)\\
&-B\kappa(t)\exp(\tilde{M}+\tilde{V}^2)(1-\Phi(\Gamma(k_2,{\beta^*},t)-\tilde{V})),
	\end{aligned}
\end{equation*}
where $\beta^*$ satisfies the budget constraint $\mathbb{E}[\rho(T)Y^*(T)]= Y_0$. The optimal strategies are
\!\!\!\begin{equation*}
\begin{aligned}
\!\!\!\!\!\!\pi_1^*(t)&=\!\!\Bigg\{\frac{\gamma\!-\!1}{\gamma}\exp\left(\frac{\gamma\!-\!1}{\gamma}\tilde{M}(t)
\!+\!\frac12\left(\frac{\gamma\!-\!1}{\gamma}\tilde{V}(t)\right)^2\right)
{\beta^*}^{-\frac{1}{\gamma}}\rho(t)^{-\frac{1}{\gamma}}\kappa(t)^{1\!-\!\frac{1}{\gamma}}
\Phi\left(\Gamma(k_2,{\beta^*},t)-\frac{\gamma\!-\!1}{\gamma}\tilde{V}(t)\right)\\
&-\frac{1}{\tilde{V}}\exp\left(\frac{\gamma-1}{\gamma}\tilde{M}(t)+\frac12\left(\frac{\gamma-1}{\gamma}\tilde{V}(t)\right)^2\right){\beta^*}^{-\frac{1}{\gamma}}\rho(t)^{-\frac{1}{\gamma}}\kappa(t)^{1-\frac{1}{\gamma}}\phi\left(\Gamma(k_2,{\beta^*},t)-\frac{\gamma-1}{\gamma}\tilde{V}(t)\right)\\
&-B\kappa(t)\exp(\tilde{M}+\tilde{V}^2)(1-\Phi(\Gamma(k_2,{\beta^*},t)-\tilde{V}))\\
&-\frac{1}{\tilde{V}}B\kappa(t)\exp(\tilde{M}+\tilde{V}^2)\phi(\Gamma(k_2,{\beta^*},t)
-\tilde{V})\Bigg\}(\sigma_rA(t,T)-\lambda_r)+\lambda_rY^*(t)
\end{aligned}
\end{equation*}
and
\begin{equation*}
\begin{aligned}
\!\!\!\!\pi_2^*(t)\!\!&=\!\!\Bigg\{\frac{\gamma\!-\!1}{\gamma}
\exp\left(\frac{\gamma\!-\!1}{\gamma}\tilde{M}(t)\!+\!
\frac12\left(\frac{\gamma\!-\!1}{\gamma}\tilde{V}(t)\right)^2\right)
{\beta^*}^{-\frac{1}{\gamma}}\rho(t)^{-\frac{1}{\gamma}}
\kappa(t)^{1\!-\!\frac{1}{\gamma}}\Phi\left(\Gamma(k_2,{\beta^*},t)\!-\!\frac{\gamma-1}{\gamma}\tilde{V}(t)\right)\\
&-\frac{1}{\tilde{V}}\exp\left(\frac{\gamma\!-\!1}{\gamma}\tilde{M}(t)\!+\!
\frac12\left(\frac{\gamma\!-\!1}{\gamma}\tilde{V}(t)\right)^2\right)
{\beta^*}^{-\frac{1}{\gamma}}\rho(t)^{-\frac{1}{\gamma}}\kappa(t)^{1\!-\!
\frac{1}{\gamma}}\phi\left(\Gamma(k_2,{\beta^*},t)-\frac{\gamma-1}{\gamma}\tilde{V}(t)\right)\\
&-B\kappa(t)\exp(\tilde{M}+\tilde{V}^2)(1-\Phi(\Gamma(k_2,{\beta^*},t)-\tilde{V}))\\&-\frac{1}{\tilde{V}}B\kappa(t)\exp(\tilde{M}+\tilde{V}^2)\phi(\Gamma(k_2,{\beta^*},t)-\tilde{V})\Bigg\}(-\lambda_S)+\lambda_SY^*(t).
	\end{aligned}
\end{equation*}
\item[(iii)]{\bf A specific lower bound}
\\When  $ Y_0=-B\exp\left(M+\frac{1}{2}V^2\right) $, the optimal auxiliary process is \begin{equation*}
	Y^*(t)=-B\kappa(t)\exp(\tilde{M}(t)+\frac12\tilde{V}(t)^2).
\end{equation*}
The optimal strategies are
\begin{equation*}
	\pi_1^*(t)=-B\kappa(t)\exp(\tilde{M}(t)+\frac12\tilde{V}(t)^2)\sigma_rA(t,T)
\end{equation*}and \begin{equation*}
	\pi_2^*(t)=0.
\end{equation*}
\item[(iv)]{\bf High lower bound}
\\When  $ Y_0<-B\exp\left(M+\frac{1}{2}V^2\right) $,  Problem (\ref{PY}) admits no solution.
\end{itemize}
\end{theorem}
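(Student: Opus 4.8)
The plan is to combine the static martingale-problem solution of Theorem~\ref{TR2} with the representation link of Lemma~\ref{TR1}, and then evaluate everything explicitly by Gaussian (log-normal) computations. First, by Lemma~\ref{TR1} and Theorem~\ref{TR2}, whenever $Y_0\geqslant -B\exp(M+\tfrac12V^2)$ the optimal terminal surplus of Problem~(\ref{PY}) is $Y^*(T)=Z^*=x^*(\beta^*\rho(T))$, with $x^*(\cdot)$ as in Lemma~\ref{lemma:xy} (the relevant sub-case being $k_1<2\alpha B$ or $k_1\geqslant 2\alpha B$), with $\beta^*\in(0,\infty)$ the unique multiplier pinned down by the budget identity~(\ref{beta}) when $Y_0>-B\exp(M+\tfrac12V^2)$, and with $\beta^*=+\infty$, $Z^*\equiv -B$ when $Y_0=-B\exp(M+\tfrac12V^2)$; case~(iv) then follows from Theorem~\ref{TR2}(iii) together with Lemma~\ref{TR1}, so only cases~(i)--(iii) require work. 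Since $\rho(t)Y^*(t)=\mathbb{E}[\rho(T)Z^*\mid\mathcal F_t]$ is a martingale (this is exactly how $\pi^*$ is recovered in the proof of Lemma~\ref{TR1}), I would write $Y^*(t)=\rho(t)^{-1}\mathbb{E}[\rho(T)\,x^*(\beta^*\rho(T))\mid\mathcal F_t]$ and compute this conditional expectation in closed form.

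The key preliminary is the conditional law of the pricing kernel. From~(\ref{equ:rho}) and the explicit solution of the OU equation~(\ref{equ-r}), $\ln(\rho(T)/\rho(t))=-\int_t^T(r(s)-k)\,\mathrm ds-\lambda_r(W_r(T)-W_r(t))-\lambda_S(W_S(T)-W_S(t))-\tfrac12(\lambda_r^2+\lambda_S^2)(T-t)$; expressing $r(s)$ through $r(t)$ and a Wiener integral, using Fubini to reduce $\int_t^T\!\!\int_t^s$ to a single It\^o integral, and computing the resulting conditional mean, variance and the $r$-$W_r$ covariance (the $W_S$-part being independent) gives that, conditionally on $\mathcal F_t$, $\ln(\rho(T)/\rho(t))\sim\mathcal N\!\big(\tilde M(t)+\ln\kappa(t),\,\tilde V(t)^2\big)$. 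Then, with $L:=\ln(\rho(T)/\rho(t))$ so that $\{\beta^*\rho(T)\lessgtr c\}=\{L\lessgtr \ln c-\ln\beta^*-\ln\rho(t)\}$, I would split $x^*(\beta^*\rho(T))$ into its branches (three branches, break points $k_1$ and $2\alpha B$, in case~(i); two branches, break point $k_2$, in case~(ii)), multiply by $\rho(T)=\rho(t)e^{L}$, and evaluate each resulting term via the elementary identity $\mathbb{E}[e^{pL}\mathbf 1_{\{L<c\}}]=e^{pm+\frac12p^2v^2}\Phi\!\big(\tfrac{c-m-pv^2}{v}\big)$ with $m=\tilde M(t)+\ln\kappa(t)$, $v=\tilde V(t)$. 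The arguments that emerge are precisely $\Gamma(k_1,\beta^*,t)-\tfrac{\gamma-1}{\gamma}\tilde V(t)$, $\Gamma(2\alpha B,\beta^*,t)-2\tilde V(t)$, $\Gamma(2\alpha B,\beta^*,t)-\tilde V(t)$ and their case-(ii) analogues; dividing by $\rho(t)$ and collecting terms yields the stated $Y^*(t)$ in cases~(i)--(ii), while case~(iii) collapses to $Y^*(t)=-B\rho(t)^{-1}\mathbb{E}[\rho(T)\mid\mathcal F_t]=-B\kappa(t)\exp(\tilde M(t)+\tfrac12\tilde V(t)^2)$, i.e.\ $-B$ times the $\rho$-price of a unit zero-coupon bond maturing at $T$. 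In each case $Y^*(0)=Y_0$ must be checked: in (i)--(ii) this is exactly the defining equation~(\ref{beta}) of $\beta^*$, and in (iii) it is the identity $\exp(M+\tfrac12V^2)=\kappa(0)\exp(\tilde M(0)+\tfrac12\tilde V(0)^2)$.

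For the optimal strategies I would use that every summand of the $Y^*(t)$ found above has the shape $\kappa(t)\,g(\rho(t)\kappa(t),t)$, with $\kappa(t)=\exp(A(t,T)(b-r(t)))$; consequently $Y^*(t)$, viewed as a function of $(t,\rho(t),\kappa(t))$, satisfies $\kappa\,\partial_\kappa Y^*=\rho\,\partial_\rho Y^*+Y^*$, while the chain rule $\partial_r\kappa(t)=-A(t,T)\kappa(t)$ gives $\partial_r Y^*=-A(t,T)\,\kappa\,\partial_\kappa Y^*$. Applying It\^o's formula to $Y^*(t)$ with $\mathrm d\rho(t)=\rho(t)[-(r(t)-k)\,\mathrm dt-\lambda_r\,\mathrm dW_r(t)-\lambda_S\,\mathrm dW_S(t)]$ and $\mathrm dr(t)=a(b-r(t))\,\mathrm dt-\sigma_r\,\mathrm dW_r(t)$, the $\mathrm dW_r$- and $\mathrm dW_S$-coefficients of $\mathrm dY^*(t)$ come out, using the two identities, as $(\sigma_rA(t,T)-\lambda_r)\,\kappa\,\partial_\kappa Y^*+\lambda_rY^*$ and $-\lambda_S\,\kappa\,\partial_\kappa Y^*+\lambda_SY^*$; matching with~(\ref{SDE-Y}) --- equivalently, using $\pi_1^*=\rho^{-1}A+\lambda_rY^*$, $\pi_2^*=\rho^{-1}B+\lambda_SY^*$ as in the proof of Lemma~\ref{TR1} --- identifies $\pi_1^*$ and $\pi_2^*$ with common bracket $\kappa\,\partial_\kappa Y^*=\rho\,\partial_\rho Y^*+Y^*$, which I would compute term-by-term: differentiating the prefactors produces the $\Phi$-terms, and differentiating the factors $\Phi(\Gamma(\cdot))$ via $\partial_\rho\Gamma(x,\beta,t)=-1/(\rho\tilde V(t))$ produces the $\phi$-terms, reproducing the stated formulas. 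The drift of $\mathrm dY^*(t)$ is then forced to equal $(r(t)-k)Y^*(t)+\lambda_r\pi_1^*(t)+\lambda_S\pi_2^*(t)$ because $\rho(t)Y^*(t)$ is a martingale, so $Y^*$ indeed solves~(\ref{SDE-Y}) under $\pi^*$ and $\pi^*\in\Pi$ by Lemma~\ref{TR1}; in case~(iii), $Y^*$ has no $\rho$-dependence, so $\kappa\,\partial_\kappa Y^*=Y^*$, giving $\pi_1^*=\sigma_rA(t,T)Y^*$ and $\pi_2^*=0$.

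The main obstacle is computational rather than conceptual: the second step hinges on getting the exact conditional mean, variance and $r$-$W_r$ covariance for the OU integral right --- the one place where sign and coefficient slips are easy --- and the third step requires careful bookkeeping of the many $\Phi$ and $\phi$ terms, where it is essential to note that the $\phi$-terms at the interior break point $k_1$ do \emph{not} cancel, because $x^*$ is discontinuous there (unlike at a smooth-pasting point). The only genuine subtlety of principle is the boundary case~(iii): one must treat $\beta^*=+\infty$ as the degenerate instance of Theorem~\ref{TR2}, so that $Z^*\equiv -B$ and $Y^*(t)$ is exactly the replicating value (under the kernel $\rho$) of the constant payoff $-B$ starting from $Y_0$, which explains why the bond-price formula and the pure interest-rate hedge $\pi_1^*=\sigma_rA(t,T)Y^*$, $\pi_2^*=0$ appear.
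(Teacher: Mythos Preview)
Your proposal is correct and follows exactly the route the paper sketches: use Theorem~\ref{TR2} to fix $Y^*(T)=x^*(\beta^*\rho(T))$, set $Y^*(t)=\rho(t)^{-1}\mathbb{E}[\rho(T)Y^*(T)\mid\mathcal F_t]$, evaluate the conditional expectation via the log-normal law of $\rho(T)/\rho(t)$, and then read off $\pi^*$ by matching the diffusion coefficients of $\mathrm dY^*(t)$ with Eq.~(\ref{SDE-Y}). The paper's own proof is only the two-line outline of this procedure; your homogeneity observation $Y^*(t)=\kappa(t)\,g(\rho(t)\kappa(t),t)$ with $\kappa\,\partial_\kappa Y^*=\rho\,\partial_\rho Y^*+Y^*$ is a tidy bookkeeping device that the paper does not spell out but which makes the identification of the common bracket in $\pi_1^*,\pi_2^*$ transparent.
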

\begin{proof}
The calculation is tedious and we only show the procedure to calculate the optimal wealth and strategies. The procedure has also been shown in the proof of Lemma \ref{TR1}. Theorem \ref{TR2} shows the results of the optimal terminal wealth $Y^*(T)$ in different cases. The optimal wealth  $Y^*(t)$ at time $t$ can be directly obtained by taking the conditional expectation
\[Y^*(t)=\rho(t)^{-1}\mathbb{E}[\rho(T)Y^*(T)|\mathcal{F}_t].\]
After deriving the explicit form of $Y^*(t)$, the differential form of $Y^*(t)$ is   also easily obtained. Then comparing the diffusion terms with {that in} the original equation (\ref{SDE-Y}), the optimal investment strategies can be obtained.
\end{proof}
We see from a specific lower bound case that the manager should purchase a zero-coupon bond with a maturity date $T$ and avoid allocation in the stock.  As stated before, {solving} the original problem (\ref{PX}) is equivalent to solving Problem (\ref{PY}). After obtaining the optimal solution to Problem (\ref{PY}), Problem (\ref{PX}) is directly solved.
\begin{theorem}\label{th:x}
The optimal wealth process and investment strategies for Problem (\ref{PX}) are
\[X^*(t)=Y^*(t)-\lambda_rH(t,r(t),P(t),T)-
\left(\lambda_r\sigma_{P_1}+\lambda_S\sigma_{P_2}-\delta\right)\tilde{H}(t,r(t),P(t),T)\]
and
\begin{equation}\label{U1}
	\left\{\begin{aligned}
		u_S^*(t)=\frac{1}{\sigma_2}\bigg\{\sigma_{P_2}&\int_t^T\lambda_rg(t,r(t),s)+\left(\lambda_r\sigma_{P_1}+\lambda_S\sigma_{P_2}-\delta\right)\tilde{g}(t,r(t),s)\rd s\\&\qquad\qquad\qquad\qquad\qquad\qquad\qquad\quad+\pi_2^*(t)+\sigma_{P_2}P(t)f_0(r(t))\bigg\},\\
		u_B^*(t)=\frac{1}{h(K)}\Bigg\{&\pi_1^*(t)+P(t)f_2(r(t))+\sigma_{P_1}P(t)f_0(r(t))\\&-\int_t^T\left(\lambda_r\sigma_{P_1}+\lambda_S\sigma_{P_2}-\delta\right)(\sigma_{P_1}\tilde{g}(t,r(t),s)-\sigma_r\tilde{g}_r(t,r(t),s))\\&\qquad\qquad\qquad\qquad\qquad+\lambda_r(\sigma_{P_1}g(t,r(t),s)-\sigma_rg_r(t,r(t),s))\rd s\\&-\frac{\sigma_1}{\sigma_2}\bigg\{\sigma_{P_2}\int_t^T\lambda_rg(t,r(t),s)+\left(\lambda_r\sigma_{P_1}+\lambda_S\sigma_{P_2}-\delta\right)\tilde{g}(t,r(t),s)\rd s\\&\qquad\qquad\qquad\qquad\qquad\qquad\qquad\qquad+\pi_2^*(t)+\sigma_{P_2}P(t)f_0(r(t))\bigg\}
		\Bigg\},
	\end{aligned}\right.
\end{equation}
where $Y^*(t)$, $\pi_1^*(t)$ and $\pi^*_2(t)$ are shown in Theorem \ref{th:y}.
\end{theorem}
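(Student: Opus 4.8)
The plan is to simply unwind the definition of the auxiliary process $Y$ and the strategy transformation $\pi$, both of which were introduced in the preceding subsection. Recall that $Y(t)=X(t)+\lambda_rH(t,r(t),P(t),T)+(\lambda_r\sigma_{P_1}+\lambda_S\sigma_{P_2}-\delta)\tilde H(t,r(t),P(t),T)$. Since this relation holds pathwise for every admissible control, it holds in particular along the optimal trajectory, so the first display of the theorem follows immediately by solving for $X^*(t)=Y^*(t)-\lambda_rH(t,r(t),P(t),T)-(\lambda_r\sigma_{P_1}+\lambda_S\sigma_{P_2}-\delta)\tilde H(t,r(t),P(t),T)$, with $Y^*(t)$ supplied by Theorem \ref{th:y}. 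No optimization argument is needed here; it is a deterministic change of variables that was already shown to be a bijection on the admissible sets.

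For the strategies, I would start from the defining system (\ref{piu}), which expresses $(\pi_1,\pi_2)$ as linear functions of $(u_B,u_S)$ plus terms depending only on $(t,r(t),P(t))$ and the replicating integrands $g,g_r,\tilde g,\tilde g_r$. Concretely, the second equation of (\ref{piu}) reads $\pi_2(t)=u_S(t)\sigma_2-\sigma_{P_2}P(t)f_0(r(t))+\sigma_{P_2}\int_t^T\lambda_rg(t,r(t),s)+(\lambda_r\sigma_{P_1}+\lambda_S\sigma_{P_2}-\delta)\tilde g(t,r(t),s)\,\rd s$; solving for $u_S(t)$ and substituting $\pi_2^*(t)$ gives the formula for $u_S^*(t)$ in (\ref{U1}). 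Then the first equation of (\ref{piu}) expresses $u_B(t)h(K)+u_S(t)\sigma_1$ in terms of $\pi_1(t)$ and known quantities; isolating $u_B(t)h(K)$, dividing by $h(K)$, and inserting the already-derived expression for $u_S^*(t)$ (which is where the $-\frac{\sigma_1}{\sigma_2}\{\cdots\}$ block originates) yields $u_B^*(t)$. Admissibility of $(u_B^*,u_S^*)$ is inherited from admissibility of $\pi^*$ by the definition of $\pi\in\Pi$ given right after (\ref{piu}).

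There is essentially no obstacle: the only thing to check is that the triangular linear system (\ref{piu}) for $(u_B,u_S)$ in terms of $(\pi_1,\pi_2)$ is invertible, which holds because $h(K)\neq0$ and $\sigma_2\neq0$ (both are positive by the model assumptions), so one can back-substitute $u_S^*$ first and then $u_B^*$. The verification that $X^*$ so constructed actually solves Problem (\ref{PX}) — rather than merely being a candidate — was already established: Problem (\ref{PX}) is equivalent to Problem (\ref{PY}) via $Y(T)=X(T)$ and the bijective correspondence of controls (the paragraph preceding (\ref{PY}) and Lemma \ref{TR1}), and Theorem \ref{th:y} solves Problem (\ref{PY}). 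Hence the result is just the pushforward of Theorem \ref{th:y} through the inverse of the maps $X\mapsto Y$ and $u\mapsto\pi$, and the proof amounts to recording these algebraic substitutions.
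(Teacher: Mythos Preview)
Your proposal is correct and follows exactly the paper's approach: the paper's proof simply cites the defining relation $Y(t)=X(t)+\lambda_rH+(\lambda_r\sigma_{P_1}+\lambda_S\sigma_{P_2}-\delta)\tilde H$ together with Eq.~(\ref{piu}) and says the result follows, which is precisely the back-substitution you carry out in detail. Your additional remarks on invertibility ($h(K)\neq0$, $\sigma_2\neq0$) and on inheriting admissibility and optimality from the equivalence of Problems (\ref{PX}) and (\ref{PY}) merely make explicit what the paper leaves implicit.
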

\begin{proof}
	Using the relationship $$Y(t)=X(t)+\lambda_rH(t,r(t),P(t),T)
+\left(\lambda_r\sigma_{P_1}+\lambda_S\sigma_{P_2}-\delta\right)\tilde{H}(t,r(t),P(t),T)$$ and	Eq.~(\ref{piu}), the proof easily follows.
\end{proof}
Besides, we can also calculate the expected utility and solvency risk for the pension fund at time $T$.
\begin{theorem}\label{thm:ef}
\begin{itemize}
	\item[(i)]{\bf Low tolerance towards solvency risk}
	\\If $k_1<2\alpha B $ and $Y_0>-B\exp\left(M+\frac{1}{2}V^2\right)$, the solvency risk in the underfunded region is
	\begin{equation*}
		\begin{aligned}
			{\mathbb{E}\left[ {X^*}^2(T)1_{\{{X^*}(T)<0\}}\right]}=&\frac{{\beta^*}^2}{4\alpha^2}\exp(2(M+V^2))(\Phi(\Upsilon(2\alpha B,\beta^*)-2V)-\Phi(\Upsilon(k_1,\beta^*)-2V))\\&+ B^2(1-\Phi(\Upsilon(2\alpha B,\beta^*))).
		\end{aligned}
	\end{equation*}
	The expected utility in the overfunded region is
	\begin{equation*}
	\begin{aligned}
		\mathbb{E}\left[ \frac{X^*(T)^{1-\gamma}}{1-\gamma}1_{\{X^*(T)>0\}}\right]=\frac{{\beta^*}^{\frac{\gamma-1}{\gamma}}}{1-\gamma}\exp\left(\frac{\gamma-1}{\gamma}M+\frac{V^2(\gamma-1)^2}{2\gamma^2}\right)\Phi\left(\Upsilon(k_1,\beta^*)-V\frac{\gamma-1}{\gamma}\right).
	\end{aligned}
\end{equation*}
	\item[(ii)]{\bf High tolerance towards solvency risk}
	\\If $k_1\geqslant2\alpha B $ and $Y_0>-B\exp\left(M+\frac{1}{2}V^2\right)$, the solvency risk in the underfunded region is 	\begin{equation*}
		\begin{aligned}
			{\mathbb{E}\left[ {X^*}^2(T)1_{\{{X^*}(T)<0\}}\right]}
			= B^2(1-\Phi(\Upsilon(k_2,\beta^*))).
		\end{aligned}
	\end{equation*}
	The expected utility in the overfunded region is
	\begin{equation*}
	\begin{aligned}
		\mathbb{E}\left[ \frac{X^*(T)^{1-\gamma}}{1-\gamma}1_{\{X^*(T)>0\}}\right]
		=\frac{{\beta^*}^{\frac{\gamma-1}{\gamma}}}{1-\gamma}\exp\left(\frac{\gamma-1}{\gamma}M+\frac{V^2(\gamma-1)^2}{2\gamma^2}\right)\Phi\left(\Upsilon(k_2,\beta^*)-\frac{\gamma-1}{\gamma}V\right).\\\end{aligned}
\end{equation*}	
\item [(iii)]{\bf A specific lower bound}\\
If $Y_0=-B\exp\left(M+\frac{1}{2}V^2\right)$, the solvency risk in the underfunded region is $B^2$ and the expected utility in the overfunded region is 0.
\item[(iv)]{\bf High lower bound}\\If $Y_0<-B\exp\left(M+\frac{1}{2}V^2\right)$, Problem (\ref{PX}) is not well-defined.
\end{itemize}
For different tolerance level $\alpha>0$ over the solvency risk,\\ $\left(\mathbb{E}\!\left[\! {X_\alpha^*}^2(T)1_{\{{X_\alpha^*}(T)<0\}}\right]\!,\mathbb{E}\!\left[ \frac{X_\alpha^*(T)^{1-\gamma}}{1-\gamma}1_{\{X_\alpha^*(T)>0\}}\right]\right)$ is a point on the plane. All points construct an efficient frontier on the plane.
\end{theorem}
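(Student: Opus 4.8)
The plan is to reduce every quantity to a moment of the terminal pricing kernel $\rho(T)$. By Theorem~\ref{TR2} (together with $Y^*(T)=X^*(T)$ and Eq.~(\ref{Z})), whenever $Y_0>-B\exp\left(M+\frac12V^2\right)$ there is a finite $\beta^*\in(0,+\infty)$ with $X^*(T)=x^*(\beta^*\rho(T))$, where $x^*$ is the piecewise function of Lemma~\ref{lemma:xy}; and since $\mathbb{P}(\beta^*\rho(T)\in\{k_1,k_2\})=0$ the non-uniqueness in that lemma is immaterial. So the first step is to identify the law of $\rho(T)$: solving the OU equation~(\ref{equ-r}) gives $r(t)=b+(r_0-b)e^{-at}-\sigma_r\int_0^t e^{-a(t-s)}\rd W_r(s)$, and the stochastic Fubini theorem yields $\int_0^T r(t)\rd t=bT+(r_0-b)\frac{1-e^{-aT}}{a}-\frac{\sigma_r}{a}\int_0^T\big(1-e^{-a(T-s)}\big)\rd W_r(s)$. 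Substituting this into the solution $\ln\rho(T)=kT-\int_0^T r(t)\rd t-\lambda_rW_r(T)-\lambda_SW_S(T)-\frac12(\lambda_r^2+\lambda_S^2)T$ of~(\ref{equ:rho}) and using the independence of $W_S$ and $W_r$, a direct computation of the first two moments shows that $\ln\rho(T)$ is Gaussian with mean exactly $M$ and variance exactly $V^2$; the algebraic cancellations are precisely those built into the definitions of $M$ and $V$ (equivalently, $\exp\left(M+\frac12V^2\right)$ is the $\rho$-price of a $T$-maturity zero-coupon bond, as already noted after Theorem~\ref{TR2}).

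The second step is the standard truncated-lognormal moment identity: for $Z\sim\mathcal{N}(M,V^2)$ and $p\in\mathbb{R}$,
\[
\mathbb{E}\big[e^{pZ}1_{\{z_1<Z<z_2\}}\big]=e^{pM+\frac12p^2V^2}\Big(\Phi\big(\tfrac{z_2-M-pV^2}{V}\big)-\Phi\big(\tfrac{z_1-M-pV^2}{V}\big)\Big),
\]
so in particular $\mathbb{P}(\beta^*\rho(T)<x)=\Phi(\Upsilon(x,\beta^*))$. Applying this with $Z=\ln\rho(T)$ and the three exponents occurring in $f(x^*(\beta^*\rho(T)))$ --- namely $p=\frac{\gamma-1}{\gamma}$ from $\big((\beta^*\rho(T))^{-1/\gamma}\big)^{1-\gamma}$, $p=2$ from $\big(\frac{\beta^*\rho(T)}{2\alpha}\big)^2$, and $p=0$ for the flat piece $-B$ --- reproduces the prefactors $\exp\big(\frac{\gamma-1}{\gamma}M+\frac{V^2(\gamma-1)^2}{2\gamma^2}\big)$ and $\exp(2(M+V^2))$ and the arguments $\Upsilon(\cdot,\beta^*)-\frac{\gamma-1}{\gamma}V$, $\Upsilon(\cdot,\beta^*)-2V$, $\Upsilon(\cdot,\beta^*)$ once one performs the change of variables $\{\beta^*\rho(T)\in I\}\leftrightarrow\{\ln\rho(T)\in\ln I-\ln\beta^*\}$.

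The third step is bookkeeping over the regions of $x^*$. In Case~(i), $k_1<2\alpha B$, so Lemma~\ref{lemma:xy}(i) gives $X^*(T)=(\beta^*\rho(T))^{-1/\gamma}>0$ on $\{\beta^*\rho(T)<k_1\}$, $X^*(T)=-\frac{\beta^*\rho(T)}{2\alpha}<0$ on $\{k_1<\beta^*\rho(T)\le 2\alpha B\}$, and $X^*(T)=-B<0$ on $\{\beta^*\rho(T)>2\alpha B\}$; hence $\{X^*(T)>0\}$ is the first event and $\{X^*(T)<0\}$ the union of the other two. Splitting $\mathbb{E}[{X^*}^2(T)1_{\{X^*(T)<0\}}]$ and $\mathbb{E}[\frac{X^*(T)^{1-\gamma}}{1-\gamma}1_{\{X^*(T)>0\}}]$ accordingly and invoking Step~2 yields the two displayed formulas. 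Case~(ii), $k_1\ge 2\alpha B$, is identical but uses the two-region form of Lemma~\ref{lemma:xy}(ii) with threshold $k_2$, so the quadratic piece disappears and only the $-B$ contribution remains in the solvency risk. Cases~(iii) and~(iv) are immediate from Theorem~\ref{TR2}(ii)--(iii): if $Y_0=-B\exp\left(M+\frac12V^2\right)$ then $X^*(T)\equiv-B$, so the solvency risk is $B^2$ and the overfunded-region utility is $0$; if $Y_0<-B\exp\left(M+\frac12V^2\right)$ the problem is infeasible. Finally, the efficient-frontier statement requires no extra work: the formulas above assign to each admissible $\alpha>0$ a single point of the plane, and the frontier is by definition the locus of these points as $\alpha$ ranges over $(0,+\infty)$.

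The computations in Steps~2 and~3 are elementary; the one place I would be careful is Step~1 --- verifying that the Gaussian parameters of $\ln\rho(T)$ collapse exactly to $M$ and $V^2$, and, correspondingly, that after normalizing the events $\{\beta^*\rho(T)\in I\}$ to events for $Z=\ln\rho(T)$ the truncation limits land exactly on the $\Upsilon(\cdot,\beta^*)$ expressions appearing in the statement.
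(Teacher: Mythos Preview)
Your proposal is correct and follows essentially the same approach as the paper: both use $X^*(T)=x^*(\beta^*\rho(T))$ from Theorem~\ref{TR2}/Lemma~\ref{lemma:xy}, partition according to the regions of $x^*$, and evaluate the resulting truncated moments of the lognormal variable $\rho(T)$. The only difference is cosmetic: the paper's appendix leaves the identification $\ln\rho(T)\sim\mathcal{N}(M,V^2)$ and the truncated-lognormal formula implicit and simply writes down the final expressions, whereas your Steps~1 and~2 make these two ingredients explicit.
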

\begin{proof}
See Appendix \ref{athm:ef}.
\end{proof}
Theorem \ref{thm:ef} shows the solvency risk and expected utility with different tolerance levels $\alpha$. Although the explicit form of the efficient frontier can not be obtained, we can plot the curve of efficient frontier numerically. Later we will see that the efficient frontier is very similar to the parabolic form in the mean-variance analysis, which characterizes a balance between the return (expected utility in the overfunded region) and the risk (solvency risk in the underfunded region).  Besides, we can compute the possibilities that the terminal wealth falls in the overfunded and underfunded regions, which are shown in the following theorem.
\begin{theorem}\label{thm:bd}
\begin{itemize}
		\item[(i)]{\bf Low tolerance towards solvency risk}
		\\If $k_1<2\alpha B $ and $Y_0>-B\exp\left(M+\frac{1}{2}V^2\right)$, the probability that the optimal terminal wealth falls in the {overfunded} region is
		\begin{equation*}
			\mathbb{P}\left[ X^*(T)>0\right]=\Phi\left(\Upsilon(k_1,\beta^*)\right).
		\end{equation*}
		and  the probability that the optimal terminal wealth falls in the {underfunded} region is $1-\Phi\left(\Upsilon(k_1,\beta^*)\right)$.
		
		\item[(ii)]{\bf High tolerance towards solvency risk}
		\\If $k_1\geqslant2\alpha B $ and $Y_0>-B\exp\left(M+\frac{1}{2}V^2\right)$, the probability that the optimal  terminal wealth falls in the {overfunded} region is
		\begin{equation*}
			\mathbb{P}\left[ X^*(T)>0\right]=\Phi\left(\Upsilon(k_2,\beta^*)\right)
		\end{equation*}
		and  the probability that the  optimal terminal wealth falls in the {underfunded} region is $1-\Phi\left(\Upsilon(k_2,\beta^*)\right)$.
		
		\item [(iii)]{\bf A specific lower bound}
		\\If $Y_0=-B\exp\left(M+\frac{1}{2}V^2\right)$, then we see that $ X^*(T)=-B ,a.s.$, and the probability that the  optimal terminal wealth falls in the underfunded region is $ 1 $.
		\item[(iv)]{\bf High lower bound}
		\\If $Y_0<-B\exp\left(M+\frac{1}{2}V^2\right)$, Problem (\ref{PX}) is not well-defined.
	\end{itemize}
\end{theorem}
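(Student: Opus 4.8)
The plan is to reduce both probabilities to the law of the pricing kernel $\rho(T)$, which is log-normal with explicitly computable parameters $M$ and $V$, and then read off the answers directly from the description of the optimal terminal wealth already obtained in Theorems~\ref{TR2} and~\ref{th:y}.

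First I would recall that $X^*(T)=Y^*(T)=x^*(\beta^*\rho(T))$: the first equality holds because $H(T,\cdot)=\tilde H(T,\cdot)=0$, and the second is the content of Theorem~\ref{TR2} (via Lemmas~\ref{lemma:xy} and~\ref{lemma:py3}). Inspecting the function $x^*$ of Lemma~\ref{lemma:xy}, one sees that in Case~(i) ($k_1<2\alpha B$) we have $x^*(y)>0$ precisely on $\{0<y<k_1\}$ and $x^*(y)<0$ on $\{y>k_1\}$, while in Case~(ii) ($k_1\geqslant 2\alpha B$) we have $x^*(y)>0$ precisely on $\{0<y<k_2\}$ and $x^*(y)<0$ on $\{y>k_2\}$; in either case $x^*$ never takes the value $0$. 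Hence, up to a $\mathbb{P}$-null set, $\{X^*(T)>0\}=\{\beta^*\rho(T)<k_1\}$ in Case~(i) and $\{X^*(T)>0\}=\{\beta^*\rho(T)<k_2\}$ in Case~(ii), the single boundary point being irrelevant since $\rho(T)$ has a continuous distribution, and $\mathbb{P}[X^*(T)=0]=0$, so that $\mathbb{P}[X^*(T)<0]=1-\mathbb{P}[X^*(T)>0]$.

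Second I would verify that $\ln\rho(T)\sim\mathcal{N}(M,V^2)$, with $M,V$ as defined before Theorem~\ref{TR2}. Applying It\^{o}'s formula to Eq.~(\ref{equ:rho}) gives $\ln\rho(T)=-\int_0^T(r(s)-k)\,\rd s-\lambda_r W_r(T)-\lambda_S W_S(T)-\tfrac12(\lambda_r^2+\lambda_S^2)T$. Solving the OU equation~(\ref{equ-r}) as $r(t)=r_0e^{-at}+b(1-e^{-at})-\sigma_r\int_0^t e^{-a(t-u)}\,\rd W_r(u)$ and using the stochastic Fubini theorem, $\int_0^T r(s)\,\rd s=bT+(r_0-b)\tfrac{1-e^{-aT}}{a}-\tfrac{\sigma_r}{a}\int_0^T(1-e^{-a(T-u)})\,\rd W_r(u)$. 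Substituting, $\ln\rho(T)$ equals a deterministic term that collects to $M$ plus the centered Gaussian Wiener integral $\int_0^T\big(\tfrac{\sigma_r}{a}(1-e^{-a(T-u)})-\lambda_r\big)\,\rd W_r(u)-\lambda_S W_S(T)$; by the It\^{o} isometry its variance equals $\int_0^T\big(\tfrac{\sigma_r}{a}(1-e^{-a(T-u)})-\lambda_r\big)^2\rd u+\lambda_S^2 T$, which expands through $\int_0^T e^{-a(T-u)}\rd u=\tfrac{1-e^{-aT}}{a}$ and $\int_0^T e^{-2a(T-u)}\rd u=\tfrac{1-e^{-2aT}}{2a}$ to exactly $V^2$. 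This log-normality is in fact the same computation already invoked implicitly in the proof of Theorem~\ref{TR2} when evaluating $\lim_{\beta\to\infty}g(\beta)$.

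Finally I would assemble the conclusions. In Case~(i), since $\beta^*\in(0,\infty)$ by Theorem~\ref{TR2}(i), $\mathbb{P}[X^*(T)>0]=\mathbb{P}[\ln\rho(T)<\ln k_1-\ln\beta^*]=\Phi\!\big(\tfrac{\ln k_1-\ln\beta^*-M}{V}\big)=\Phi(\Upsilon(k_1,\beta^*))$ and the underfunded probability is $1-\Phi(\Upsilon(k_1,\beta^*))$; Case~(ii) is identical with $k_1$ replaced by $k_2$. For Case~(iii), Theorem~\ref{th:y}(iii) (equivalently Theorem~\ref{TR2}(ii)) gives $X^*(T)=Y^*(T)=-B$ almost surely, so the terminal wealth lies in the underfunded region with probability $1$; Case~(iv) is the ill-posed situation ruled out by Theorem~\ref{TR2}(iii). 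The only mildly delicate point is the variance bookkeeping in the third paragraph, matching the three summands of the Wiener-integral variance against the three terms defining $V^2$, but this is a routine It\^{o}-isometry calculation, so I anticipate no real obstacle.
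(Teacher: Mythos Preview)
Your proposal is correct and follows essentially the same route as the paper: identify $X^*(T)=x^*(\beta^*\rho(T))$, read off the sign of $x^*$ from Lemma~\ref{lemma:xy}, and evaluate $\mathbb{P}[\beta^*\rho(T)\leqslant k_i]$ via the log-normality of $\rho(T)$. The only difference is that you spell out the verification $\ln\rho(T)\sim\mathcal{N}(M,V^2)$ in detail, whereas the paper treats this as already established (it is implicit in the derivation of $\exp(M+\tfrac12V^2)$ as a bond price in Theorem~\ref{TR2}) and writes the proof in three lines.
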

\begin{proof}
Based on the closed-form of $X^*(T)$ in Theorem \ref{th:y}, we easily obtain the probability that the terminal wealth falls in the underfunded region. If $k_1<2\alpha B $ and $Y_0>-B\exp\left(M+\frac{1}{2}V^2\right)$,  we have
\begin{equation*}
		\begin{aligned}
			\mathbb{P}\left[X^*(T)>0\right]=\mathbb{P}\left[x^*(\beta^* \rho(T))>0\right]=\mathbb{P}\left[0<\beta^* \rho(T)\leqslant k_1\right]	=\Phi\left(\Upsilon(k_1,\beta^*)\right)
		\end{aligned}
	\end{equation*}
	and
	\begin{equation*}
		\begin{aligned}
			\mathbb{P}\left[X^*(T)<0\right]
			=1-\Phi\left(\Upsilon(k_1,\beta^*)\right).
		\end{aligned}
	\end{equation*}
	If $k_1\geqslant2\alpha B $ and $Y_0>-B\exp\left(M+\frac{1}{2}V^2\right)$,  we have
\begin{equation*}
		\mathbb{P}\left[X^*(T)>0\right]=\mathbb{P}\left[x^*(\beta^* \rho(T))>0\right]
=\mathbb{P}\left[0<\beta^* \rho(T)\leqslant k_2\right]=\Phi\left(\Upsilon(k_2,\beta^*)\right)
\end{equation*}
	and
	\begin{equation*}
		\begin{aligned}
			\mathbb{P}\left[X^*(T)<0\right]
			=1-\Phi\left(\Upsilon(k_2,\beta^*)\right).
		\end{aligned}
	\end{equation*}
In the last two cases, the results are straightforward.
\end{proof}
In this section, we transform the original problem into an equivalent one after introducing auxiliary processes. Then by the Lagrange method and martingale method, we obtain the results for the optimal terminal wealth. We see that the results are categorized into four cases: low tolerance level towards solvency risk, high tolerance level towards solvency risk, a specific lower bound, and high lower bound. Then using replication, the optimal wealth process and optimal strategies are obtained.

\vskip 15pt
\setcounter{equation}{0}
\section{\bf Numerical Analysis}
In this section, we show the impacts of different parameters on the manager's optimal wealth processes, optimal portfolios, efficient frontier, and probabilities in two regions. As the wealth processes and portfolios are stochastic, we conduct the Monte Carlo Method to illustrate the economic behaviors of the manager. Unless otherwise stated, the parameters we adopt are: $ T=10$, $ K=8$, $ r_0=0.04$, $ a=0.2$, $  b=0.02 $, $ \sigma_r=0.02 $, $\delta=0.005$, $k=0.06$,  $\sigma_{1}=0.2$, $ \sigma_2=0.4$, $ \lambda_r=0.15$, $\lambda_S=0.2$, $ P_0=0.15 $, $ \sigma_{P_1}=0.1$, $ \sigma_{P_2}=0.1$, $ X_0=0$, $ \mu=0.04 $, $ m=25 $, $d= 55 $, $ \gamma=0.4 $, $ \alpha=0.1 $, $ B=5 $, $M(x)=\frac{x-m}{d-m}$.
\subsection{\bf Optimal wealth processes}
\begin{figure}[htbp]
	\centering
		\begin{minipage}{0.5\textwidth}
		\centering
		\includegraphics[totalheight=5cm]{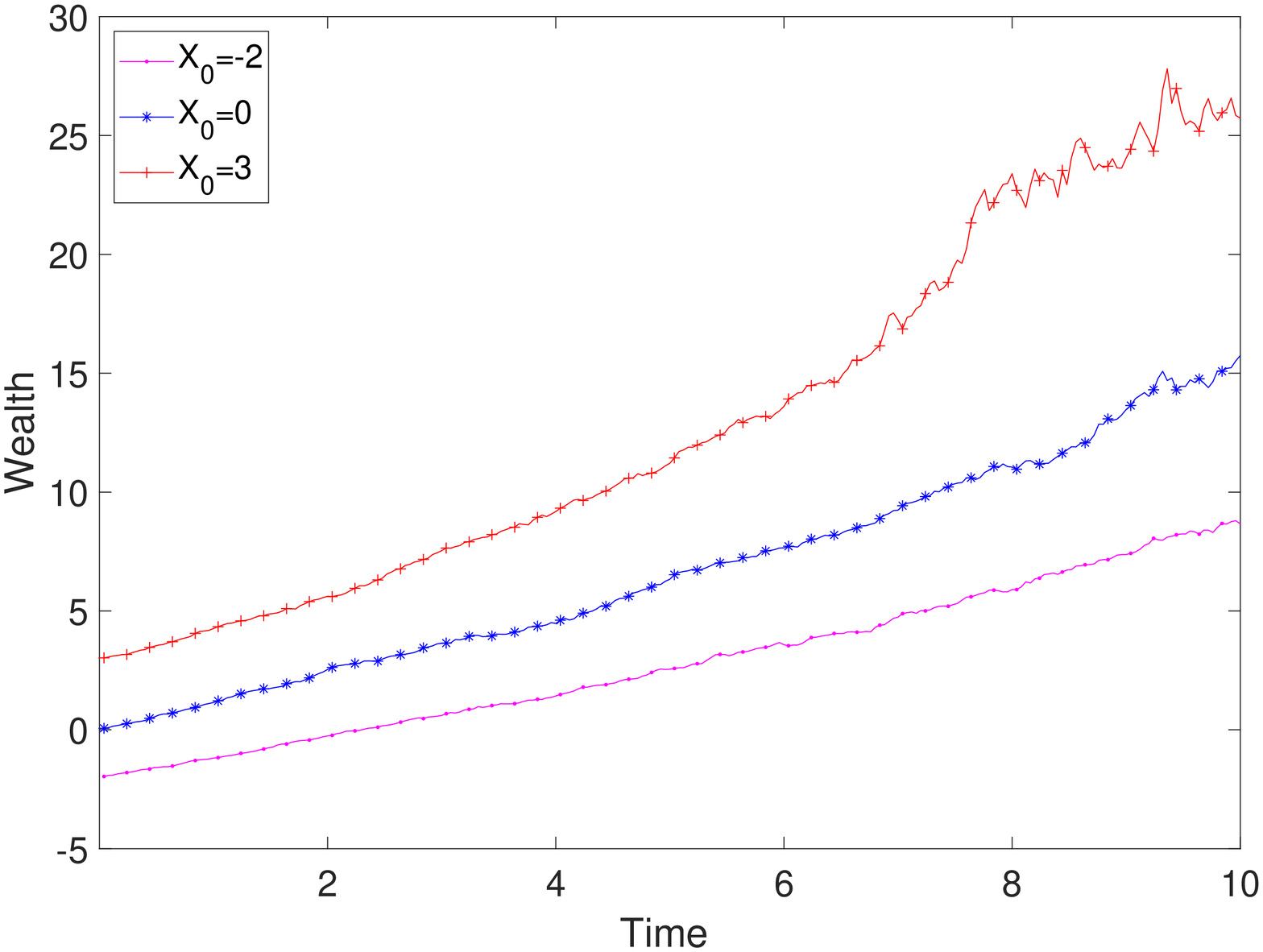}
		\caption{Effect of $X_0$.}
		\label{wealth:x0}
	\end{minipage}\hfill
	\begin{minipage}{0.5\textwidth}
	\centering
	\includegraphics[totalheight=5cm]{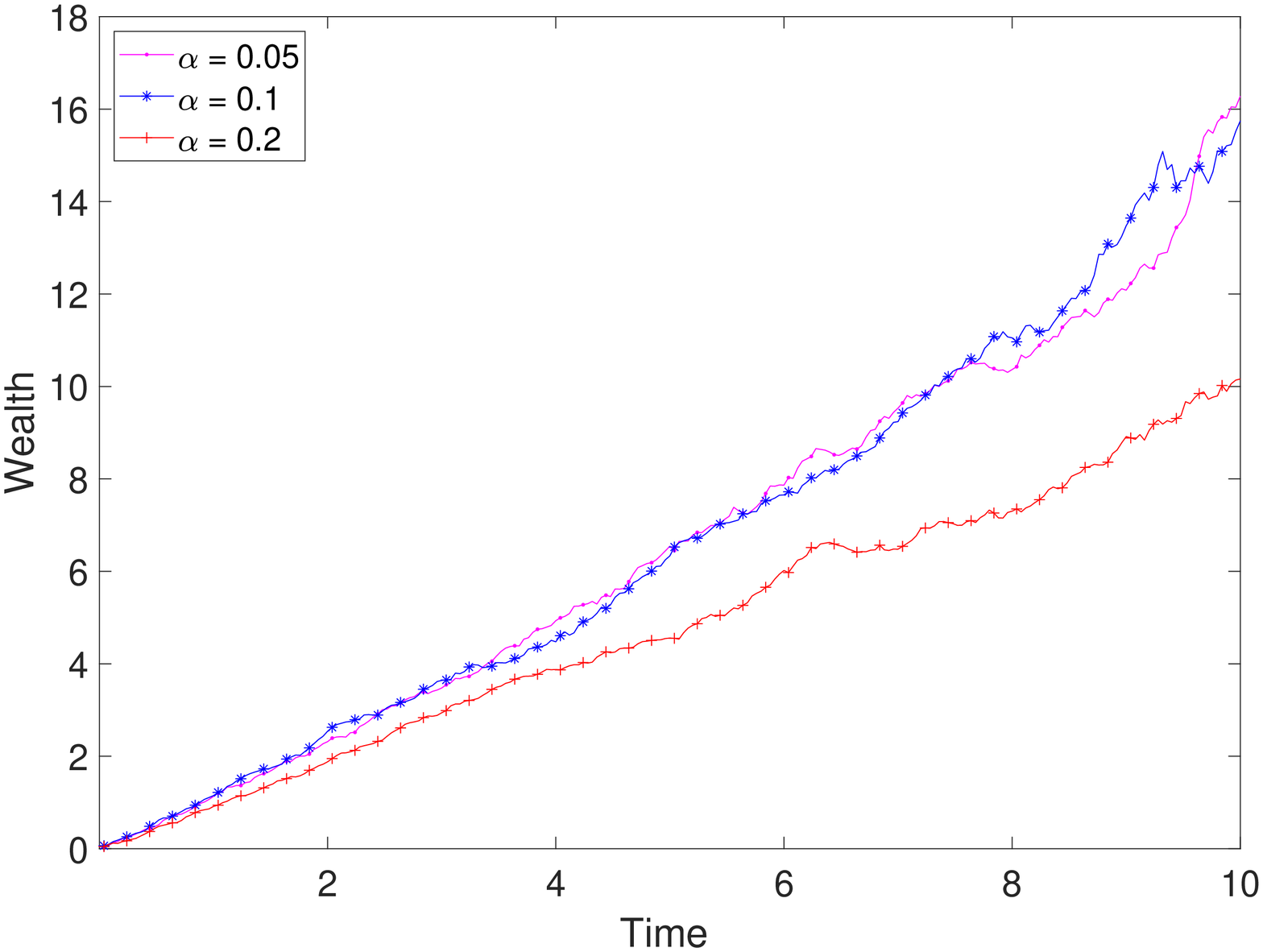}
	\caption{Effect of $\alpha$.}
	\label{wealth:alpha}
	\end{minipage}\hfill
		\begin{minipage}{0.5\textwidth}
		\centering
		\includegraphics[totalheight=5cm]{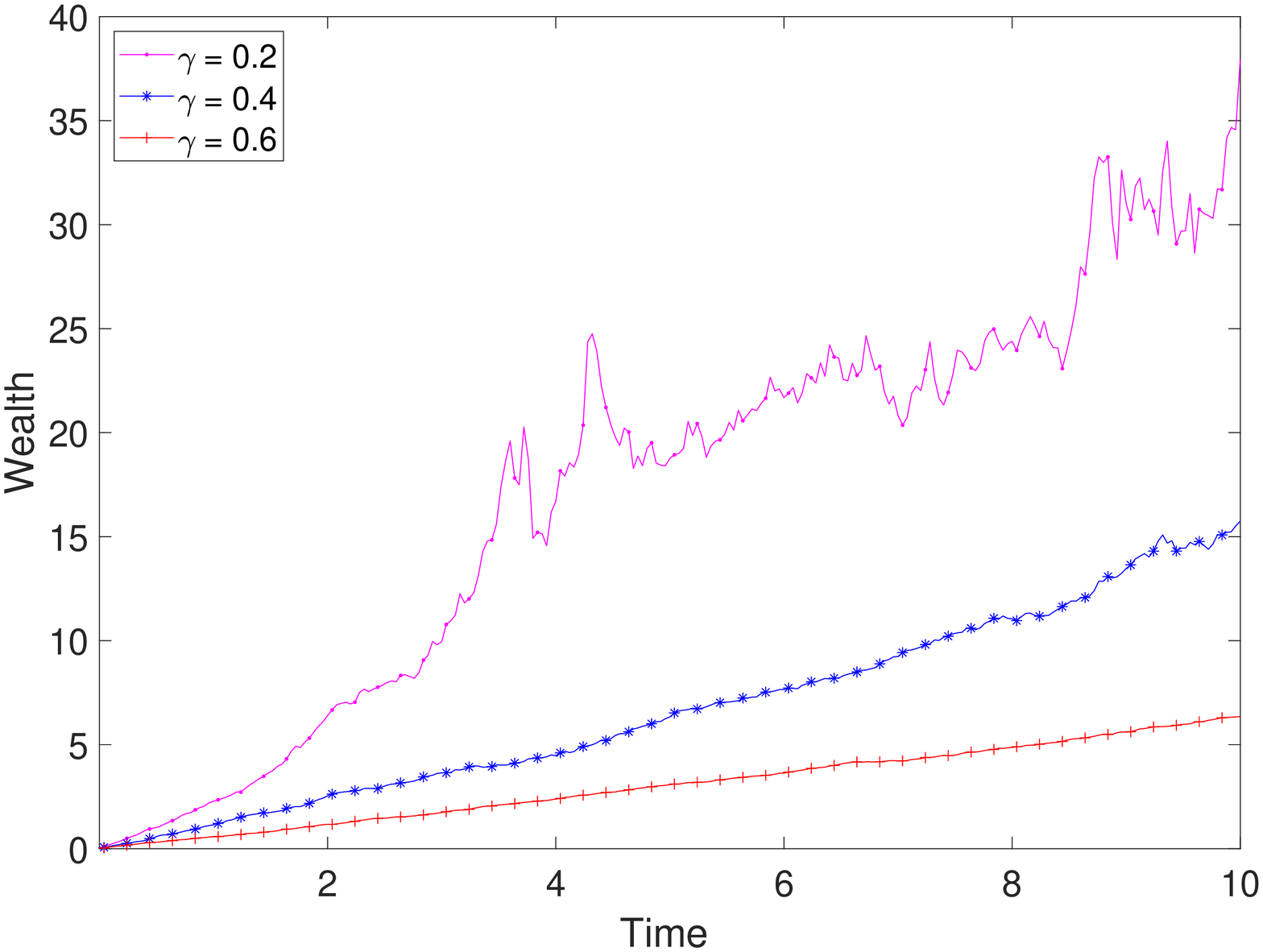}
			\caption{Effect of $\gamma$.}
		\label{wealth:gamma}
	\end{minipage}\hfill
	\begin{minipage}{0.5\textwidth}
		\centering
		\includegraphics[totalheight=5cm]{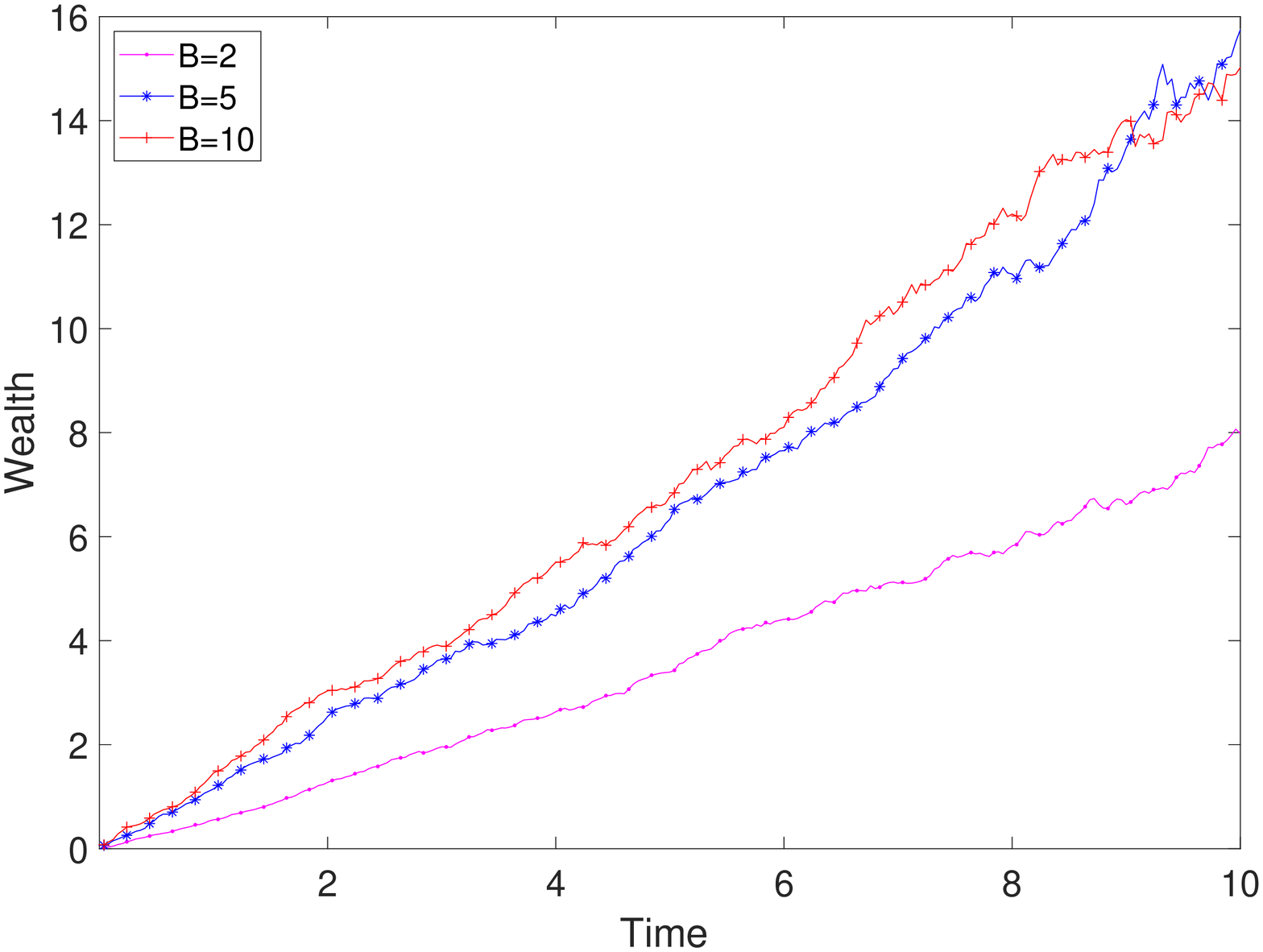}
		\caption{Effect of $B$.}
		\label{wealth:B}
	\end{minipage}\hfill
\end{figure}
Figs.~\ref{wealth:x0}-\ref{wealth:B} illustrate the mean evolution of optimal wealth with respect to  different parameters. Observing these figures, we see that the mean wealth increases with time, i.e., the manager has a larger expected fund surplus over time. In Fig.~\ref{wealth:x0}, the impact of the initial status of the fund on the optimal wealth is shown. We see that even in the case $X_0=-2$, the mean wealth increases to the overfunded region at about time 4. Meanwhile, the increase of wealth is faster for larger initial wealth $X_0$. Fig.~\ref{wealth:x0} shows that the optimal terminal wealth increases to about 25, 15, and 8 from $X_0=3,0,-2$ at the initial time. Besides, when $\alpha$ becomes larger, the manager has less tolerance level towards solvency risk and will be more risk-averse. As such, the optimal wealth increases slower for larger $\alpha$, which is depicted in Fig.~\ref{wealth:alpha}. The effect of $\gamma$ on the optimal wealth is shown in Fig.~\ref{wealth:gamma}. $\gamma$ represents the manager's risk aversion level towards financial risk. Obviously, for a manager with higher $\gamma$, he/she will be more conservative when allocating in the risky asset, which results in a smaller expected wealth in the future. Moreover, the positive relationship of $B$ and the optimal wealth is revealed in Fig.~\ref{wealth:B}. The bound $B$ also reflects the manager's tolerance over solvency risk. If $B$ is large, the manager becomes less sensitive to the negative value in the underfunded region. Then the manager acts more aggressively in the financial market and has a bigger expected wealth in the future.

\subsection{\bf  Optimal portfolio}
 \begin{figure}[htbp]
	\centering
		\begin{minipage}{0.9\textwidth}
		\centering
		\includegraphics[totalheight=8cm]{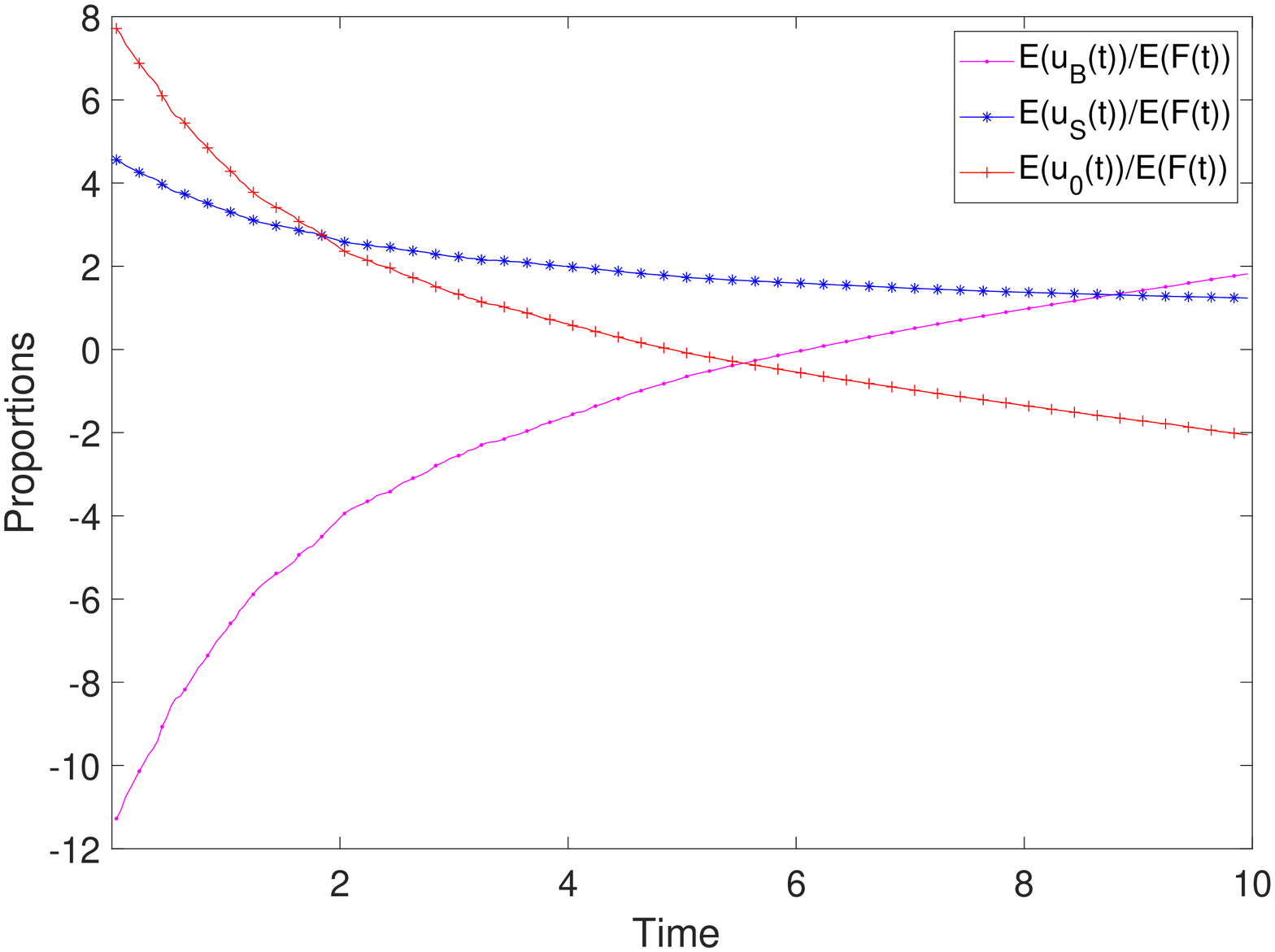}
		\caption{Benchmark parameters: $B=5,X_0=0,\gamma=0.4, \alpha=0.1, \mu=0.04, \delta=0.005, k=0.06, \sigma_r=0.02$.}
		\label{portfolio1}
	\end{minipage}\hfill
	\begin{minipage}{0.5\textwidth}
		\centering
		\includegraphics[totalheight=5cm]{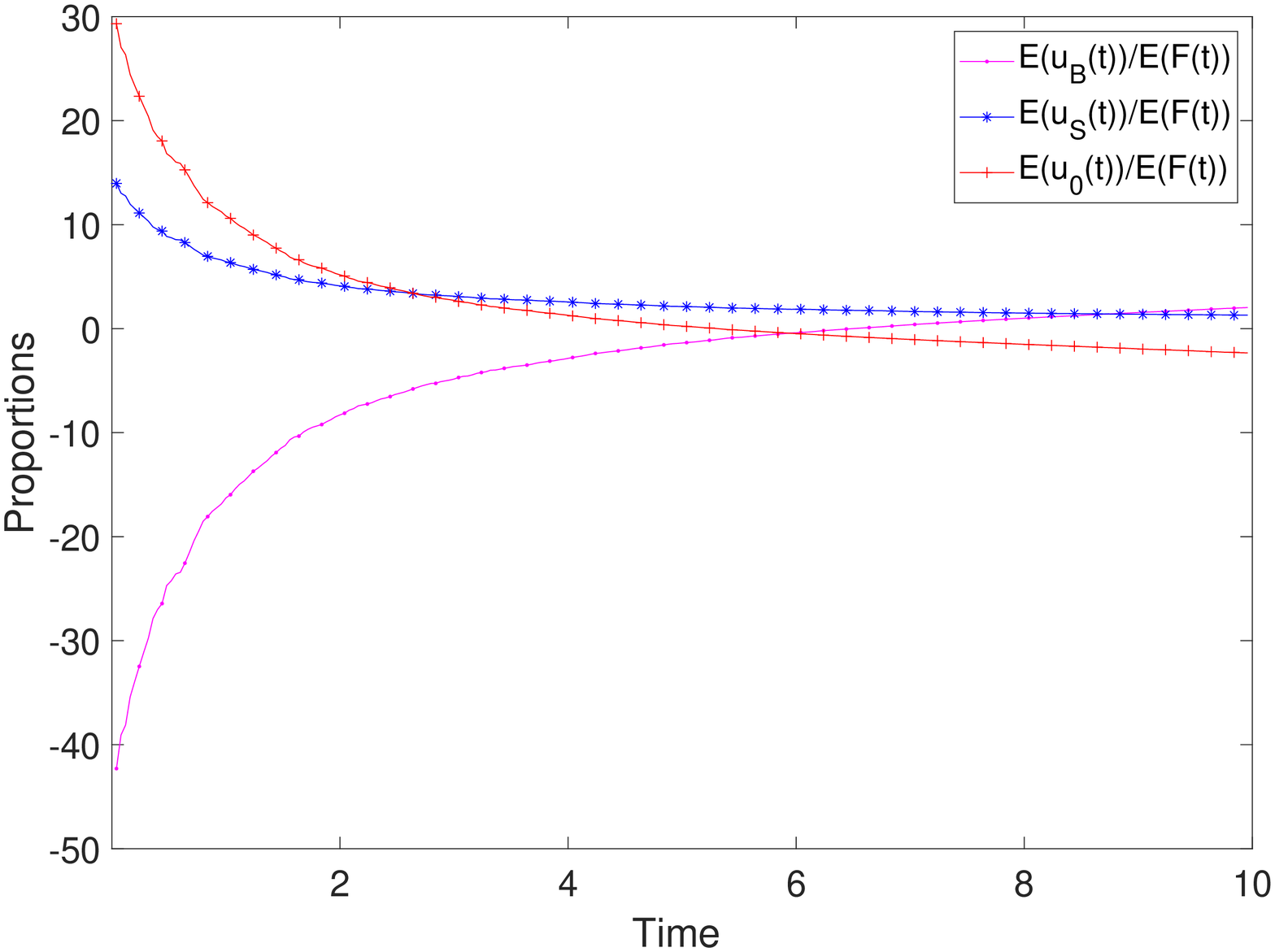}
		\caption{$X_0=0\rightarrow X_0=-2$.}
		\label{portfolio2}
	\end{minipage}\hfill
	\begin{minipage}{0.5\textwidth}
		\centering
		\includegraphics[totalheight=5cm]{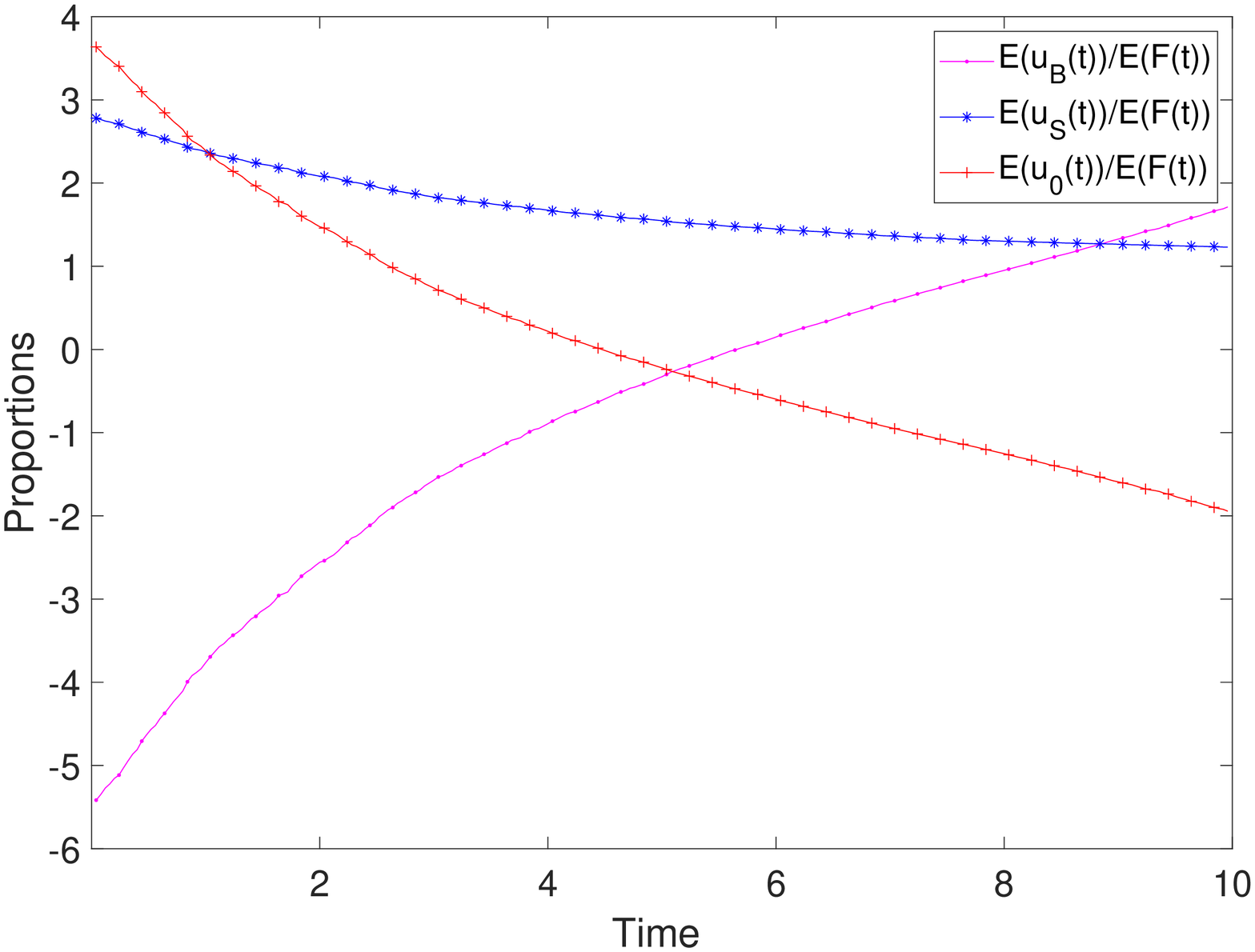}
		\caption{$X_0=0\rightarrow X_0=3$.}
		\label{portfolio3}
	\end{minipage}\hfill
\end{figure}
Next, we illustrate the effects of different parameters on the optimal investment strategies, which are shown in Figs.~\ref{portfolio1}-\ref{portfolio17}. We compare the other figures with the benchmark figure~\ref{portfolio1}. Fig.~\ref{portfolio1} reveals that the manager invests most in cash at the initial time, decreasing from about $8$ at the initial time to about $-2$ at terminal time. Meanwhile, the manager takes a long position in stock within the planning horizon, which decreases from about 4.2 to 1.8. The proportion in bond is negative at the initial time and becomes positive after time 6. Observing Figs.~\ref{portfolio1}-\ref{portfolio15}, we see that the manager always shorts bond and buys cash and stock at the initial time. The manage longs stock in the planning horizon while investment in cash (bond) may change from a positive (negative) value to a negative (positive) value at the terminal time. Fig.~\ref{portfolio2}  and Fig.~\ref{portfolio3} depict the optimal allocations when the DB pension fund begins in the underfunded and overfunded regions, respectively.  When the initial status of the fund is underfunded, the manager becomes risk-seeking. As such, compared with Fig.~\ref{portfolio1}, the manager in Fig.~\ref{portfolio2} acts aggressively and allocates more in stock with an initial proportion of 14. Besides, the manager takes a huge short position in bond, which increases from -43 at the initial time to 2 at terminal time. The absolute values of proportions in the three assets when $X_0=-2$ are all greater than the values in the case $X_0=0$. We also observe that the proportion in stock when $X_0=-2$ decreases from 13 to 2. When the initial status of the fund is overfunded, Fig.~\ref{portfolio3} shows that the proportion in cash (stock) decreases from about 3.8 (2.9)  at the initial time to -2 (1.1) at the terminal time. Similar to Figs.~\ref{portfolio1} and \ref{portfolio2}, the investment in bond also increases with time in this case.
 \begin{figure}[htbp]
	\centering
	\begin{minipage}{0.5\textwidth}
		\centering
		\includegraphics[totalheight=5cm]{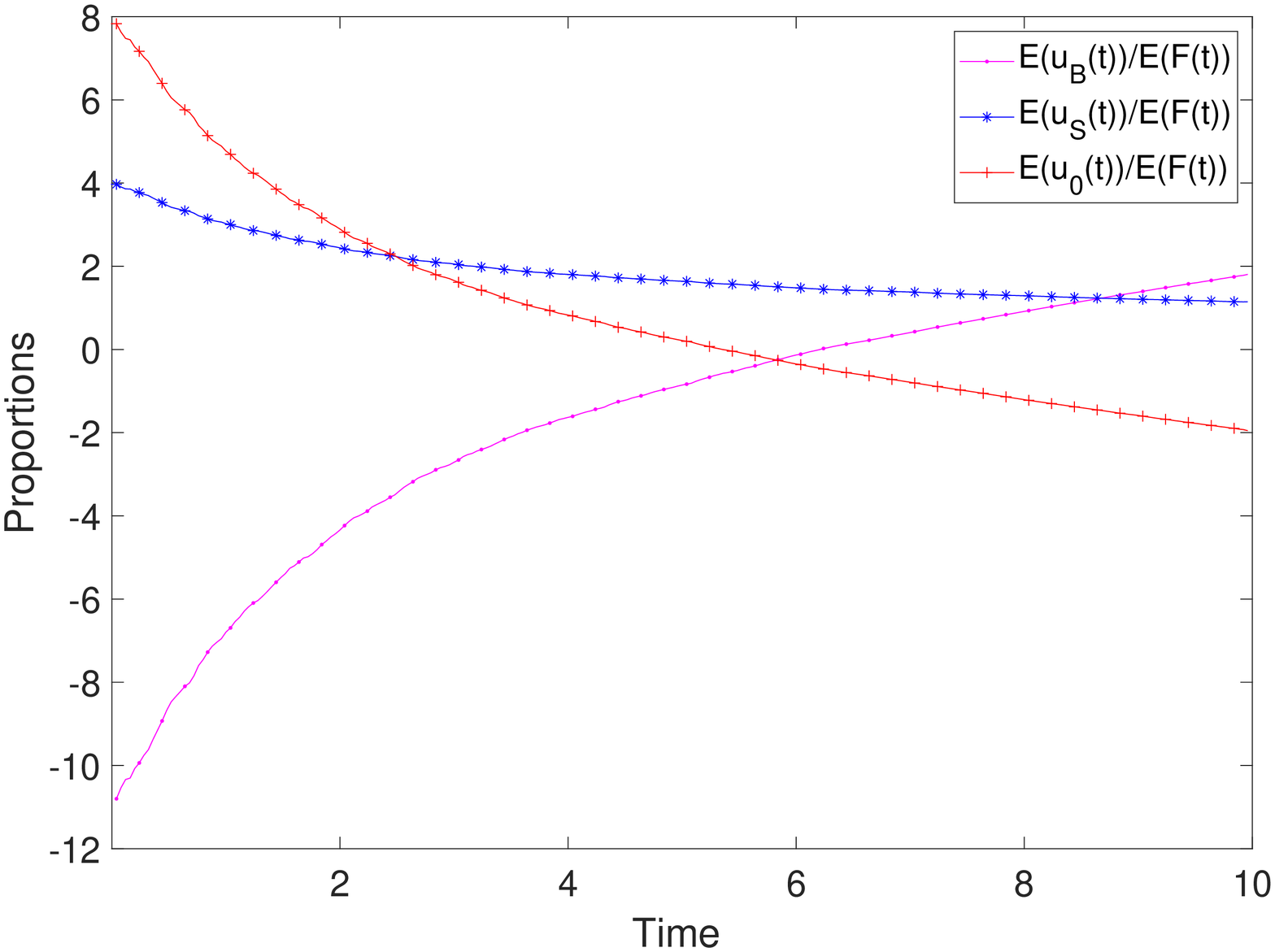}
		\caption{$\alpha=0.1\rightarrow\alpha=0.2$.}
		\label{portfolio13}
	\end{minipage}\hfill
	\begin{minipage}{0.5\textwidth}
		\centering
		\includegraphics[totalheight=5cm]{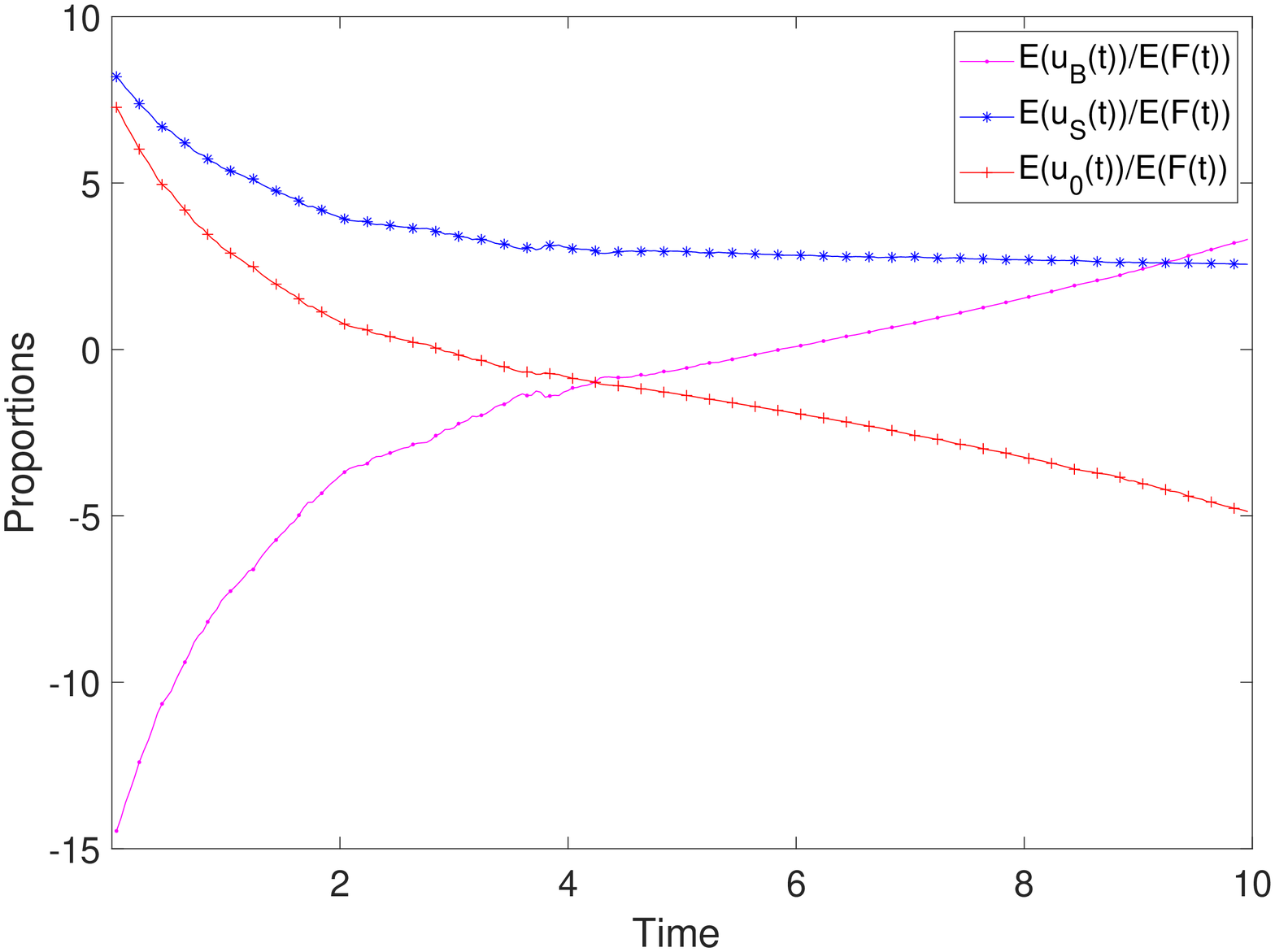}
		\caption{$\gamma=0.4\rightarrow\gamma=0.2$.}
		\label{portfolio6}
	\end{minipage}\hfill
	\begin{minipage}{0.5\textwidth}
	\centering
	\includegraphics[totalheight=5cm]{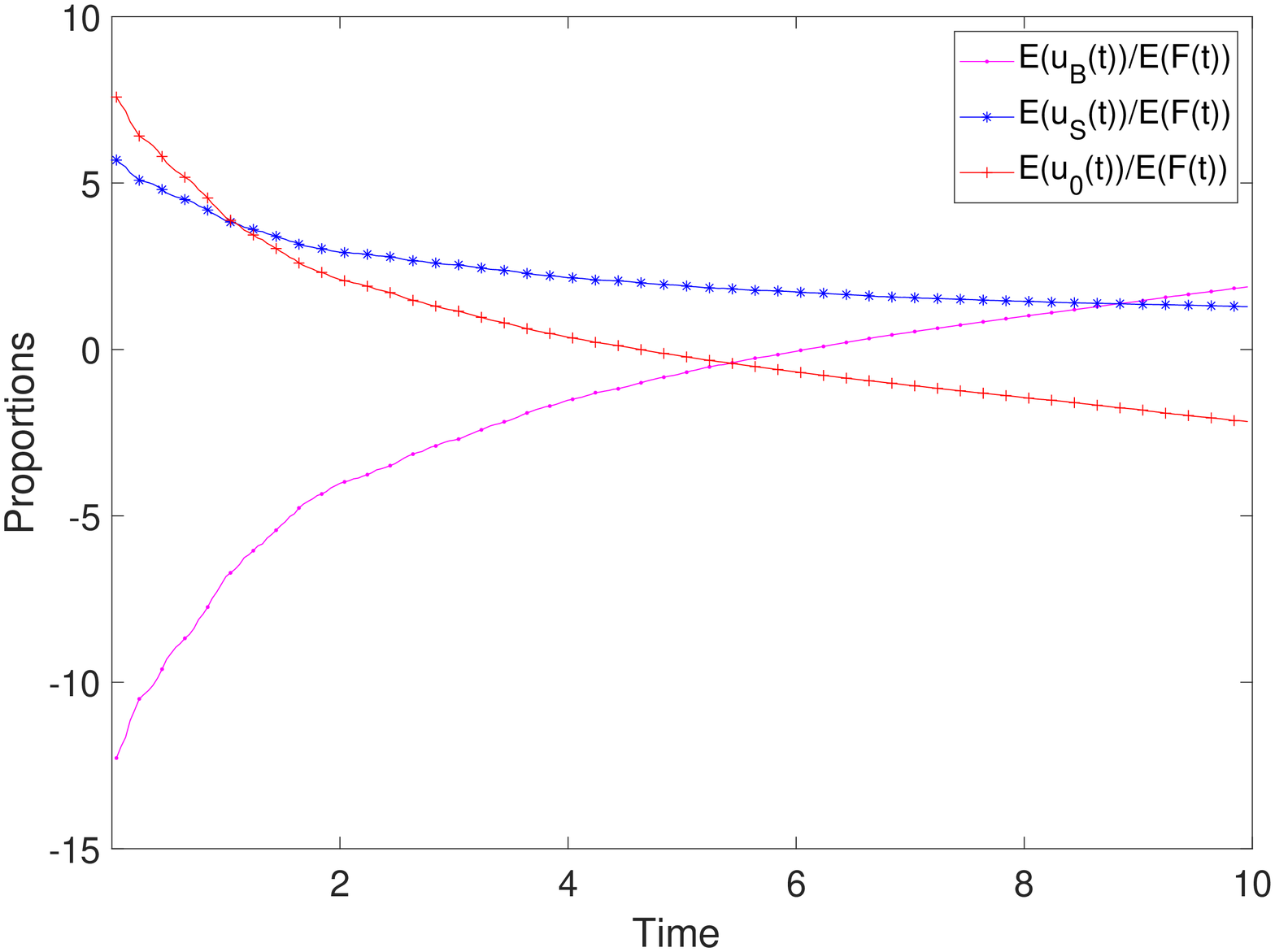}
	\caption{$B=5\rightarrow B=10$.}
	\label{portfolio5}
\end{minipage}\hfill
	\begin{minipage}{0.5\textwidth}
	\centering
	\includegraphics[totalheight=5cm]{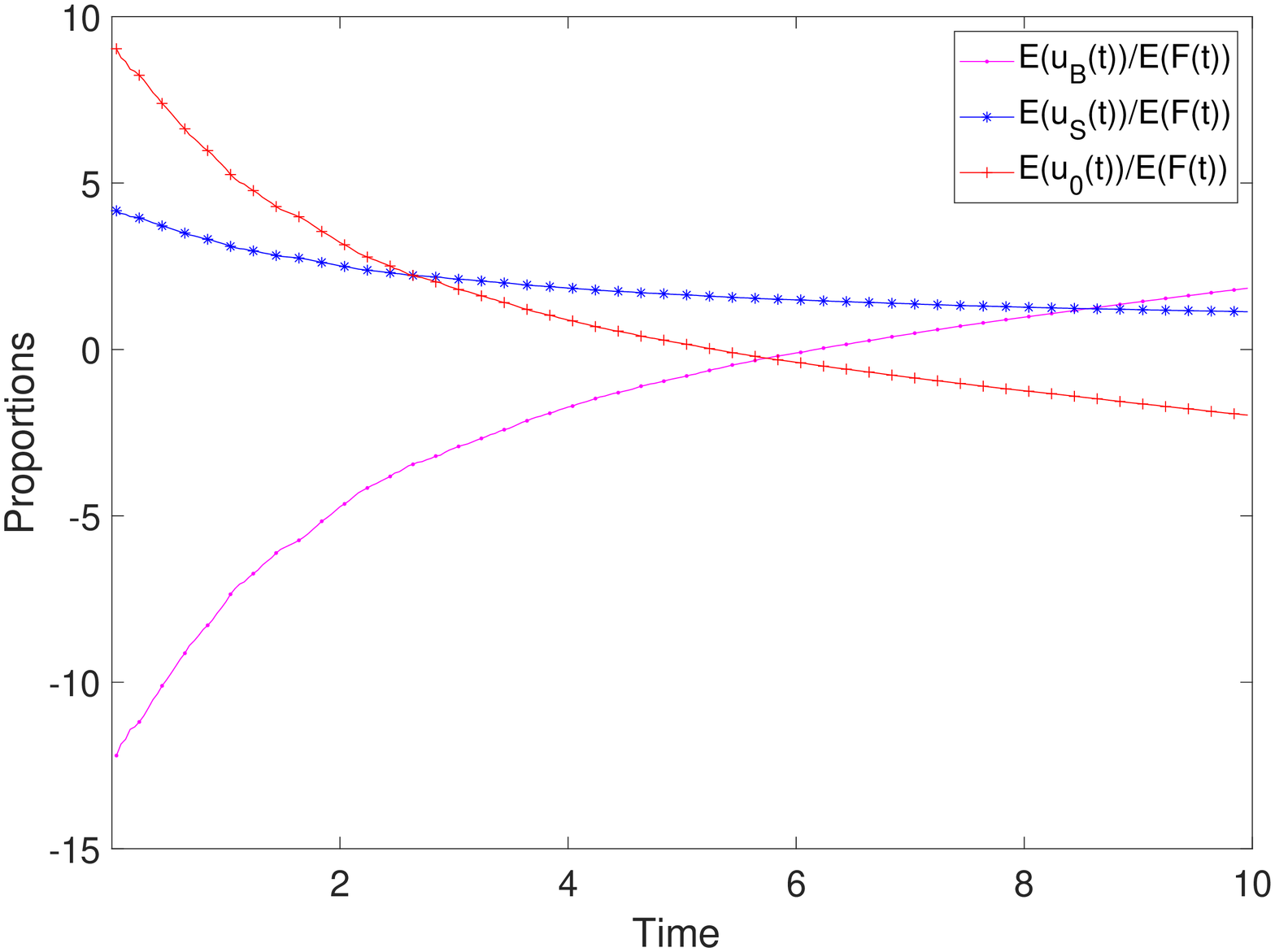}
	\caption{$\mu=0.04\rightarrow\mu=0.06$.}
	\label{portfolio11}
\end{minipage}\hfill
\end{figure}

Fig.~\ref{portfolio13} shows the case when $\alpha$ increases from 0.1 to 0.2.  The difference between Fig.~\ref{portfolio13} and Fig.~\ref{portfolio1} is small. In the financial market, the manager with larger $\alpha$ becomes more {averse to} solvency risk and will decrease the allocation in the risky asset, which is indicated by the initial proportions of stock in Fig.~\ref{portfolio13} and Fig.~\ref{portfolio1}. Besides, if the manager has a smaller risk aversion coefficient $\gamma$, he/she will allocate more in the risky asset. Fig.~\ref{portfolio6} shows that when $\gamma=0.2$, the proportion in stock decreases from about 8 to 2.5, which evolves bigger than that in Fig.~\ref{portfolio1}. The investment in bond also has a  drastic change, which begins with -15 at the initial time and increases to about 3 at terminal time. Furthermore, the economic interpretation of $B$ is similar with $\alpha$: a bigger $B$ leads to a manager with high tolerance over solvency risk. As such, compared with Fig.~\ref{portfolio1}, the manager allocates more in the risky assets: stock and cash. To compensate for the increase of proportion in stock and cash, the proportion in bond is smaller in the case $B=10$. In Fig.~\ref{portfolio11}, $\mu$ increases from 0.04 to 0.06. As illustrated in Fig.~\ref{portfolio11}, the proportion in stock decreases from about 4.8 to 2, which is slightly bigger than the curve of proportion in stock presented in Fig.~\ref{portfolio1}. Meanwhile, the manager invests more in cash with a proportion of 10 at the initial time. As $\mu$ characterizes the increasing rate of the promised benefits, the manager with larger $\mu$ is faced with greater liability, as such, the available fund surplus decreases. Therefore, the comparison of Fig.~\ref{portfolio11} and Fig.~\ref{portfolio1} is very similar to that of Fig.~\ref{portfolio2} and Fig.~\ref{portfolio1}.
 \begin{figure}[htbp]
	\centering
	\begin{minipage}{0.5\textwidth}
	\centering
	\includegraphics[totalheight=5cm]{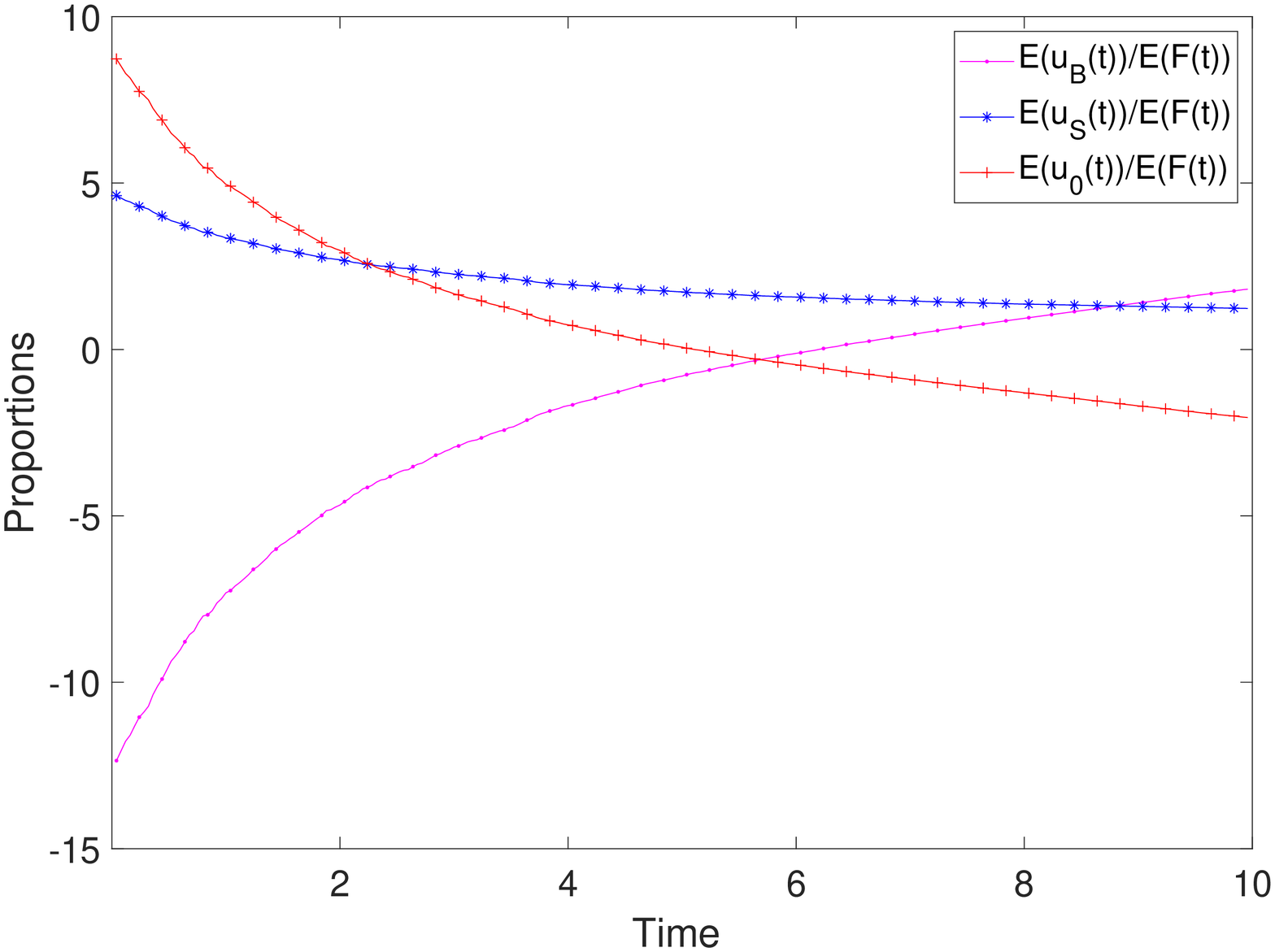}
	\caption{$\delta=0.005\rightarrow\delta=0.001$.}
	\label{portfolio8}
\end{minipage}\hfill
\begin{minipage}{0.5\textwidth}
	\centering
	\includegraphics[totalheight=5cm]{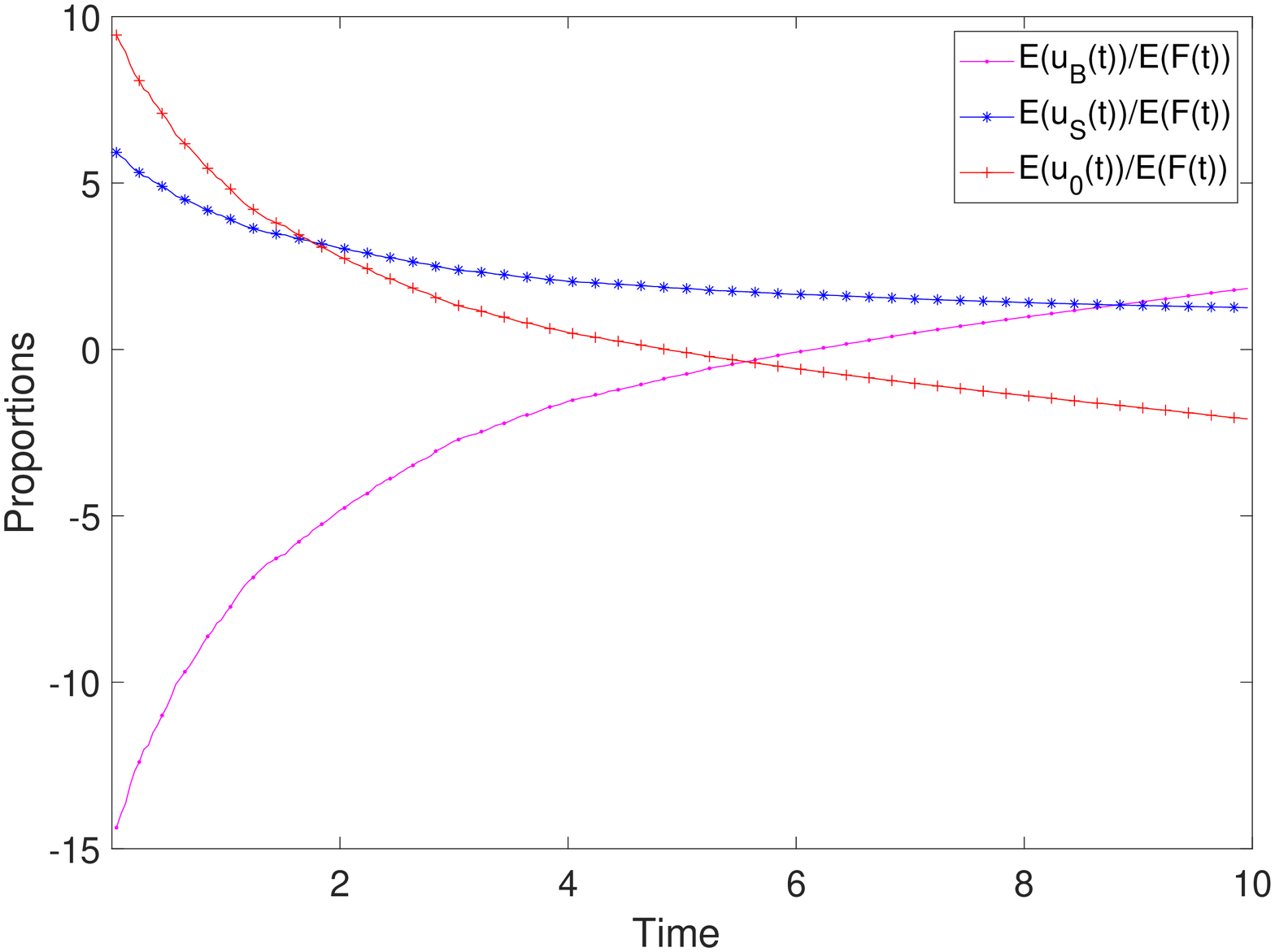}
	\caption{$k=0.06\rightarrow k=0.09$.}
	\label{portfolio15}
\end{minipage}\hfill
\begin{minipage}{0.5\textwidth}
	\centering
	\includegraphics[totalheight=5cm]{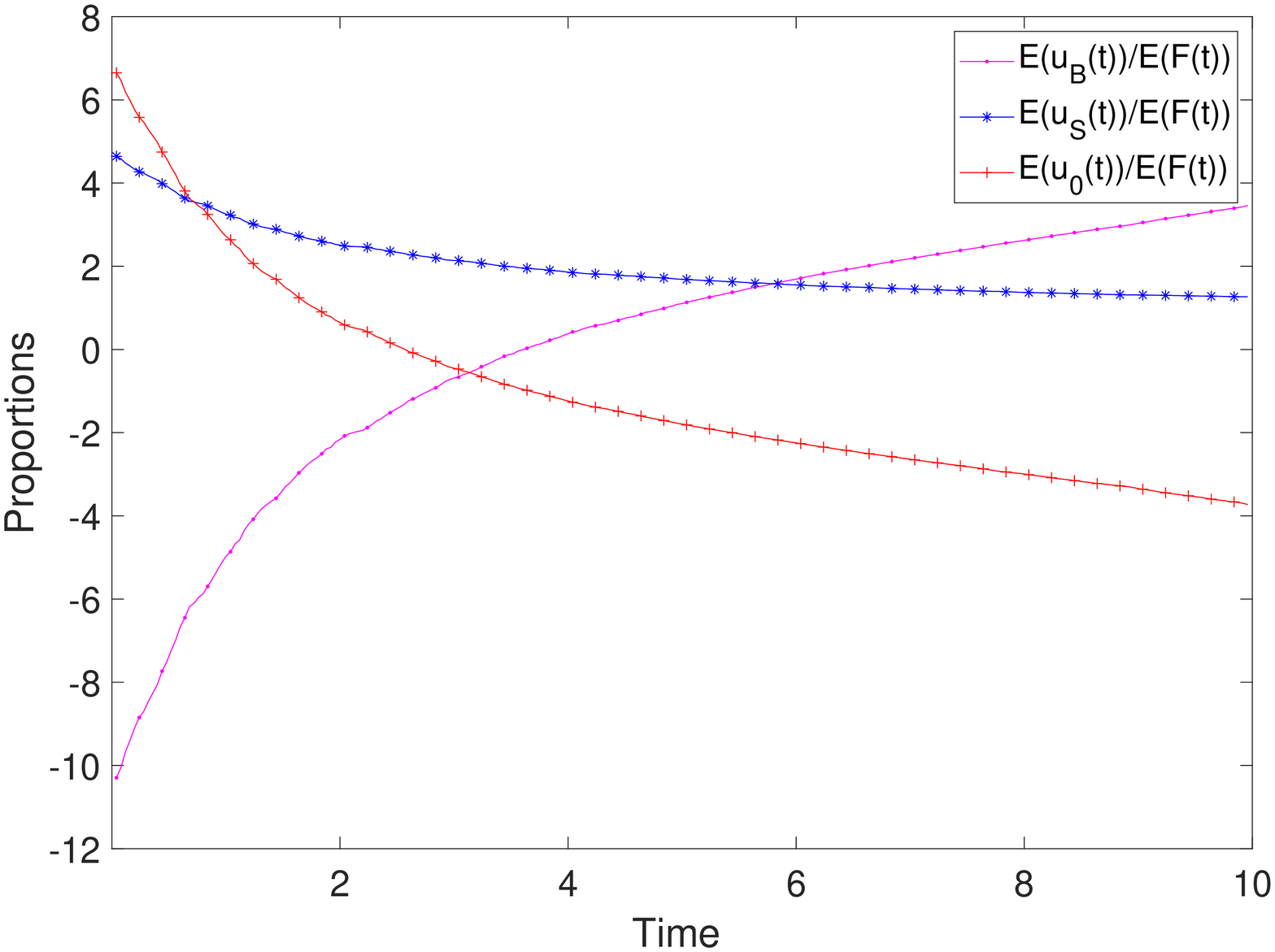}
	\caption{$\sigma_r=0.02\rightarrow\sigma_r=0.01$.}
	\label{portfolio17}
\end{minipage}\hfill
\end{figure}

Figs.~\ref{portfolio8}, \ref{portfolio15} and \ref{portfolio17} depict  the influences of $\delta$, $k$ and  $\sigma_r$ on the optimal investment strategies. $\delta$ characterizes the discount rate of the future benefits. The manager is faced with higher liabilities when $\delta$  decreases to 0.001. Then the manager becomes aggressive and allocates more in stock, which decreases from about 5  at the initial time to about 2 at the terminal time in Fig.~\ref{portfolio8}. Comparing with Fig.~\ref{portfolio1}, we also observe that proportion in cash is bigger while the proportion in bond is smaller in the case when $\delta=0.001$. We are also interested in the effect of constant $k$ on the optimal investment strategies. The contribution rate decreases in the overfunded region while increases in the underfunded region with larger $k$. As shown in Fig.~\ref{portfolio15}, when $k$ increases, the initial proportions in bond, cash, and stock are -15, 10, and 6, respectively. The impact of the interest risk $\sigma_r$ can be illustrated by comparing Figs.~\ref{portfolio1} and \ref{portfolio17}.  When $\sigma_r$ decreases to 0.01, the financial risk decreases. It is natural to expect larger investment in the cash, which is however in the opposite direction in Fig.~\ref{portfolio17}. As the actualization discount rate is set as $r(t)+\delta$, $\sigma_r$ also has a great effect on the discount rate, and therefore the influence of $\sigma_r$ on the strategies shall not be clearly revealed intuitively.

\subsection{\bf Efficient frontier}
Figs.~\ref{boundry3}-\ref{boundry6} illustrate the impacts of different parameters of efficient frontier. The efficient frontier is derived numerically based on Theorem \ref{thm:ef}. Although the closed form of the efficient frontier can not be obtained, Figs.~\ref{boundry3}-\ref{boundry6} indicate that the efficient frontier is of the parabolic form as in the mean-variance analysis. There is a balance between solvency risk in the underfunded region and expected utility in the overfunded region: if the manager has a lower (higher) tolerance level towards solvency risk, he/she can achieve higher (lower) expected utility in the overfunded region. Thus, as in the mean-variance analysis, we  regard the solvency risk (expected utility) as risk (return) and the manager needs to make a decision up to some  tolerance level. However, different from the efficient frontier in the mean-variance analysis, the efficient frontier in our problem has left and right endpoints. On the one hand, the solvency risk may not be zero and the efficient frontier may not intersect with the y-axis. On the other hand, as the terminal wealth has a negative bound, the solvency risk shall not be arbitrarily large and the efficient frontier has a right endpoint. The intersection point of the efficient frontier with the y-axis is related to the result under optimization rule (\ref{op2}). The increasing properties of curves in Figs.~\ref{boundry3}-\ref{boundry6} show that in our financial model, the manager can {sacrifice some solvency} to attain higher expected utility in the overfunded region.
\begin{figure}[htbp]
	\centering
	\begin{minipage}{0.48\textwidth}
		\centering
		\includegraphics[totalheight=5cm]{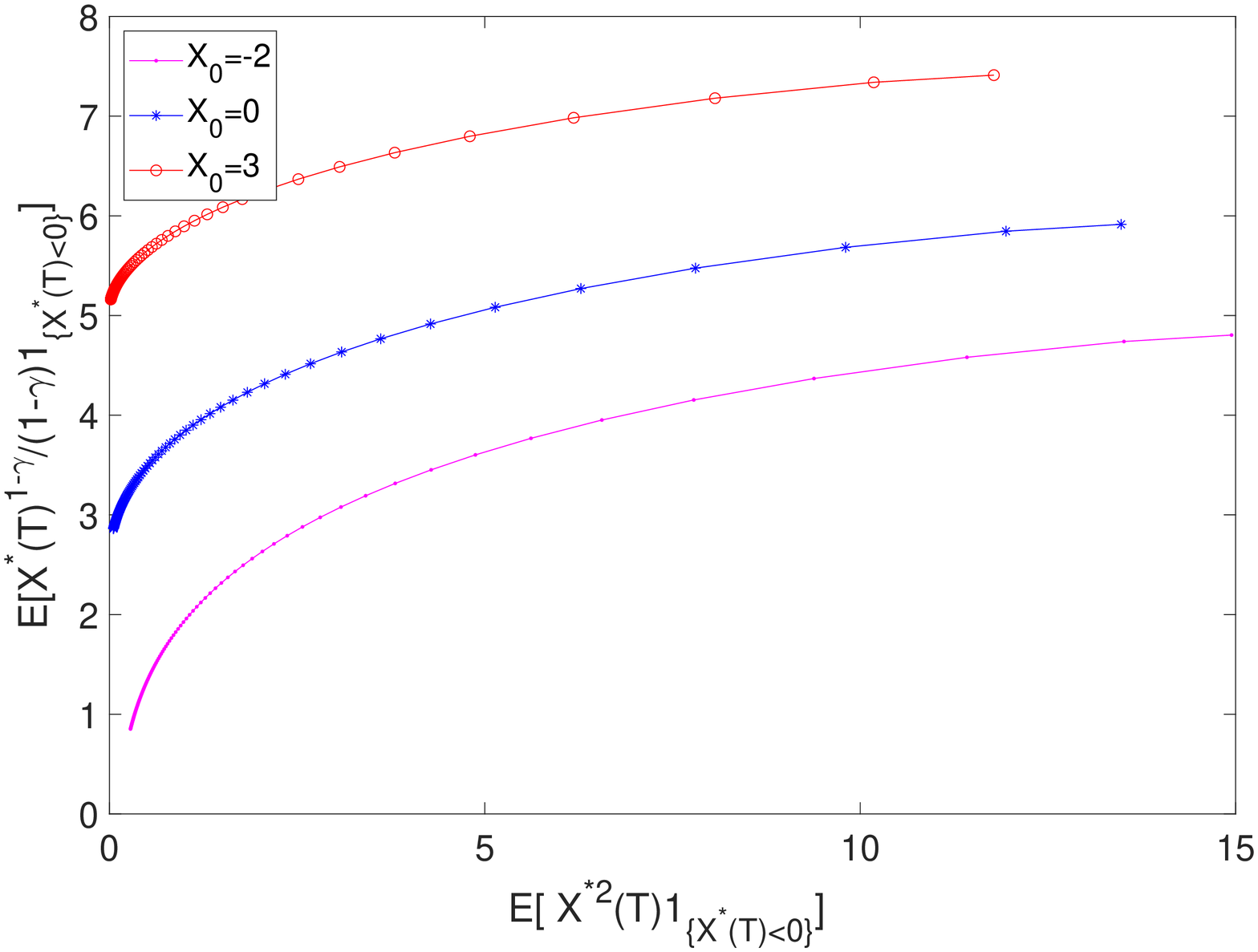}
		\caption{Effect of $X_0$.}
		\label{boundry3}
	\end{minipage}\hfill
	\begin{minipage}{0.48\textwidth}
		\centering
		\includegraphics[totalheight=5cm]{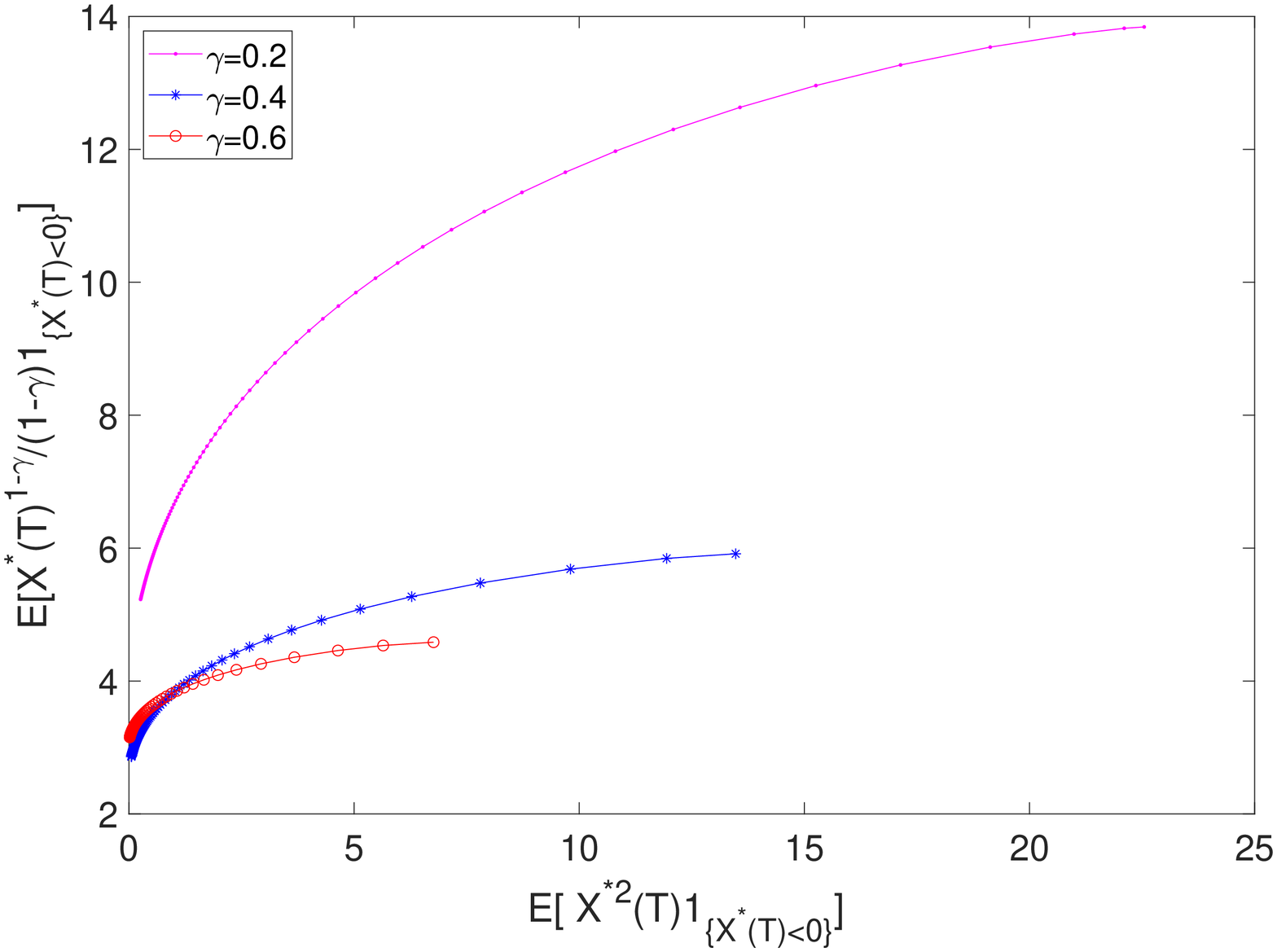}
		\caption{Effect of $\gamma$.}
		\label{boundry1}
	\end{minipage}\hfill
	\begin{minipage}{0.48\textwidth}
		\centering
		\includegraphics[totalheight=5cm]{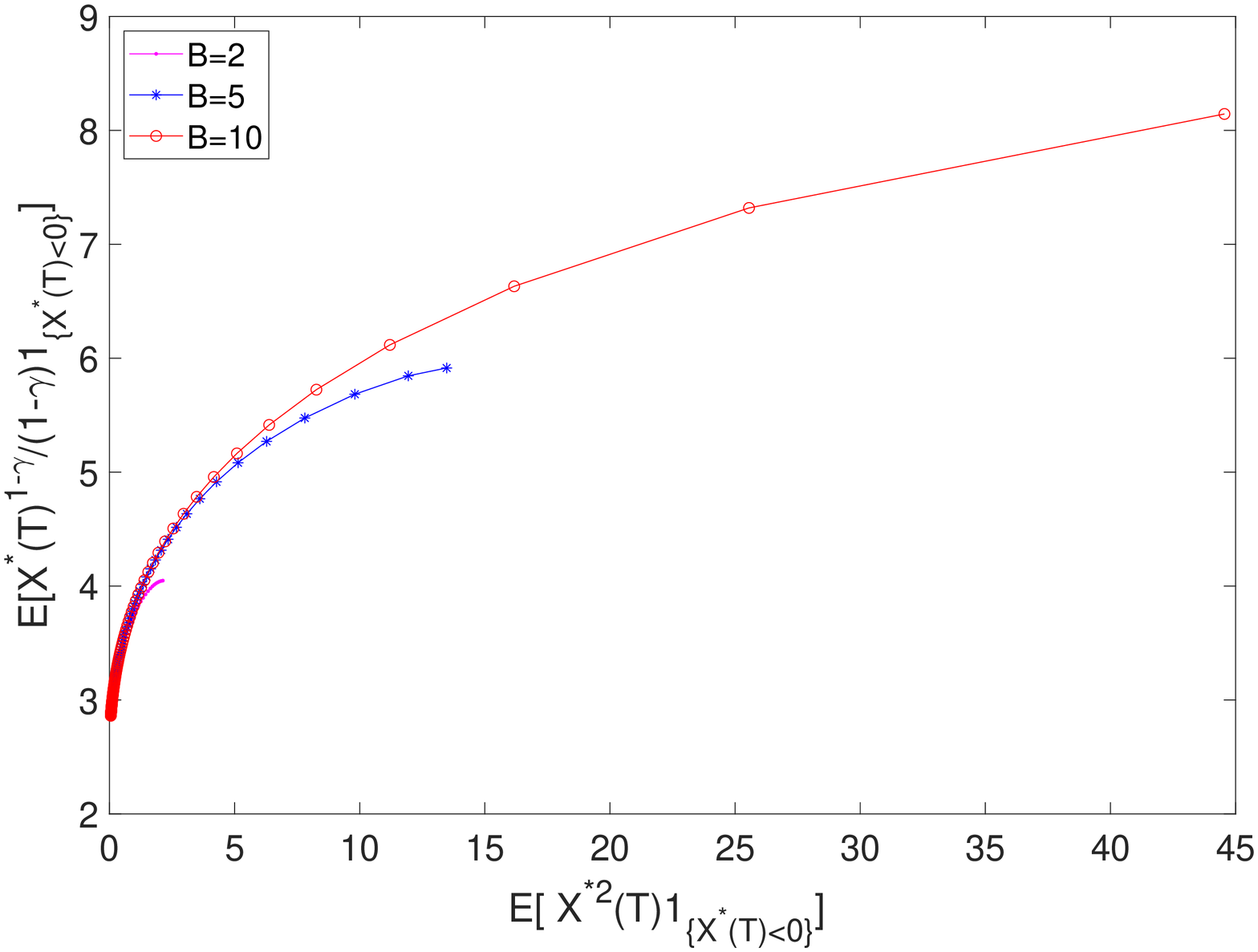}
		\caption{Effect of $B$.}
		\label{boundry2}
	\end{minipage}\hfill
	\begin{minipage}{0.48\textwidth}
		\centering
		\includegraphics[totalheight=5cm]{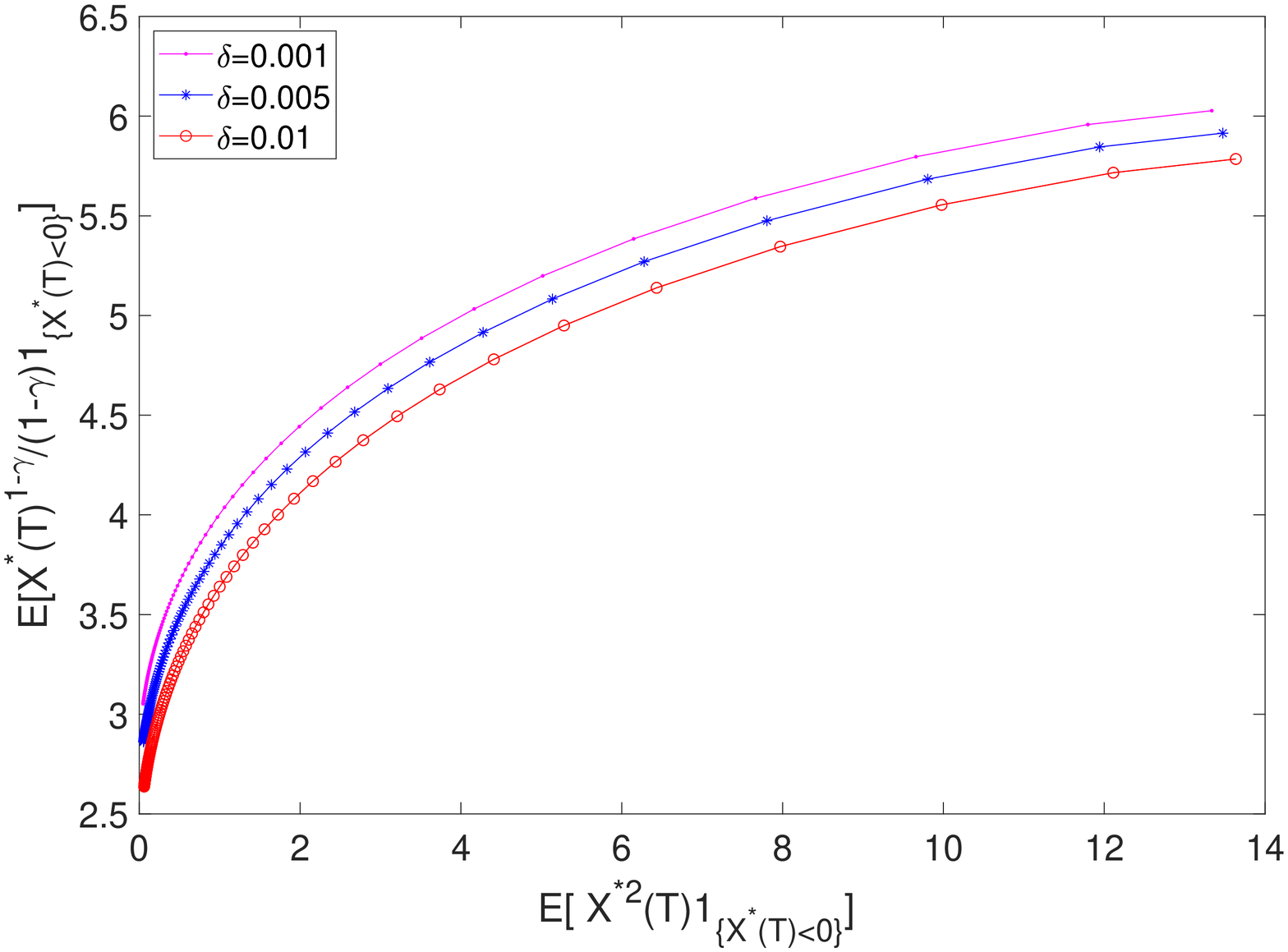}
		\caption{Effect of $\delta$.}
		\label{boundry5}
	\end{minipage}\hfill
\end{figure}
Fig.~\ref{boundry3} illustrates the relationship between the efficient frontier and the initial fund surplus $X_0$. When the manager has larger initial wealth, he/she can attain a larger expected utility for given solvency risk. As such, the efficient frontier in Fig.~\ref{boundry3} moves upward when $X_0$ increases. Besides, when $X_0=-2$, the pension fund is insolvent at the initial time and the solvency risk can not be completely eliminated. We see that in Fig.~\ref{boundry3}, the efficient frontier when $X_0=-2$ does not intersect with the y-axis. Besides, when $X_0$ decreases, the solvency risk faced by the manager increases. Fig.~\ref{boundry3} also indicates that the solvency risk of right endpoint in the efficient frontier increases when $X_0$ gets  smaller. The effect of the risk aversion parameter on the efficient frontier is indicated by Fig.~\ref{boundry1}. When $\gamma$ increases, the manager is more risk aversion towards financial risk and becomes more conservative when making decisions. Then for a larger $\gamma$, the expected utility obtained in the overfunded region decreases, which is well plotted in  Fig.~\ref{boundry1}. However, a careful observation over Fig.~\ref{boundry1} shows that the efficient frontiers when $\gamma=0.5$ and $\gamma=0.4$ interact when the solvency risk is small. These abnormal phenomena may be explained as follows: when the acceptable solvency risk is relatively small, the manager is highly risk aversion and allocates wealth in the risk-free asset. Then the multiplier $\frac{1}{1-\gamma}$ which increases with $\gamma$, has a big influence on the expected utility in the overfunded region. Then the efficient frontiers may intersect with small solvency risk.
\vskip 5pt
Fig.~\ref{boundry2} clearly reveals the connection of the bound $B$ and the efficient frontier. If the terminal wealth can fall to a smaller negative bound, the manager is risk-seeking and can attain a higher expected utility in the overfunded region. Therefore, the efficient frontier moves upward with $B$. Meanwhile, for larger $B$, the solvency risk w.r.t. the right endpoint of the efficient frontier also increases. In Fig.~\ref{boundry5}, the effect of $\delta$ on the efficient frontier is illustrated. $\delta$ reflects the discount rate of the liabilities. With a bigger $\delta$, the value of future benefits promised to participants increases, and the manager is faced with greater liabilities.  Then the manager is endowed with less fund surplus and has an efficient frontier more closing to the x-axis.

\begin{figure}[htbp]
	\centering
	\begin{minipage}{0.5\textwidth}
		\centering
		\includegraphics[totalheight=5cm]{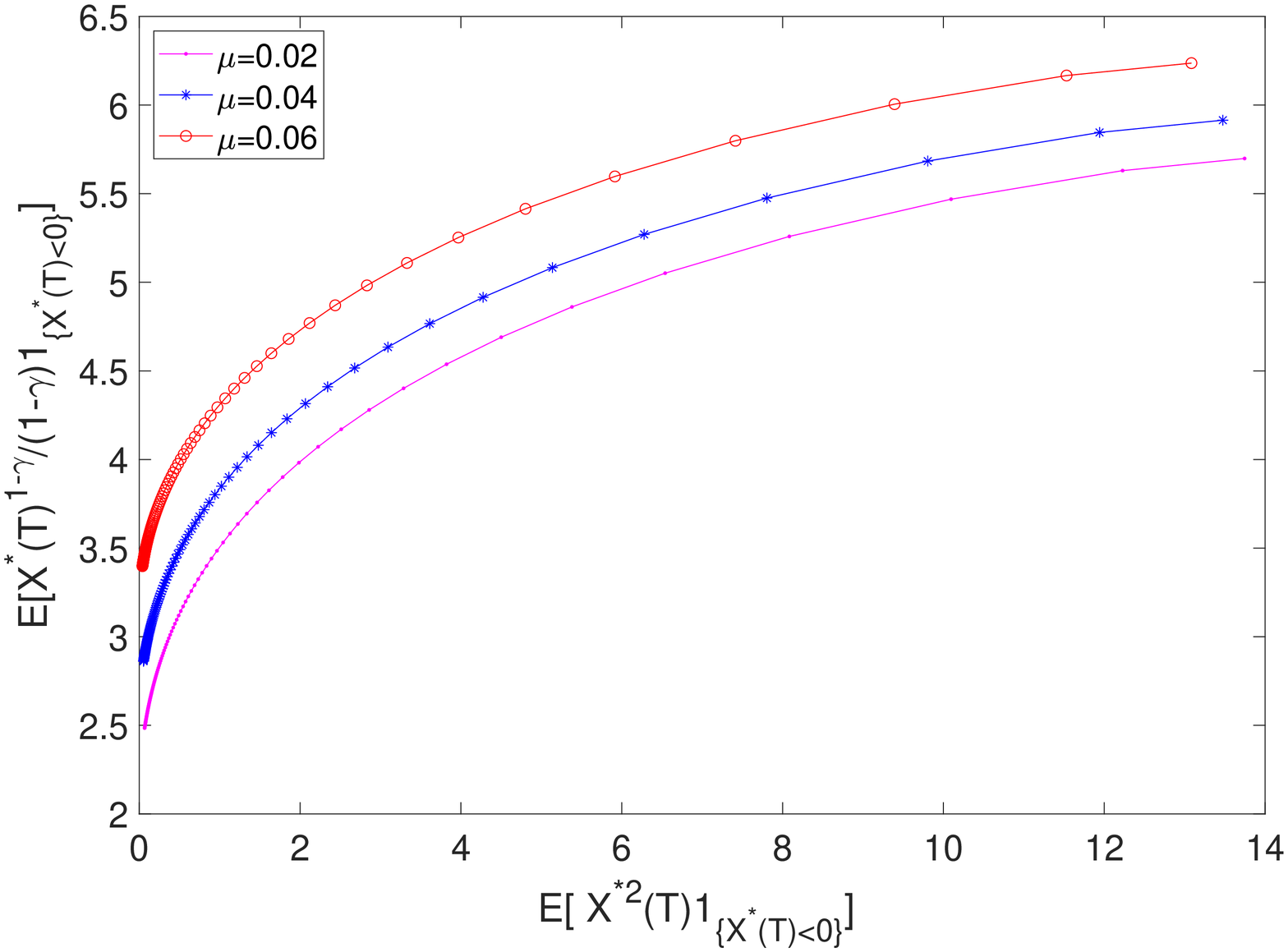}
		\caption{Effect of $\mu$.}
		\label{boundry4}
	\end{minipage}\hfill
	\begin{minipage}{0.5\textwidth}
		\centering
		\includegraphics[totalheight=5cm]{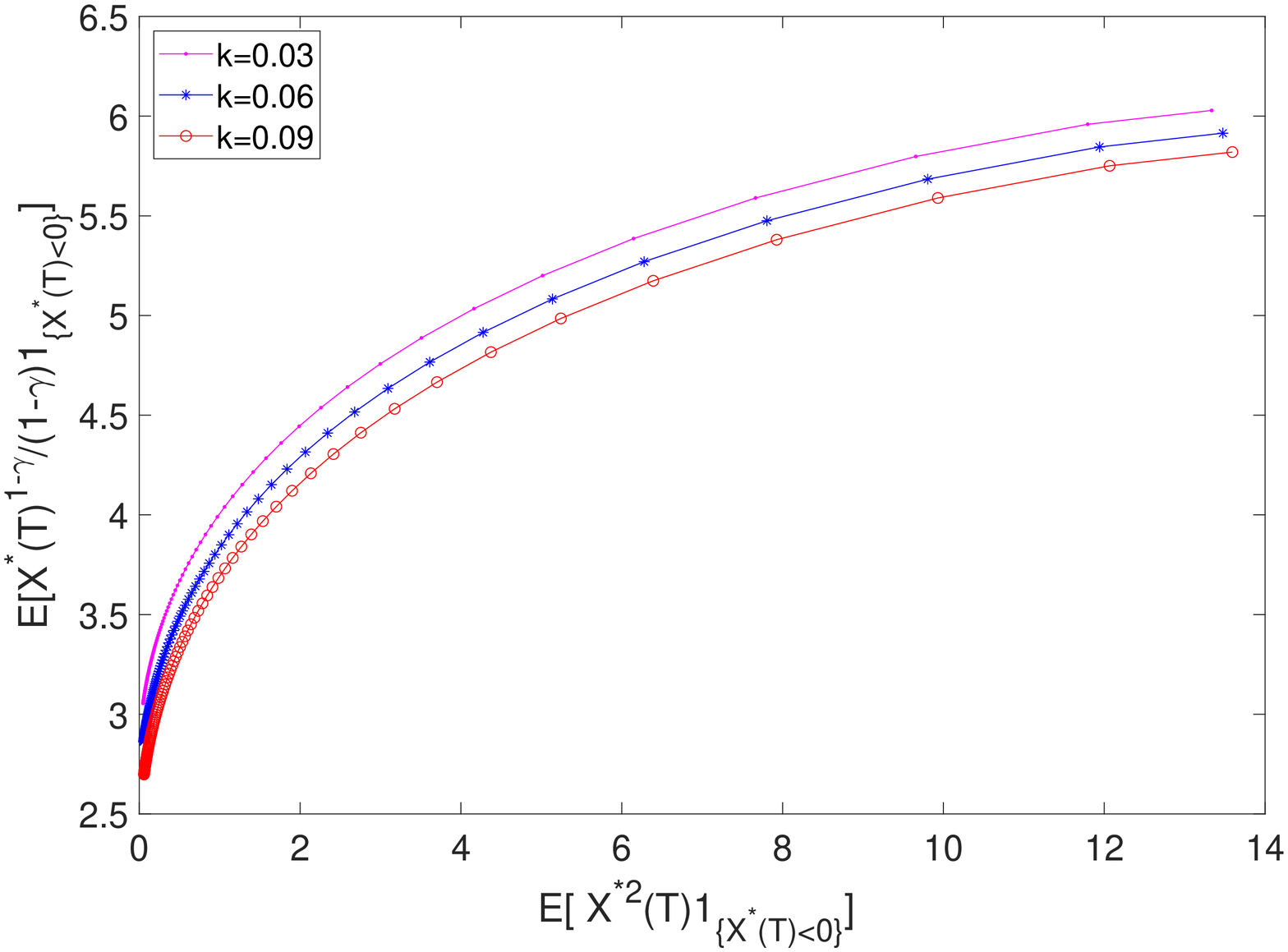}
		\caption{Effect of $k$.}
		\label{boundry7}
	\end{minipage}\hfill
	\begin{minipage}{0.5\textwidth}
		\centering
		\includegraphics[totalheight=5cm]{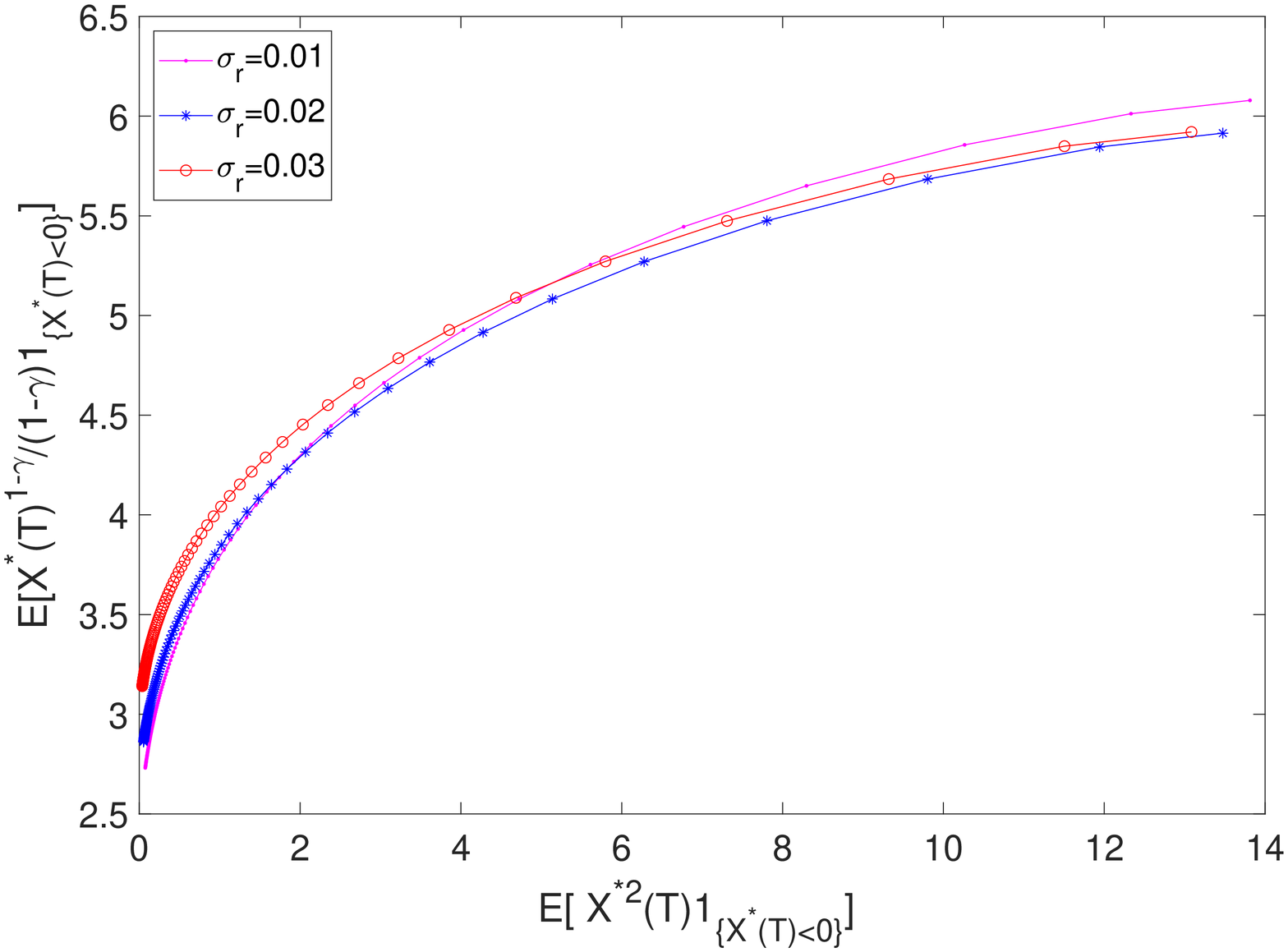}
		\caption{Effect of $\sigma_r$.}
		\label{boundry6}
	\end{minipage}\hfill
\end{figure}
The effects of $\mu$, $k$ and $\sigma_r$ are shown in Figs.~\ref{boundry4}, \ref{boundry7} and \ref{boundry6}, respectively. $\mu$ characterizes the increasing rate of the promised benefits and has two influences on the pension system. On the one hand, the liabilities of the manager increase with $\mu$, and the efficient frontier has a tendency to move downward with $\mu$. On the other hand, as the contribution rate is calculated by the spread method of fund amortization, the efficient has a tendency to move upward with $\mu$.  Fig.~\ref{boundry4} reveals that the second effect dominates. The constant $k$ reflects the rate at which surplus is amortized. When $k$ is positive and increases, the contribution rate decreases (increases) in the overfunded (underfunded) region. Fig.~\ref{boundry7} shows that the decrease of contribution rate in the overfunded region has a larger influence on the efficient frontier, i.e., the efficient frontier moves downward with $k$. The influence of interest risk on the efficient frontier is disclosed in Fig.~\ref{boundry6}, which shows that every two efficient frontiers intersect. As the interest risk impacts the actuarial discount rate and the returns from cash and bond simultaneously, the effect of $\sigma_r$ is complicated. If the solvency risk is small (large), the expected utility increases (decreases) with $\sigma_r$.
\vskip 10pt
\begin{figure}[htbp]
	\centering
	\begin{minipage}{0.5\textwidth}
		\centering
		\includegraphics[totalheight=5cm]{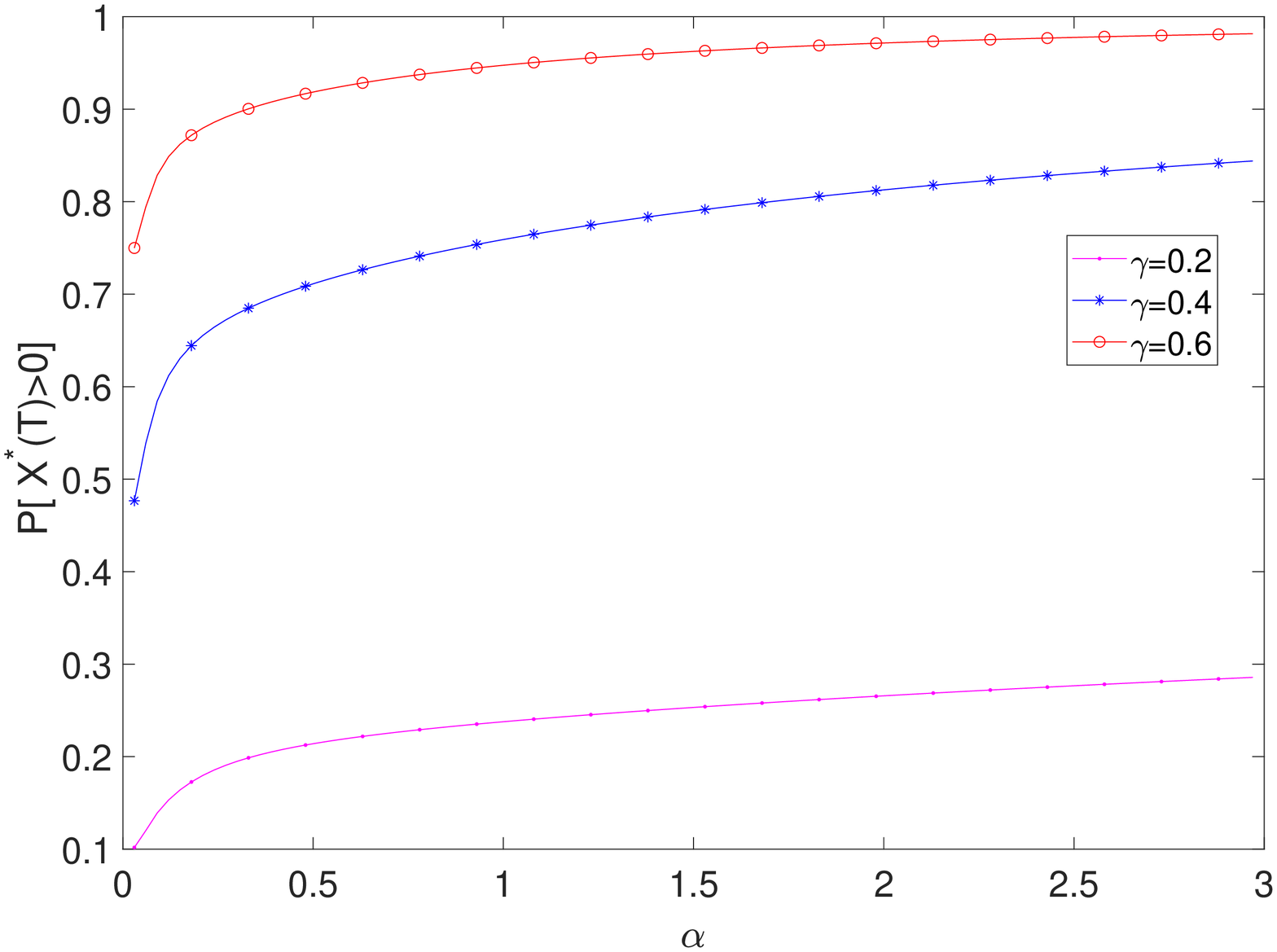}
		\caption{Probability in overfunded region w.r.t.  $\alpha$ and $\gamma$.}
		\label{p-alpha}
	\end{minipage}\hfill
	\begin{minipage}{0.5\textwidth}
		\centering
		\includegraphics[totalheight=5cm]{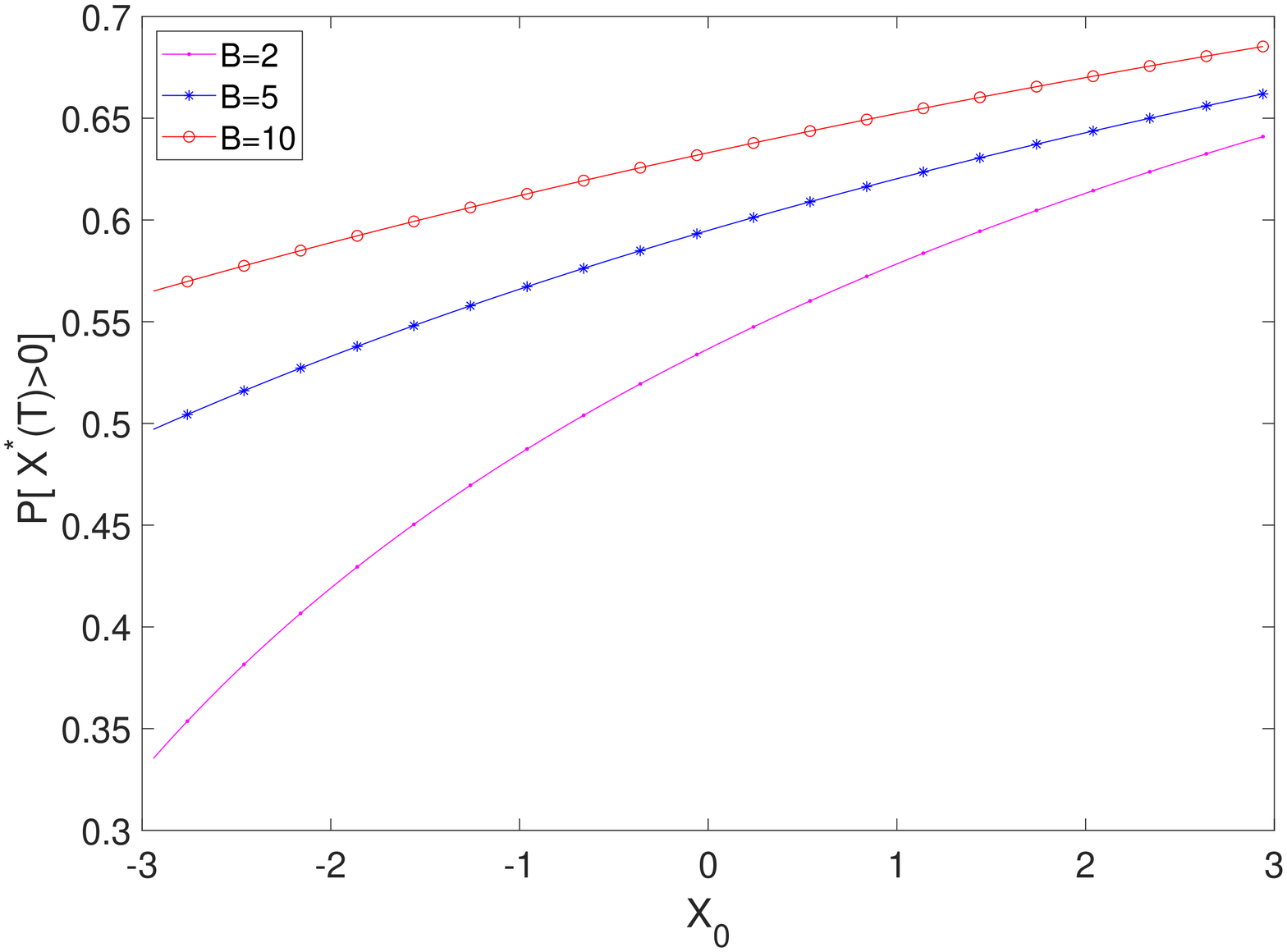}
		\caption{Probability in overfunded region w.r.t.  $X_0$ and $B$.}
		\label{p-x0}
	\end{minipage}\hfill
	\begin{minipage}{0.5\textwidth}
		\centering
		\includegraphics[totalheight=5cm]{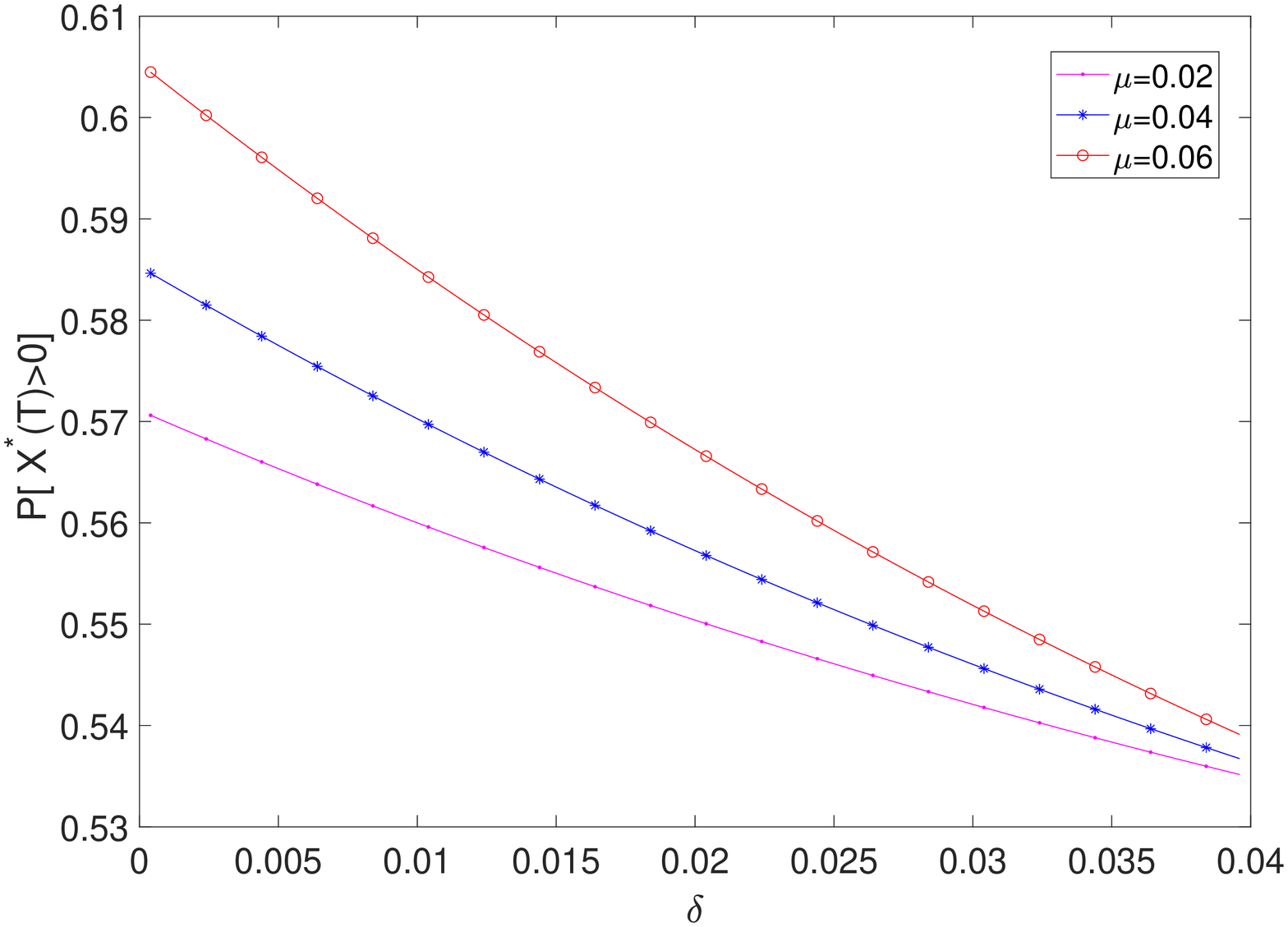}
		\caption{Probability in overfunded region w.r.t. $\delta$ and $\mu$.}
		\label{p-delta}
	\end{minipage}\hfill
	\begin{minipage}{0.5\textwidth}
		\centering
		\includegraphics[totalheight=5cm]{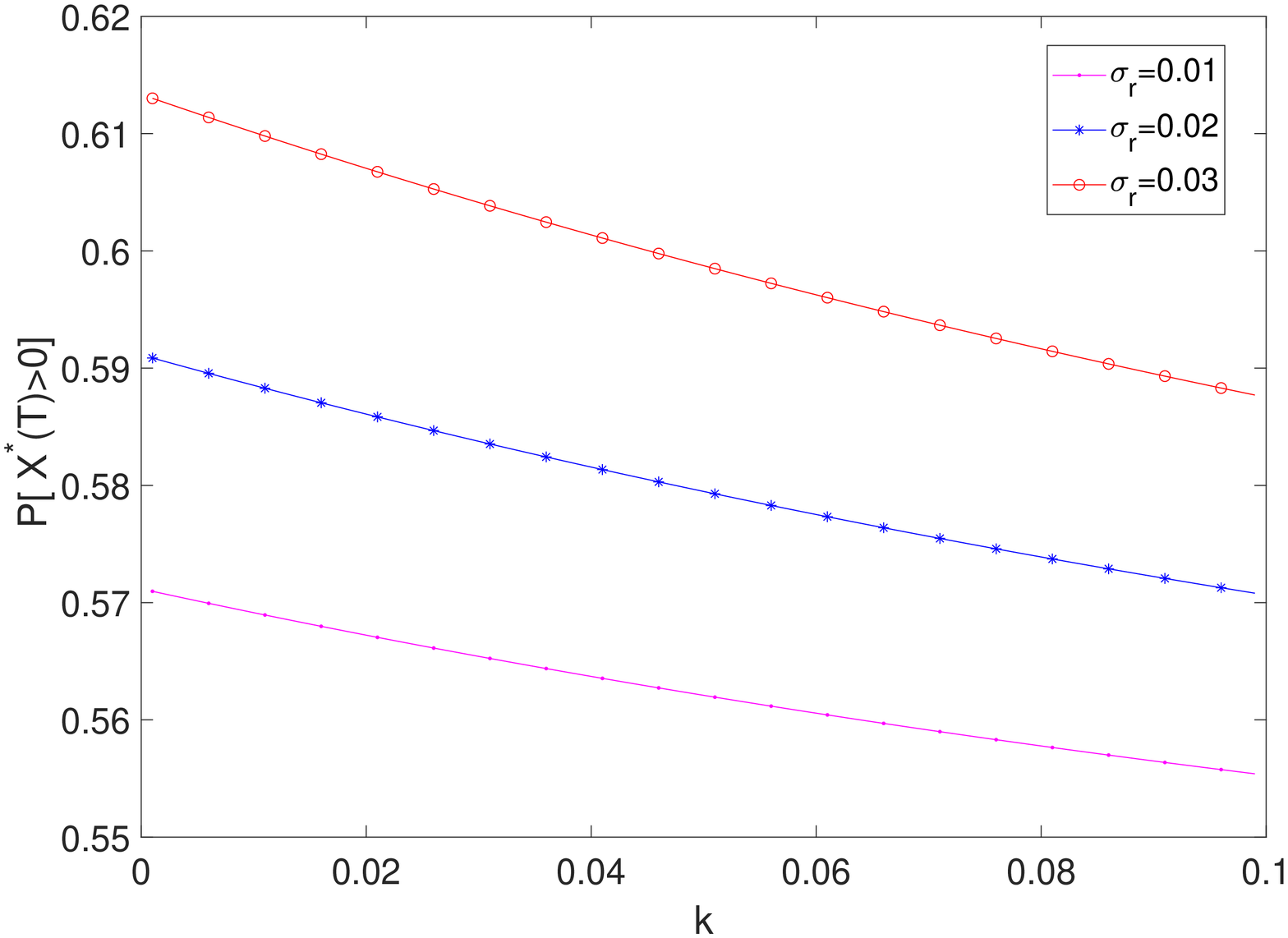}
		\caption{Probability in overfunded region w.r.t. $k$ and $\sigma_r$.}
		\label{p-k}
	\end{minipage}\hfill
\end{figure}
\subsection{\bf Probability in the overfunded region}
As the sum of the probabilities in the overfunded region and underfunded region is 1, we only plot the probability in the overfunded region. Figs.~\ref{p-alpha}-\ref{p-k} show the effects of different parameters on the probability in the overfunded region. The effects on the probability in the underfunded region are in the opposite direction. Fig.~\ref{p-alpha} illustrates the probability with different risk aversion coefficient $\gamma$ and solvency risk tolerance level $\alpha$. When $\gamma$ increases, the manager is more risk aversion towards financial risk and expects a larger probability in the overfunded region to attain higher expected utility, which is indicated by comparisons of three curves in  Fig.~\ref{p-alpha}. $\alpha$ characterizes the solvency risk tolerance level and a larger $\alpha$ indicates less tolerance towards solvency risk. As shown by Fig.~\ref{p-alpha},  the probability in the overfunded region increases with $\alpha$. Besides, we are also interested in the parameters $X_0$ and $B$, which are plotted in Fig.~\ref{p-x0}. It is natural to see that the probability in the overfunded region increases with $X_0$. Moreover, Fig.~\ref{p-x0} also shows that regardless of the initial status (overfunded/underfunded) of the fund, the terminal wealth has both the possibilities in the overfunded region and underfunded region in our framework. Moreover, the bound $B$ has a positive effect on the probability in the overfunded region. When the negative bound is big, the wealth in the underfunded region is restricted to a small horizon and the probability in the underfunded region increases. As such, the probability in the overfunded region increases with $B$.

In Fig.~\ref{p-delta}, the probabilities w.r.t. different $\delta$ and $\mu$ are illustrated. As stated above, $\delta$ and $\mu$ have two effects on the pension system. Smaller $\delta$ or larger $\mu$ indicates larger promised liabilities as well as larger contribution rate, which have the opposite effects on the evolution of the fund surplus. As such, it is not clear to clarify the effects of  $\delta$ and $\mu$. Fig.~\ref{p-delta} shows that probability in the overfunded region decreases with $\delta$  while increasing with $\mu$.  Comparing Figs.~\ref{p-k} and \ref{boundry7}, when $k$ increases, the decrease of contribution in the overfunded region has a larger effect. As such, the fund surplus has a larger probability to become underfunded with a larger $k$, which is well indicated by the decrease of probability in the overfunded region with $k$ in Fig.~\ref{p-k}. As stated in Fig.~\ref{boundry6}, the impact of interest risk $\sigma_r$ on the pension fund is complicated. Fig.~\ref{p-k} shows the positive relationship of $\sigma_r$ and the probability in the overfunded region.
\vskip 15pt	
\section{\bf Conclusion}
In this paper, we investigate the optimal management of an aggregated defined benefit pension plan. Different from previous studies, we suppose that {no matter} starting in the underfunded or overfunded region, the fund wealth may become underfunded or overfunded in the future. The goal of the manager in the underfunded region is to minimize the solvency risk as in \cite{JR10}. Meanwhile, in the overfunded region, the expected utility is considered as in \cite{JJ18}. We formulate the optimization rule of the pension manager as {a} weighted sum of the solvency risk and the expected utility.
	
The pension manager is faced with stochastic contribution rate, stochastic benefits, and {the interest risk in financial market}. In our paper, the fund surplus contains additional stochastic processes in the drift and diffusion terms. First, we replicate the additional stochastic processes in the drift term and transform the original problem into an equivalent one. {Using  martingale method} and replication technique, the explicit solutions of optimal wealth, portfolio, efficient frontier are obtained. We have four cases of the results, corresponding to high tolerance towards solvency risk, low tolerance towards solvency risk,  a specific lower bound, and high lower bound.

We also show detailed numerical examples to illustrate the manager's economic behaviors. The efficient frontier is very similar to that in the mean-variance analysis, which characterizes a balance {between  solvency risk and expected utility}. The impacts of different parameters on the optimal wealth process, optimal portfolio, efficient frontier  and probability in the overfunded region are also illustrated.

\vskip 15pt
{\bf Acknowledgements.}
The authors acknowledge the support from the National Natural Science Foundation of China (Grant  No.11901574, No.11871036). The authors thank the members of the group of Mathematical Finance and Actuarial Science at the Department of Mathematical Sciences, Tsinghua University for their feedbacks and useful conversations.

%
%
%

\vskip 15pt
\appendix
	\renewcommand{\theequation}{\thesection.\arabic{equation}}
	\section{\bf Proof of Proposition \ref{ALNC}}\label{aALNC}
\begin{proof}
Using Eqs.~(\ref{equ-r}) and (\ref{equ-P}), we easily obtain the explicit solutions of $r(t)$ and $P(t)$ as follows:
\begin{eqnarray}
r(t)&=&b-e^{-at}\left[(b-r_0)+\sigma_r\int_0^te^{as}\rd W_r(s)\right],\label{r}\\
P(t)&=&P(0)\exp\left(\left(\mu-\frac12
\left(\sigma_{P_1}^2+\sigma_{P_2}^2\right)\right)t+\sigma_{P_1}W_r(t)+\sigma_{P_2}W_S(t)\right).\label{equ-P_1}
\end{eqnarray}
     Based on Eq.~(\ref{equ-r}), we also obtain the integral of the interest rate
\begin{equation*}
		-\int_{t}^{t+d-x}r(s)\rd s=\frac{1-e^{-a(d-x)}}{a}(b-r(t))-b(d-x)+{\sigma_r}\int_{t}^{t+d-x}\frac{1-e^{-a(t+d+x-s)}}{a}\rd W_r(s).
	\end{equation*}
Combining the above equations, the conditional expectation in $AL(t)$ and $NC(t)$ can be calculated
	\begin{equation*}
		\begin{aligned}
			&\mathbb{E}\left[e^{-\int_{t}^{t+d-x} \hat{\delta}(s) d s}  P(t+d-x)  \mid \mathcal{F}_{t}\right]\\
			=&\exp\left({A(x,d)(b-r(t))-b(d-x)-\hat{\delta}(d-x)+\left(\mu-\frac12\sigma_{P_1}^2\right)(d-x)}\right)P(t)\\
			&\times\mathbb{E}\left[\exp\left({-\frac{\sigma_r}{a}e^{-a(t+d-x)}\int_{t}^{t+d-x} e^{as}\rd W_r(s)+(\frac{\sigma_r}{a}+\sigma_{P_1})(W_r(t+d-x)-W_r(t))}\right)\right]\\
&\times\mathbb{E}\left[\exp\left({\sigma_{P_2}(W_S(t+d-x)-W_S(t))-\frac12\sigma_{P_2}^2(d-x)}\right)\right]\\
=&\exp\left({A(x,d)(b-r(t))-b(d-x)-\hat{\delta}(d-x)+\left(\mu-\frac12\sigma_{P_1}^2\right)(d-x)}\right)P(t)\\
			&\times \exp \left(-\frac{\sigma_r^2}{4a}A(x,d)^2-\left(\frac{\sigma_r \sigma_{P_1}}{a}+\frac{\sigma_r^2}{2a^2}\right)A(x,d) +\left(\frac{\sigma_r \sigma_{P_1}}{a}+\frac{\sigma_r^2}{2a^2}+\frac{1}{2} \sigma_{P_1}^2\right)(d-x)\right)\times1\times1\\
=&\!\!\exp \left(-\frac{\sigma_r^2}{4a}A(x,d)^2\!+\!\left(b\!-\!r(t)\!-\!\frac{\sigma_r \sigma_{P_1}}{a}\!-\!\frac{\sigma_r^2}{2a^2}\right)A(x,d)\! +\!\left(\frac{\sigma_r \sigma_{P_1}}{a}\!+\!\frac{\sigma_r^2}{2a^2}\!-\! b\!-\!\delta\!+\!\mu\right)(d\!-\!x)\right)P(t).
		\end{aligned}
	\end{equation*}
	Denote
	\begin{equation*}\begin{aligned}
			D(x,d)= -\frac{\sigma_r^2}{4a}A(x,d)^2+\left(b-\frac{\sigma_r \sigma_{P_1}}{a}-\frac{\sigma_r^2}{2a^2}\right)A(x,d) +\left(\frac{\sigma_r \sigma_{P_1}}{a}+\frac{\sigma_r^2}{2a^2}- b-\delta+\mu\right)(d-x),
		\end{aligned}
	\end{equation*}
	 we have
	\begin{equation*}
		\begin{split}
			AL(t)=\int_{m}^{d} e^{-r(t)A(x,d)+D(x,d)}M(x)\rd x P(t),\\
			NC(t)=\int_{m}^{d} e^{-r(t)A(x,d)+D(x,d)}M'(x)\rd x P(t).
		\end{split}
	\end{equation*}
	Denote
	\begin{equation*}
		f_0(r(t))=\int_{m}^{d}  e^{-r(t)A(x,d)+D(x,d)}M(x)\rd x.
	\end{equation*}
	Because
\begin{align*}
D'(x,d)&=-\frac{\sigma_r^2}{2a}A(x,d)A'(x,d)+\left(b-\frac{\sigma_r^2}{2a^2}
-\frac{\sigma_{P_1}\sigma_r}{a}\right)A'(x,d)\\
&-\left(\frac{\sigma_r \sigma_{P_1}}{a}+\frac{\sigma_r^2}{2a^2}- b-\delta+\mu\right)\\
&=-\frac{\sigma_r^2}{2a}A(x,d)(aA(x,d)-1)+\left(b-\frac{\sigma_r^2}{2a^2}
-\frac{\sigma_{P_1}\sigma_r}{a}\right)(aA(x,d)-1)\\
&-\left(\frac{\sigma_r \sigma_{P_1}}{a}+\frac{\sigma_r^2}{2a^2}- b-\delta+\mu\right)\\
&=-\frac{\sigma_r^2}{2}A^2(x,d)+\left(ab-{\sigma_{P_1}\sigma_r}\right)A(x,d)+\left(\delta-\mu\right),
\end{align*}
we obtain \begin{equation*}
		\rd f_0(r(t))=f_1(r(t))\rd t+f_2(r(t))\rd W_r(t),
	\end{equation*}
where
\begin{equation*}
	\begin{aligned}
		f_1(r(t))=&\int_{m}^{d}  e^{-r(t)A(x,d)+D(x,d)}M(x)A(x,d)\left(\frac{\sigma_r^2}{2}A(x,d)-a(b-r(t))\right)\rd x,\\
		f_2(r(t))=&\sigma_r\int_{m}^{d}  e^{-r(t)A(x,d)+D(x,d)}M(x)A(x,d)\rd x.
	\end{aligned}
\end{equation*}
Then, by the It\^{o}'s formula, $AL$ satisfies the following SDE
\begin{equation*}\begin{aligned}
			\rd AL(t)=&[P(t)(f_1(r(t))+\sigma_{P_1}f_2(r(t)))+\mu AL(t)]\rd t\\&+[P(t)f_2(r(t))+\sigma_{P_1}AL(t)]\rd W_r(t)+\sigma_{P_2}AL(t)\rd W_S(t).
		\end{aligned}
	\end{equation*}	
	\end{proof}
\section{\bf Proof of Proposition \ref{prop:x}}\label{aprop:x}
Similar to the proof to Proposition \ref{ALNC}, denote
\begin{equation*}
	\psi_{NC}(t)=\int_{m}^{d} e^{-r(t)A(x,d)+D(x,d)}M'(x)\rd x ,
\end{equation*}
then we have \begin{align*}
	\psi_{NC}(t)=&\int_{m}^{d} e^{-r(t)A(x,d)+D(x,d)}M'(x)\rd x \\
	=&M(d)-e^{-r(t)A(m,d)+D(m,d)}M(m)\\
&-\int_{m}^{d}  e^{-r(t)A(x,d)+D(x,d)}M(x)(-r(t)A'(x,d)+D'(x,d))\rd x\\
	=&1-\int_{m}^{d}  e^{-r(t)A(x,d)+D(x,d)}M(x)(-r(t)A'(x,d)+D'(x,d))\rd x.
\end{align*}
As such,
\begin{align*}
	NC(t)=P(t)-\int_{m}^{d}  e^{-r(t)A(x,d)+D(x,d)}M(x)(-r(t)A'(x,d)+D'(x,d))\rd xP(t).
\end{align*}
Applying It\^{o}'s formula to $X(t)$, we have
\!\!\!\!\!\begin{eqnarray*}
\!\!\!\!\mathrm{d}X(t)&\!=\!&r(t)F(t)\mathrm{d}t\!+\![u_B(t)h(K)\!+\!u_S(t)\sigma_1](\lambda_r\mathrm{d}t
\!+\!\mathrm{d}W_r(t))\!+\!u_S(t)\sigma_2(\lambda_S\mathrm{d}t+\mathrm{d}W_S(t))\\
&+&\left(-\int_{m}^{d}  e^{-r(t)A(x,d)+D(x,d)}M(x)(-r(t)A'(x,d)+D'(x,d))\rd xP(t)-kX(t)\right)\mathrm{d}t\\
&-&[P(t)(f_1(r(t))+\sigma_{P_1}f_2(r(t)))+\mu AL(t)]\rd t
\\
&-&[P(t)f_2(r(t))+\sigma_{P_1}AL(t)]\rd W_r(t)-\sigma_{P_2}AL(t)\rd W_S(t)\\
&=&(r(t)-k)X(t)\rd t-AL(t)((\mu-r(t))\rd t+\sigma_{P_1}\rd W_r(t)+\sigma_{P_2}\rd W_S(t)\\
&&-P(t)\int_{m}^{d}  e^{-r(t)A(x,d)+D(x,d)}M(x)(r(t)-{\sigma_{P_1}\sigma_r}A(x,d)+\left(\delta-\mu\right))
\rd x\rd t\\
&&+[u_B(t)h(K)+u_S(t)\sigma_1](\lambda_r\mathrm{d}t+\mathrm{d}W_r(t))
+u_S(t)\sigma_2(\lambda_S\mathrm{d}t+\mathrm{d}W_S(t))\\
&&-P(t)f_2(r(t))(\sigma_{P_1}\rd t+\rd W_r(t))\\
&=&\!(r(t)\!-\!k)X(t)\rd t\!-\!P(t)f_2(r(t))\rd W_r(t)\!-\!AL(t)\left(\delta\rd t\!+\!\sigma_{P_1}\rd W_r(t)\!+\!\sigma_{P_2}\rd W_S(t)\right)\\
&&+[u_B(t)h(K)+u_S(t)\sigma_1](\lambda_r\mathrm{d}t+\mathrm{d}W_r(t))
+u_S(t)\sigma_2(\lambda_S\mathrm{d}t+\mathrm{d}W_S(t))
\end{eqnarray*}
\begin{eqnarray*}
&=&(r(t)-k)X(t)\rd t+\lambda_rP(t)f_2(r(t))\rd t+\left(\lambda_r\sigma_{P_1}+\lambda_S\sigma_{P_2}-\delta\right)P(t)f_0(r(t))\rd t\\
&&+[u_B(t)h(K)+u_S(t)\sigma_1-P(t)f_2(r(t))-\sigma_{P_1}P(t)f_0(r(t))](\lambda_r\mathrm{d}t+\mathrm{d}W_r(t))\\
&&+[u_S(t)\sigma_2-\sigma_{P_2}P(t)f_0(r(t))](\lambda_S\mathrm{d}t+\mathrm{d}W_S(t)).
\end{eqnarray*}
\section{\bf  Replication of additional drift terms}\label{replication}
\subsection{\bf Replication of $\{\lambda_rP(t)f_2(r(t))\}$}
First, we replicate the term $\lambda_rP(t)f_2(r(t))$. We introduce an auxiliary  process $G=\left\{G(t,r(t),P(t),s)|0\leqslant t\leqslant s\right\} $ to represent  the price of an asset with payment $P(s)f_2(r(s))$ at maturity time $s$. By the Markov's property of $P$ and $r$, we guess that $G$ has the form
\[G(t,r(t),P(t),s)=P(t)g(t,r(t),s).\]
By the theory of derivative pricing, $g(t,r(t),s)$ satisfies the following partial differential equation (PDE)
\begin{equation}\label{equ:g}
	g_t+\mu g+a(b-r)g_r+\frac{1}{2}\sigma_r^2g_{rr}-\sigma_r\sigma_{P_1}g_r=(r-k)g+
	\lambda_r(\sigma_{P_1}g-\sigma_rg_r)+\lambda_S\sigma_{P_2}g
\end{equation}
with boundary condition $g(s,r,s)=f_2(r)$.

Meanwhile, $G(t,r(t),P(t),s)$ satisfies the following BSDE
\begin{equation}\label{equ:G}
	\left\{\begin{aligned}
		&\rd G(t,r,P,s)\!=\!P\left\{((r\!-\!k)g\!+\!\lambda_r(\sigma_{P_1}g\!-\!\sigma_rg_r)\!+\!\lambda_S\sigma_{P_2}g)\rd t\!\!+\!\!(\sigma_{P_1}g-\sigma_rg_r)\rd W_r\!\!+\!\!\sigma_{P_2}g\rd W_s\right\}
		,\\&G(s,r(t),P(t),s)=P(t)f_2(r(t)).
	\end{aligned}\right.
\end{equation}
Then, we solve PDE~(\ref{equ:g}). By the boundary condition $g(s,r,s)=f_2(r)$ and the form of $f_2(r)$, we guess that $g$ has the following form
\begin{equation*}
	g(t,r(t),s)=\sigma_r\int_{m}^de^{-r(t)\tilde{A}(t,s,x,d)+\tilde{D}(t,s,x,d)}M(x)A(x,d)\rd x
\end{equation*}
with boundary condition $\tilde{A}(s,s,x,d)=A(x,d)$ and $\tilde{D}(s,s,x,d)=D(x,d)$.  Substituting the last equation into PDE~(\ref{equ:g}) and arranging it by the orders of $r$, $\tilde{A}(t,s,x,d)$ and $\tilde{D}(t,s,x,d)$ satisfy the following ODEs
\begin{equation*}
	\left\{\begin{aligned}
		&-\tilde{A}_t(t,s,x,d)+aA(t,s,x,d)=1,\quad \tilde{A}(s,s,x,d)=A(x,d),\\
		&\tilde{D}_t(t,s,x,d)\!+\!\mu\!+\!\frac{1}{2}\sigma_r^2\tilde{A}^2\!-\!ab\tilde{A}\!+\!\sigma_r\sigma_{P_1}\tilde{A}\!=\!\lambda_r\sigma_{P_1}\!+\!\lambda_S\sigma_{P_2}\!+\!\lambda_r\sigma_r\tilde{A}\!-\!k,\quad\tilde{D}(s,s,x,d)\!=\!D(x,d).
	\end{aligned}\right.
\end{equation*}
Solving the above equations, we obtain the following results
\begin{equation*}
	\left\{\begin{aligned}
		\tilde{A}(t,s,x,d)&=A(t,s)+A(x,d)e^{-a(s-t)},\\
		\tilde{D}(t,s,x,d)&=\frac{\sigma_r^2}{2a^3}(4e^{a(t-s)}-3-e^{2a(t-s)})+bA(t,s)\\
		&+A(x,d)b(e^{-a(s-t)}-1)+D(x,d)+(b-k-\mu+\lambda_r\sigma_{P_1}+\lambda_S\sigma_{P_2})(t-s)\\
		&+\frac{\sigma_r^2}{2a} A(x,d)^2(1-e^{2a(t-s)})-\frac{\sigma_r}{a}(\lambda_r-\sigma_{P_1})(A(x,d)(1-e^{a(t-s)})+s-t)\\
		&+\frac{\sigma_r}{a^2}(A(x,\!d)\sigma_re^{2a(t-s)}\!+\!e^{a(t-s)}(\sigma_{P_1}\!-\!\lambda_r\!-\!2\sigma_rA(x,d))\!+\!\lambda_r\!-\!\sigma_{P_1}\!+\!\sigma_r(A(x,\!d)\!-\!t\!+\!s)).
	\end{aligned}\right.
\end{equation*}
In order to replicate the cash flow with  rate of $P(t)f_2(r(t))$, we introduce a process
\begin{equation*}
	H(t,r(t),P(t),T)=\int_t^TG(t,r(t),P(t),s)\rd s=P(t)\int_t^Tg(t,r(t),s)\rd s,
\end{equation*}
where $H(t,\!r(t),\!P(t),\!T)$ represents the fair value (at time $t$) of the aggregated cash flow of rate $\lambda_rP(s)f_2(r(s)$ within $s\in[t,T]$. By It\^{o}'s formula and Eq.~(\ref{equ:G}), $H(t,r(t),P(t),T)$ satisfies the following BSDE
\begin{equation}\label{sde:H}
	\left\{	\begin{aligned}
		\rd H(t,r(t),P(t),T)=&-P(t)f_2(r(t))\rd t+(r(t)-k)H(t,r(t),P(t),T)\rd t\\
		&+\int_t^T(\sigma_{P_1}g(t,r(t),s)-\sigma_rg_r(t,r(t),s))\rd s (\lambda_r\rd t+\rd  W_r(t))\\
		&+\sigma_{P_2}\int_t^Tg(t,r(t),s)\rd s(\lambda_S\rd t+\rd W_S(t)),\\
		H(T,r(T),P(T),T)=&0.
	\end{aligned}\right.
\end{equation}
\subsection{\bf Replication of $\{(\lambda_r\sigma_{P_1}+\lambda_S\sigma_{P_2}-\delta)P(t)f_0(r(t))\}$}
We now replicate the term $(\lambda_r\sigma_{P_1}+\lambda_S\sigma_{P_2}-\delta)P(t)f_0(r(t))$ in Eq.~(\ref{SDE-X}). Compared with $f_0(r)$, $f_2(r)$ has an additional term $A(x,d)$ in the integral.  Denote $\tilde{G}=\left\{\tilde{G}(t,r(t),P(t),s)|0\leqslant t\leqslant s\right\} $ to represent  the price of an asset with payment $P(s)f_0(r(s))$ at maturity time $s$. Denote $\tilde{H}(t,r(t),P(t),T)=\int_t^T\tilde{G}(t,r(t),s)\rd s$ as the present value of future cash flow with rate of $P(s)f_0(r(s))$ within $s\in[t,T]$. We use the process $\tilde{H}=\left\{\tilde{H}(t,r(t),P(t),s)|0\leqslant t\leqslant s\right\} $ to replicate the term $P(t)f_0(r(t))$ in Eq.~(\ref{SDE-X}).  By the similar arguments of the last subsection, we omit the tedious calculation and directly show the results for $\tilde{G}$ and $\tilde{H}$.

$\tilde{G}$ is explicitly obtained as follows:
\[
\tilde{G}(t,r(t),P(t),s)=P(t)\tilde{g}(t,r(t),s),
\]
where
\begin{equation*}
	\tilde{g}(t,r(t),s)=\int_{m}^de^{-r(t)\tilde{A}(t,s,x,d)+\tilde{D}(t,s,x,d)}M(x)\rd x.
\end{equation*}
Moreover,  $\tilde{G}$ satisfies the following BSDE
\begin{equation}\label{equ:G_tilde}
	\left\{\begin{aligned}
		&\rd \tilde{G}(t,r,P,s)\!=\!P\!\left\{((r\!-\!k)\tilde{g}\!+\!\lambda_r(\sigma_{P_1}\tilde{g}\!-\!\sigma_r\tilde{g}_r)\!
		+\!\lambda_S\sigma_{P_2}\tilde{g})\rd t
		\!+\!(\sigma_{P_1}\tilde{g}\!\!-\!\sigma_r\tilde{g}_r)\rd W_r\!+\!\sigma_{P_2}\tilde{g}\rd W_s\right\}
		,\\&\tilde{G}(s,r(t),P(t),s)=P(t)f_0(r(t)).
	\end{aligned} \right.
\end{equation}
The closed form of $\tilde{H}$ is
\begin{equation*}
	\tilde{H}(t,r(t),P(t),T)=P(t)\int_t^T\tilde{g}(t,r(t),s)\rd s.
\end{equation*}
$\tilde{H}$ also satisfies the following BSDE
\begin{equation}\label{equ:h2}
	\begin{cases}
		\rd \tilde{H}(t,r(t),P(t),T)=&-P(t)f_0(r(t))\rd t+(r(t)-k)\tilde{H}(t,r(t),P(t),T)\rd t\\
		&+\int_t^T(\sigma_{P_1}\tilde{g}(t,r(t),s)-\sigma_r\tilde{g}_r(t,r(t),s))\rd s (\lambda_r\rd t+\rd  W_r(t))\\
		&+\sigma_{P_2}\int_t^T\tilde{g}(t,r(t),s)\rd s(\lambda_S\rd t+\rd W_S(t)),
		\\\tilde{H}(T,r(T),P(T),T)=&0.
	\end{cases}
\end{equation}

\section{\bf Proof of Lemma {\ref{lemma:xy}}}\label{alemma:xy}
Solving the following equation:\begin{equation*}
		\frac{\frac{z_1^{1-\gamma}}{1-\gamma}-(-\alpha z_2^2)}{z_1-z_2}=z_1^{-\gamma}=-2\alpha z_2\triangleq k_1,
\end{equation*}
we get $ z_1= (\frac{1-\gamma}{4\alpha\gamma})^{\frac{1}{1+\gamma}}  $, $ z_2=-\frac{1}{2\alpha}(\frac{4\alpha\gamma}{1-\gamma})^{\frac{\gamma}{1+\gamma}}  $ and $ k_1= (\frac{4\alpha\gamma}{1-\gamma})^{\frac{\gamma}{1+\gamma}} $. From the geometric meaning of the formula, we know that  $ k_1 $ is the slope of the common tangent of the graph of $ y=-\alpha x^2(x<0) $ and the graph of $ y=\frac{x^{1-\gamma}}{1-\gamma}(x>0) $.

If $ z_2>-B $, i.e., $k_1<2\alpha B $,  we see from \cite{dong2020optimal} that the concave envelope of $ f(\cdot) $ is \begin{equation*}
	f^c(x)=\left\{ \begin{aligned}
&-\alpha x^2&-B\leqslant x<z_2\\
&k_1(x-z_2)-\alpha z_2^2&z_2\leqslant x\leqslant z_1\\
&\frac{x^{1-\gamma}}{1-\gamma}&z_1<x.
	\end{aligned}\right.
\end{equation*}
In this case, $ f(\cdot) $ and $ f^c(\cdot)$ are shown in  Fig.~\ref{f1}.
\begin{figure}[tbph]
	\centering
	\includegraphics[width=0.45\linewidth]{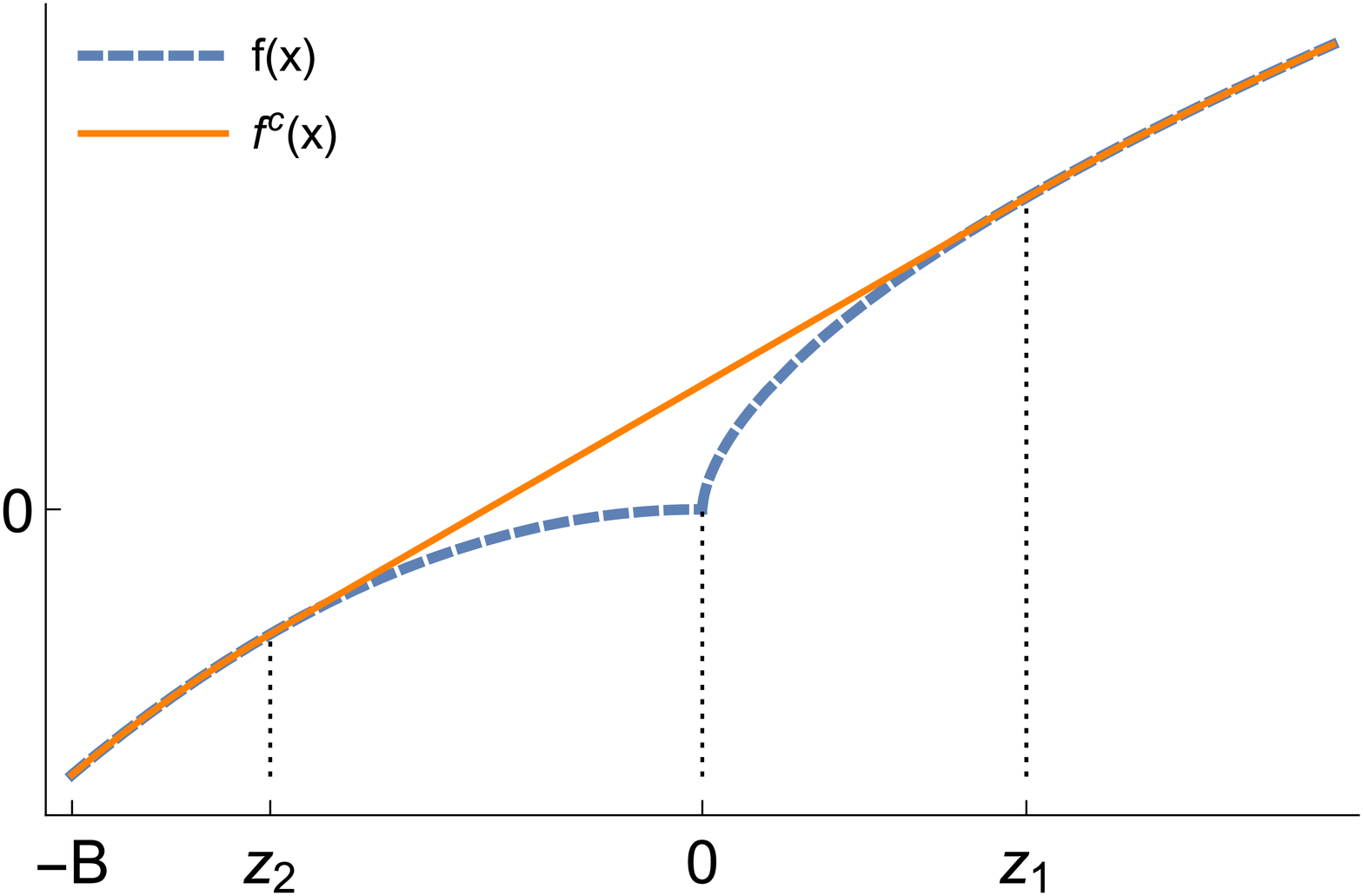}
	\caption{Graph of $ f(\cdot) $ and its concave envelope $f^c(\cdot)$ when $k_1<2\alpha B $.}
	\label{f1}
\end{figure}
 Applying subgradient method to  $f^c(x)-yx$ and noticing  that $ f(x)-yx $ and  $f^c(x)-yx$ coincide when $x\leqslant z_1$ or $x\geqslant z_2$, we find the solution to Problem (\ref{DPY3}) is
\begin{equation*}
	x^*(y)=\left\{\begin{aligned}
		&y^{-\frac{1}{\gamma}},&0<y< k_1,\\
		&k_1^{-\frac{1}{\gamma}} \text{ or } -\frac{k_1}{2\alpha},&y=k_1\\
		&-\frac{y}{2\alpha},&k_1<y\leqslant 2\alpha B,\\
		&-B,&y> 2\alpha B.
	\end{aligned}\right.
\end{equation*}

If $ z_2\leqslant -B $, i.e.,  $k_1\geqslant2\alpha B $, let $ z_0 $ denote the solution to
\begin{equation*}
	\frac{\frac{z^{1-\gamma}}{1-\gamma}-(-\alpha B^2)}{z-(-B)}=z^{-\gamma}
\end{equation*}
and let $ k_2=z_0^{-\gamma} $, then we can also see from \cite{dong2020optimal} that the concave envelope of $ f(\cdot) $ is \begin{equation*}
	f^c(x)=\left\{ \begin{aligned}
		&k_2(x+B)-\alpha B^2&-B\leqslant x\leqslant z_0\\
		&\frac{x^{1-\gamma}}{1-\gamma}&z_0<x.
	\end{aligned}\right.
\end{equation*}
  In this case, $k_2$ is also the slope of the common tangent of the graph of $ y=-\alpha x^2(-B<x<0) $ and the graph of $ y=\frac{x^{1-\gamma}}{1-\gamma}(x>0) $, $ f(\cdot) $ and  $f^c(\cdot)$ are shown in  Fig.~\ref{f2}.
\begin{figure}[tbph]
	\centering
	\includegraphics[width=0.45\linewidth]{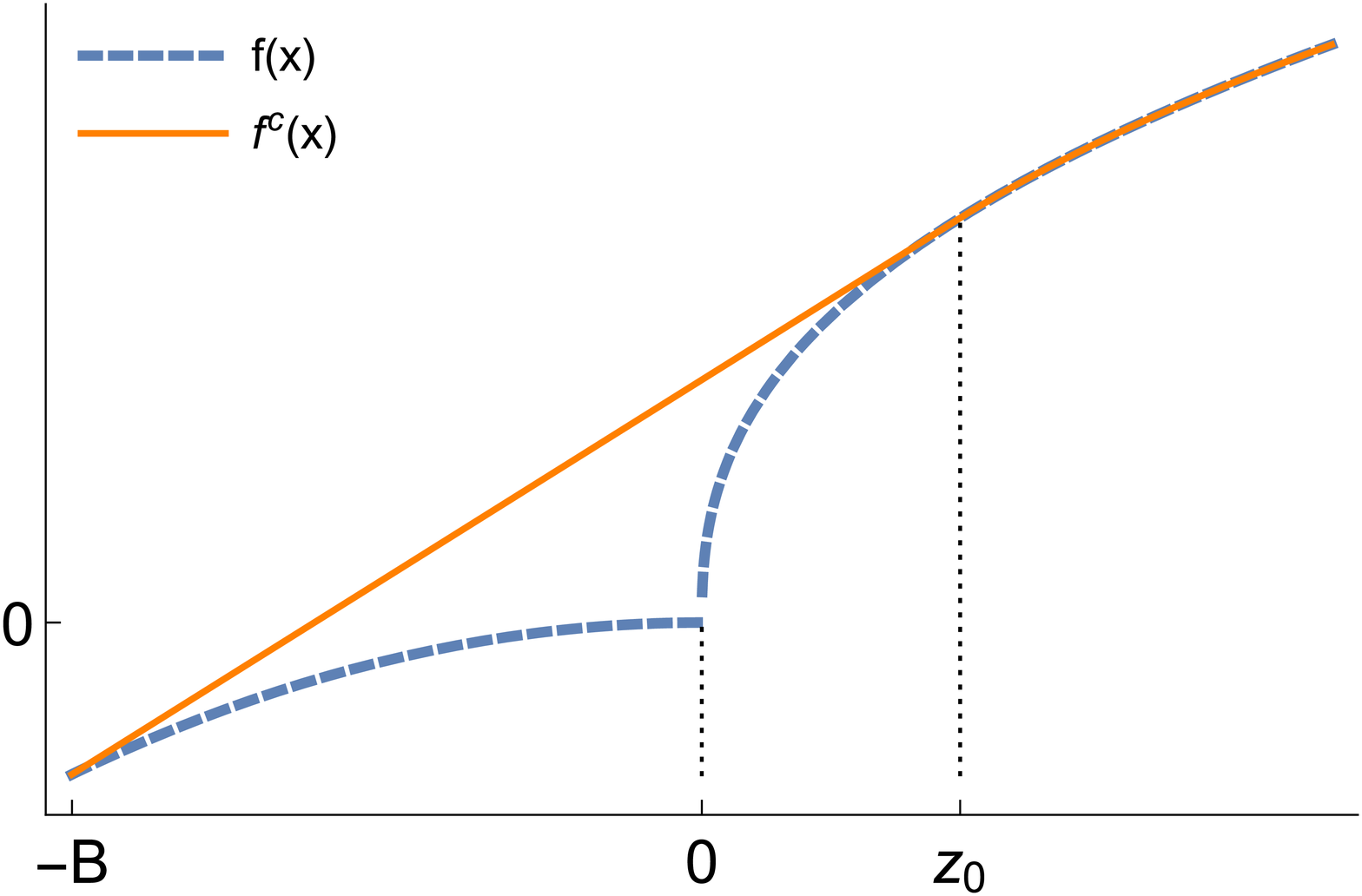}
	\caption{Graph of $ f(\cdot) $ and its concave envelope $f^c(\cdot)$ when $k_1\geqslant2\alpha B $.}
	\label{f2}
\end{figure}
 Applying subgradient method to  $f^c(x)-yx$ and noticing  that $ f(x)-yx $ and  $f^c(x)-yx$ coincide when  $x\geqslant z_0$, we find the solution to Problem (\ref{DPY3}) is
 \begin{equation*}
	x^*(y)=\left\{\begin{aligned}
		&y^{-\frac{1}{\gamma}},&0<y< k_2,\\
		&k_2^{-\frac{1}{\gamma}}\text{ or }-B,&y= k_2\\
		&-B,&y>k_2.
	\end{aligned}\right.
\end{equation*}
\section{\bf Proof of Theorem {\ref{thm:ef}}}\label{athm:ef}	
Based on the closed-form of $X^*(T)$ presented in Theorem \ref{th:y}, we  directly calculate the solvency risk (expected utility) in the underfunded (overfunded) region as follows: \\
If $k_1<2\alpha B $ and $Y_0>-B\exp\left(M+\frac{1}{2}V^2\right)$,  we have
\begin{eqnarray*}	
\mathbb{E}\left[ \frac{X^*(T)^{1-\gamma}}{1-\gamma}1_{\{X^*(T)>0\}}\right]&=&\mathbb{E}\left[ \frac{x^*(\beta^* \rho(T))^{1-\gamma}}{1-\gamma}1_{\{x^*(\beta^* \rho(T))>0\}}\right]\\
&=&\mathbb{E}\left[ \frac{(\beta^* \rho(T))^{-\frac{1-\gamma}{\gamma}}}{1-\gamma}1_{\{0<\beta^* \rho(T)\leqslant k_1\}}\right]\\		&=&\frac{{\beta^*}^{\frac{\gamma-1}{\gamma}}}
{1-\gamma}\exp\left(\frac{\gamma-1}
{\gamma}M+\frac{V^2(\gamma-1)^2}{2\gamma^2}\right)
\Phi\left(\Upsilon(k_1,\beta^*)-V\frac{\gamma-1}{\gamma}\right)
\end{eqnarray*}
and
\begin{eqnarray*}
{\mathbb{E}\left[ {X^*}^2(T)1_{\{{X^*}(T)<0\}}\right]}&=&\mathbb{E}\left[ x^*(\beta^* \rho(T))^21_{\{x^*(\beta^* \rho(T))<0\}}\right]\\
&=&\mathbb{E}\left[ \left(-\frac{1}{2\alpha}\beta^* \rho(T)\right)^21_{\{k_1<\beta^* \rho(T)\leqslant 2\alpha B\}}+(-B)^21_{\{\beta^* \rho(T)> 2\alpha B\}}\right]\\
&=&\frac{{\beta^*}^2}{4\alpha^2}\exp(2(M+V^2))\bigg[\Phi(\Upsilon(2\alpha B,\beta^*)-2V)-\Phi(\Upsilon(k_1,\beta^*)-2V))\\
&&+ B^2(1-\Phi(\Upsilon(2\alpha B,\beta^*))\bigg].
\end{eqnarray*}
If $k_1\geqslant2\alpha B $ and $Y_0>-B\exp\left(M+\frac{1}{2}V^2\right)$,  we have
\begin{eqnarray*}
\mathbb{E}\left[ \frac{X^*(T)^{1-\gamma}}{1-\gamma}1_{\{X^*(T)>0\}}\right]&=&\mathbb{E}\left[ \frac{x^*(\beta^* \rho(T))^{1-\gamma}}{1-\gamma}1_{\{x^*(\beta^* \rho(T))>0\}}\right]\\
&=&\mathbb{E}\left[ \frac{(\beta^* \rho(T))^{-\frac{1-\gamma}{\gamma}}}{1-\gamma}1_{\{0<\beta^* \rho(T)\leqslant k_2\}}\right]\\
&=&\frac{{\beta^*}^{\frac{\gamma\!-\!1}{\gamma}}}{1\!-\!\gamma}\exp\left(\frac{\gamma\!-\!1}{\gamma}M\!+\!
\frac{V^2(\gamma\!-\!1)^2}{2\gamma^2}\right)\Phi\left(\Upsilon(k_2,\beta^*)\!-\!\frac{\gamma\!-\!1}
{\gamma}V\right)
\end{eqnarray*}
and
\begin{eqnarray*}
{\mathbb{E}\left[ {X^*}^2(T)1_{\{{X^*}(T)<0\}}\right]}&=&\mathbb{E}\left[ x^*(\beta^* \rho(T))^21_{\{x^*(\beta^* \rho(T))<0\}}\right]\\
&=&\mathbb{E}\left[ (-B)^21_{\{\beta^* \rho(T)> k_2\}}\right]\\
		&=& B^2(1-\Phi(\Upsilon(k_2,\beta^*))).
\end{eqnarray*}
In the last two cases, the results are straightforward.
	\bibliographystyle{apalike}
	\bibliography{DB}	

\begin{thebibliography}{}

\bibitem[Asthana, 1999]{asthana1999determinants}
Asthana, S. (1999).
\newblock Determinants of funding strategies and actuarial choices for
  defined-benefit pension plans.
\newblock {\em Contemporary Accounting Research}, 16(1):39--74.

\bibitem[Beaudoin et~al., 2010]{beaudoin2010potential}
Beaudoin, C., Chandar, N., and Werner, E.~M. (2010).
\newblock Are potential effects of {SFAS} 158 associated with firms' decisions
  to freeze their defined benefit pension plans?
\newblock {\em Review of Accounting and Finance}.

\bibitem[Berkelaar et~al., 2004]{berkelaar2004optimal}
Berkelaar, A.~B., Kouwenberg, R., and Post, T. (2004).
\newblock Optimal portfolio choice under loss aversion.
\newblock {\em Review of Economics and Statistics}, 86(4):973--987.

\bibitem[Blake et~al., 2013]{blake2013target}
Blake, D., Wright, D., and Zhang, Y. (2013).
\newblock Target-driven investing: Optimal investment strategies in defined
  contribution pension plans under loss aversion.
\newblock {\em Journal of Economic Dynamics and Control}, 37(1):195--209.

\bibitem[Boulier et~al., 2001]{boulier2001optimal}
Boulier, J.-F., Huang, S., and Taillard, G. (2001).
\newblock Optimal management under stochastic interest rates: the case of a
  protected defined contribution pension fund.
\newblock {\em Insurance: Mathematics and Economics}, 28(2):173--189.

\bibitem[Cairns et~al., 2006]{cairns2006stochastic}
Cairns, A.~J., Blake, D., and Dowd, K. (2006).
\newblock Stochastic lifestyling: Optimal dynamic asset allocation for defined
  contribution pension plans.
\newblock {\em Journal of Economic Dynamics and Control}, 30(5):843--877.

\bibitem[Carassus and Pham, 2009]{carassus2009portfolio}
Carassus, L. and Pham, H. (2009).
\newblock Portfolio optimization for piecewise concave criteria functions (the
  8th workshop on stochastic numerics).
\newblock {\em RIMS Kokyuroku}, 1620:81--108.

\bibitem[C{\'a}rdenas et~al., 2014]{cardenas2014my}
C{\'a}rdenas, J.~C., De~Roux, N., Jaramillo, C.~R., and Martinez, L.~R. (2014).
\newblock Is it my money or not? an experiment on risk aversion and the
  house-money effect.
\newblock {\em Experimental Economics}, 17(1):47--60.

\bibitem[Carroll and Niehaus, 1998]{carroll1998pension}
Carroll, T.~J. and Niehaus, G. (1998).
\newblock Pension plan funding and corporate debt ratings.
\newblock {\em Journal of Risk and Insurance}, pages 427--443.

\bibitem[Cox and Huang, 1989]{cox1989optimal}
Cox, J.~C. and Huang, C.-F. (1989).
\newblock Optimal consumption and portfolio policies when asset prices follow a
  diffusion process.
\newblock {\em Journal of Economic Theory}, 49(1):33--83.

\bibitem[Cox et~al., 2013]{cox2013managing}
Cox, S.~H., Lin, Y., Tian, R., and Yu, J. (2013).
\newblock Managing capital market and longevity risks in a defined benefit
  pension plan.
\newblock {\em Journal of Risk and Insurance}, 80(3):585--620.

\bibitem[Dong and Zheng, 2020]{dong2020optimal}
Dong, Y. and Zheng, H. (2020).
\newblock Optimal investment with {S}-shaped utility and trading and {V}alue at
  {R}isk constraints: An application to defined contribution pension plan.
\newblock {\em European Journal of Operational Research}, 281(2):341--356.

\bibitem[Eaton and Nofsinger, 2004]{eaton2004effect}
Eaton, T.~V. and Nofsinger, J.~R. (2004).
\newblock The effect of financial constraints and political pressure on the
  management of public pension plans.
\newblock {\em Journal of Accounting and Public Policy}, 23(3):161--189.

\bibitem[Emms, 2012]{emms2012lifetime}
Emms, P. (2012).
\newblock Lifetime investment and consumption using a defined-contribution
  pension scheme.
\newblock {\em Journal of Economic Dynamics and Control}, 36(9):1303--1321.

\bibitem[Franzoni and Marin, 2006]{franzoni2006pension}
Franzoni, F. and Marin, J.~M. (2006).
\newblock Pension plan funding and stock market efficiency.
\newblock {\em The Journal of Finance}, 61(2):921--956.

\bibitem[Guan and Liang, 2016]{guan2016optimal}
Guan, G. and Liang, Z. (2016).
\newblock Optimal management of {DC} pension plan under loss aversion and
  {V}alue-at-{R}isk constraints.
\newblock {\em Insurance: Mathematics and Economics}, 69:224--237.

\bibitem[Hainaut and Deelstra, 2011]{hainaut2011optimal}
Hainaut, D. and Deelstra, G. (2011).
\newblock Optimal funding of defined benefit pension plans.
\newblock {\em Journal of Pension Economics and Finance}, 10(1):31--52.

\bibitem[Hull and White, 1990]{hull1990pricing}
Hull, J. and White, A. (1990).
\newblock Pricing interest-rate-derivative securities.
\newblock {\em The Review of Financial Studies}, 3(4):573--592.

\bibitem[Josa-Fombellida et~al., 2018]{JJ18}
Josa-Fombellida, R., L{\'o}pez-Casado, P., and Rinc{\'o}n-Zapatero, J.~P.
  (2018).
\newblock Portfolio optimization in a defined benefit pension plan where the
  risky assets are processes with constant elasticity of variance.
\newblock {\em Insurance: Mathematics and Economics}, 82:73--86.

\bibitem[Josa-Fombellida and Rinc{\'o}n-Zapatero, 2006]{josa2006optimal}
Josa-Fombellida, R. and Rinc{\'o}n-Zapatero, J.~P. (2006).
\newblock Optimal investment decisions with a liability: The case of defined
  benefit pension plans.
\newblock {\em Insurance: Mathematics and Economics}, 39(1):81--98.

\bibitem[Josa-Fombellida and Rinc{\'o}n-Zapatero, 2010]{JR10}
Josa-Fombellida, R. and Rinc{\'o}n-Zapatero, J.~P. (2010).
\newblock Optimal asset allocation for aggregated defined benefit pension funds
  with stochastic interest rates.
\newblock {\em European Journal of Operational Research}, 201(1):211--221.

\bibitem[Josa-Fombellida and Rinc{\'o}n-Zapatero, 2012]{josa2012stochastic}
Josa-Fombellida, R. and Rinc{\'o}n-Zapatero, J.~P. (2012).
\newblock Stochastic pension funding when the benefit and the risky asset
  follow jump diffusion processes.
\newblock {\em European Journal of Operational Research}, 220(2):404--413.

\bibitem[Kahneman and Tversky, 1979]{kahneman1979prospect}
Kahneman, D. and Tversky, A. (1979).
\newblock Prospect theory: An analysis of decision under risk.
\newblock {\em Econometrica: Journal of the Econometric Society}.

\bibitem[Kapinos, 2009]{kapinos2009determinants}
Kapinos, K.~A. (2009).
\newblock On the determinants of defined benefit pension plan conversions.
\newblock {\em Journal of Labor Research}, 30(2):149--167.

\bibitem[Li et~al., 2021]{li2021alpha}
Li, D., Bi, J., and Hu, M. (2021).
\newblock Alpha-robust mean-variance investment strategy for {DC} pension plan
  with uncertainty about jump-diffusion risk.
\newblock {\em RAIRO-Operations Research}, 55:S2983--S2997.

\bibitem[March, 1996]{march1996learning}
March, J.~G. (1996).
\newblock Learning to be risk averse.
\newblock {\em Psychological review}, 103(2):309.

\bibitem[Siegmann, 2007]{siegmann2007optimal}
Siegmann, A. (2007).
\newblock Optimal investment policies for defined benefit pension funds.
\newblock {\em Journal of Pension Economics and Finance}, 6(1):1--20.

\bibitem[Stone, 1987]{stone1987financing}
Stone, M. (1987).
\newblock A financing explanation for overfunded pension plan terminations.
\newblock {\em Journal of Accounting Research}, pages 317--326.

\bibitem[Sundaresan and Zapatero, 1997]{sundaresan1997valuation}
Sundaresan, S. and Zapatero, F. (1997).
\newblock Valuation, optimal asset allocation and retirement incentives of
  pension plans.
\newblock {\em The Review of Financial Studies}, 10(3):631--660.

\bibitem[Temocin et~al., 2018]{temocin2018constant}
Temocin, B.~Z., Korn, R., and Selcuk-Kestel, A.~S. (2018).
\newblock Constant proportion portfolio insurance in defined contribution
  pension plan management.
\newblock {\em Annals of Operations Research}, 266(1):329--348.

\bibitem[Thomas, 1989]{thomas1989firms}
Thomas, J.~K. (1989).
\newblock Why do firms terminate their overfunded pension plans?
\newblock {\em Journal of Accounting and Economics}, 11(4):361--398.

\bibitem[Tversky and Kahneman, 1991]{tversky1991loss}
Tversky, A. and Kahneman, D. (1991).
\newblock Loss aversion in riskless choice: A reference-dependent model.
\newblock {\em The Quarterly Journal of Economics}, 106(4):1039--1061.

\bibitem[Tversky and Kahneman, 1992]{tversky1992advances}
Tversky, A. and Kahneman, D. (1992).
\newblock Advances in prospect theory: Cumulative representation of
  uncertainty.
\newblock {\em Journal of Risk and Uncertainty}, 5(4):297--323.

\bibitem[Zeng et~al., 2018]{zeng2018ambiguity}
Zeng, Y., Li, D., Chen, Z., and Yang, Z. (2018).
\newblock Ambiguity aversion and optimal derivative-based pension investment
  with stochastic income and volatility.
\newblock {\em Journal of Economic Dynamics and Control}, 88:70--103.

\end{thebibliography}
\end{document}